\documentclass{article}
\usepackage{geometry,url, graphicx} 
\usepackage{amsmath,amssymb,amsthm,enumerate,color}
\usepackage{epstopdf}
\usepackage{newcent}
\DeclareGraphicsRule{.tif}{png}{.png}{`convert #1 `dirname #1`/`basename #1 .tif`.png}

\setlength{\textwidth}{6.5in} \hoffset -0.5truecm
\setlength{\textheight}{9.5in} \voffset -1.5truecm

\def\d{{\rm d}}
\def\e{{\rm e}}
\def\i{{\rm i}}
\def\ds{\displaystyle}
\def\PI{\mbox{\rm P$_{\rm I}$}}
\def\PII{\mbox{\rm P$_{\rm II}$}}
\def\PIII{\mbox{\rm P$_{\rm III}$}}

\def\PIIIp{\mbox{\rm P$_{\rm III'}$}}

\def\PV{\mbox{\rm P$_{\rm V}$}}
\def\PVI{\mbox{\rm P$_{\rm VI}$}}

\def\sPII{\mbox{\rm S$_{\rm II}$}}
\def\sPIII{\mbox{\rm S$_{\rm III}$}}

\def\HII{\mbox{$\mathcal{H}_{\rm II}$}}
\def\HIII{\mbox{$\mathcal{H}_{\rm III}$}}

\def\a{\alpha}
\def\b{\beta}
\def\ga{\gamma}
\def\de{\delta}
\def\ep{\varepsilon}
\def\ph{\varphi}
\def\k{\kappa}
\def\w{\omega}

\def\p{Painlev\'e}

\def\odes{ordinary differential equations}
\def\peq{\p\ equation}
\def\peqs{\p\ equations}
\def\bk{B\"ack\-lund}
\def\bt{B\"ack\-lund transformation}
\def\bts{B\"ack\-lund transformations}
\def\Z{\mathbb{Z}}
\def\N{\mathbb{N}}

\def\hide#1{}
\def\ifrac#1#2{{#1/#2}}

\newcommand{\HyperpFq}[2]{{}_{#1}F_{#2}}

\newcommand{\BesselI}[1]{I_{#1}}

\newcommand{\Integer}{\mathbb{Z}}
\def\Z{\Integer}
\def\luk{Lukashevich}
\newcommand{\deriv}[3][]{\frac{\d^{#1}{#2}}{{\d{#3}}^{#1}}}
\newcommand{\pderiv}[3][]{\frac{\partial^{#1}{#2}}{{\partial{#3}}^{#1}}}

\def\wz{\deriv{w}{z}}
\def\wzz{\deriv[2]{w}{z}}
\def\O{\mathcal{O}}
\def\W{\mathcal{W}}
\newcommand{\Wr}{\operatorname{Wr}}

\newtheorem{theorem}{Theorem}[section]
\newtheorem{defn}[theorem]{Definition}
\newtheorem{corollary}[theorem]{Corollary}

\newtheorem{lemma}[theorem]{Lemma}
\newtheorem{remark}[theorem]{Remark}
\newtheorem{remarks}[theorem]{Remarks}
\numberwithin{figure}{section}
\numberwithin{equation}{section}
\numberwithin{table}{section}

\def\U{{U}}
\newcommand{\comment}[1]{}
\def\beq{\begin{equation}}
\def\eeq{\end{equation}}

\def\figs#1{}

\looseness=-1
\begin{document}

\title{Classical Solutions of the Degenerate\\ Fifth \p\ Equation}

\author{Peter A. Clarkson\\[2.5pt]
School of Mathematics, Statistics and Actuarial Science,\\ University of Kent, Canterbury, CT2 7FS, UK\\
Email: {\tt P.A.Clarkson@kent.ac.uk}}

\maketitle

\begin{abstract}In this paper classical solutions of the degenerate fifth Painlev\'e equation are classified, which include hierarchies of algebraic solutions and solutions expressible in terms of Bessel functions. Solutions of the degenerate fifth Painlev\'e equation are known to be expressible in terms of solutions of the third Painlev\'e equation. The classification and description of the classical solutions of the degenerate fifth Painlev\'e equation is done using the Hamiltonian associated with third Painlev\'e equation. Two applications of these classical solutions are discussed, deriving exact solutions of the complex sine-Gordon equation and of the coefficients in the three-term recurrence relation associated with generalised Charlier polynomials.
\end{abstract}

\section{Introduction}

In this paper we are concerned with solutions of the equation
\beq\wzz= \left(\frac{1}{2w} + \frac{1}{w-1}\right)\left(\wz\right)^{\!2} -
\frac{1}{z}\wz + \frac{(w-1)^2(\a w^2+\b)}{z^2w} + \frac{\ga w}{z},
\label{eq:degPV}\eeq
with $\a$, $\b$ and $\ga$ constants.
Equation \eqref{eq:degPV} is the special case of the fifth \p\ equation (\PV) 
\beq\wzz= \left(\frac{1}{2w} + \frac{1}{w-1}\right)\left(\wz\right)^{\!2} -
\frac{1}{z}\wz + \frac{(w-1)^2(\a w^2+\b)}{z^2w} + \frac{\ga w}{z} + \frac{\de w(w+1)}{w-1}.
\label{eq:PVgen}\eeq
with $\a$, $\b$, $\ga$ and $\de$ constants, when $\de=0$ and is known as the \textit{degenerate fifth \peq}\ (deg-\PV), cf.~\cite{refOO06}.

The six \peqs\ (\PI--\PVI), were discovered by \p, Gambier and their colleagues whilst studying second order ordinary differential equations of the form
\beq \label{gen-ode}
\deriv[2]{w}{z}=F\left(z,w,\deriv{w}{z}\right), 
\eeq 
where $F$ is rational in $\d w/\d z$ and $w$ and analytic in $z$. The {\p\ functions} can be thought of as nonlinear analogues of the classical special functions. 
\comment{Indeed Iwasaki, Kimura, Shimomura and Yoshida \cite{refIKSY} characterize the six \peqs\ as ``the most important nonlinear ordinary differential equations" and state that ``many specialists believe that during the twenty-first century the \p\ functions will become new members of the community of special functions". Subsequently this has happened as the \p\ equations are a chapter in the NIST \textit{Digital Library of Mathematical Functions}\ \cite[\S32]{refDLMF}.}%
The general solutions of the \peqs\ are transcendental in the sense that they cannot be expressed in terms of known elementary functions and so require the introduction of a new transcendental function to describe their solution. However, it is well known that \PII--\PVI\ possess rational solutions, algebraic solutions and solutions expressed in terms of the classical special functions --- Airy, Bessel, parabolic cylinder, Kummer and hypergeometric functions, respectively --- for special values of the parameters, see, e.g.\ \cite{
refPACreview,refGLS}
and the references therein. These hierarchies are usually generated from ``seed solutions'' using the associated \bk\ transformations and frequently can be expressed in the form of determinants. These solutions of the \p\ equations are often called ``classical solutions", cf.~\cite{refUmemura98,refUmemura01}.

It is well known that solutions of deg-\PV\ \eqref{eq:degPV} are related to solutions of the third \p\ equation 
\beq\label{eq:PIIIgen}
\deriv[2]{q}{x} = \frac1q \left(\deriv{q}{x}\right)^{\!2}- \frac{1}{x} \deriv{q}{x}+\frac{A q^2+B}{x} + C q^3 + \frac{D}{q},
\eeq
with $A$, $B$, $C$ and $D$ constants, a result originally due to Gromak \cite{refGromak75}; see also \cite[\S34]{refGLS}. The relationship between solutions of deg-\PV\ and the third \peq\ is given in Lemma \ref{lem21} below.
The objective of this paper is to give a classification and description of the classical solutions of deg-\PV\ \eqref{eq:degPV} using the associated Hamiltonian formalism, rather than through solutions of the third \peq\   \eqref{eq:PIIIgen}.

In \S\ref{sec2}, the relationship between deg-\PV\ \eqref{eq:degPV} and the third \peq\ \eqref{eq:PIIIgen} is discussed using the associated Hamitonian. 
In \S\ref{sec:p3sol}, classical solutions of the third \peq\ \eqref{eq:PIIIgen} are reviewed, the rational solutions in \S\ref{sec:p3sol1} and the Bessel function solutions in \S\ref{sec:p3sol2}. 
In \S\ref{sec:bts}, \bts\ of deg-\PV\ \eqref{eq:degPV} are given, which can be used to derive a hierarchy of solutions from a ``seed solution". 
In \S\ref {sec:dpvsol}, classical solutions of deg-\PV\ \eqref{eq:degPV} are classified, the algebraic solutions in \S\ref{sec:dpvsol1} and the Bessel function solutions in \S\ref{sec:dpvsol2}. 
In \S\ref{sec:app}, two applications of classical solutions of deg-\PV\ \eqref{eq:degPV} are given to derive exact solutions of the complex sine-Gordon equation, which is equivalent to the Pohlmeyer-Lund-Regge model,  and to derive explicit representations of the coefficients in the three-term recurrence relation satisfied by generalised Charlier polynomials, which are discrete orthogonal {polynomials}. 

\section{\label{sec2}The relationship between deg-\PV\ and \PIII}
In the generic case when $CD\not=0$ in the third \p\ equation 
\eqref{eq:PIIIgen},  we set $C=1$ and $D=-1$, without loss of generality (by rescaling the variables if necessary), and so consider the equation 
\beq \label{eq:PIII}
\deriv[2]{q}{x} = \frac1q \left(\deriv{q}{x}\right)^{\!2}- \frac{1}{x} \deriv{q}{x}
 +\frac{A q^2+B}{x} + q^3 - \frac{1}{q}.\eeq 
In the sequel, we shall refer to this equation as \PIII\ since it is the generic case.

Consider the Hamiltonian associated with \PIII\ \eqref{eq:PIII} given by 
\beq \label{eq:PT.HM3p.DE30}
{\HIII}(q,p,x;a,b,\ep) = q^2p^2-xq^2p-(2a+2b+1)qp+\ep xp +2bxq,
\eeq 
with $a$ and $b$ parameters and $\ep=\pm1$, see \cite{refJM81,refOkamotoPIII}. 
Then $p(x)$ and $q(x)$ satisfy the Hamiltonian system
\begin{subequations}\label{eq:PT.HM3.DE32}\begin{align}
x\deriv{q}{x}&=\pderiv{\HIII}{p}\phantom{-}=2q^2p-xq^2-(2a+2b+1)q+\ep x, \label{eq:PT.HM3.DE32a} \\
x\deriv{p}{x}&=-\pderiv{\HIII}{q}=-2qp^2+2xqp + (2a+2b+1)p -2bx. \label{eq:PT.HM3.DE32b}
\end{align}\end{subequations}
Solving \eqref{eq:PT.HM3.DE32a} for $p(x)$ gives
\[ p(x)={\frac{1}{2q^2}}\left\{x\deriv{q}{x}+xq^2+(2a+2b+1)q-\ep x\right\}, \]
and then substituting this in \eqref{eq:PT.HM3.DE32b} gives
\beq\deriv[2]{q}{x} = \frac1q \left(\deriv{q}{x}\right)^{\!2}- \frac{1}{x} \deriv{q}{x}
 +\frac{2(a-b)q^2}{x} + \frac{2\ep(a+b+1)}{x} + q^3 - \frac{1}{q}.\label{eq:p31}\eeq
which is \PIII\ \eqref{eq:PIII}, 
with parameters \beq\label{p3:params}A=2(a-b),\qquad B=2\ep(a+b+1).\eeq
Solving {\eqref{eq:PT.HM3.DE32b}} for $q(x)$ gives
\[ q(x)=\frac{1}{2p(x-p)}\left\{x\deriv{p}{x}-(2a+2b+1)p+2bx \right\},\]
and then substituting this in \eqref{eq:PT.HM3.DE32a} gives
\beq\deriv[2]{p}{x} =
\frac12\left(\frac{1}{p}+\frac{1}{p-x}\right) \left(\deriv{p}{x}\right)^2-{\frac {p}{x(p-x)}}\deriv{p}{x}
+2\ep p-\frac {2b^2}{p} -\frac {4a^2-1}{2(p-x)}+\frac {1-4(a^2-b^2)-4\ep p^2}{2x}.\label{eq:p31p}\eeq
Then making the transformation 
\beq\label{tr:pw} p(x)=\frac{2\sqrt{z}\,w(z)}{w(z)-1},\qquad x=2\sqrt{z},\eeq
\comment{\beq p(x)=\frac{xu(x)}{u(x)-1},\label{tr:pw1} \eeq
gives
\beq\deriv[2]{u}{x}= \left(\frac{1}{2u} + \frac{1}{u-1}\right)\left(\deriv{u}{x}\right)^2 - \frac{1}{x}\,\deriv{u}{x} + 
\frac{2(u-1)^2(a^2 u^2 -b^2)}{x^2u}+{2\ep u},\eeq
and finally letting
\beq u(x)=w(z)\qquad x=\sqrt{2z},\label{tr:pw2} \eeq}%
in \eqref{eq:p31p} gives
\beq\deriv[2]{w}{z}= \left(\frac{1}{2w} + \frac{1}{w-1}\right)\left(\deriv{w}{z}\right)^2 - \frac{1}{z}\,\deriv{w}{z} + 
\frac{(w-1)^2(a^2 w^2 -b^2)}{2z^2w}+\frac{\ep w}{z},\label{eq:degPV1}\eeq
which is deg-\PV\ \eqref{eq:degPV}
with parameters \beq\a=\tfrac12a^2,\qquad \b=-\tfrac12b^2,\qquad \ga=\ep.\label{params:degPV1}\eeq
Hence we have 
the following result; see also \cite[Theorem 34.2]{refGLS}.

\begin{lemma}{If $q(x)$ is a solution of \eqref{eq:p31} then
\beq\label{tr:qw} w(z)=\frac{xq'(x)+xq^2(x)+(2a+2b+1)q(x)-\ep x}{xq'(x)-xq^2(x)+(2a+2b+1)q(x)-\ep x},\qquad z=\tfrac12x^2,\eeq
with $'\equiv \d/\d x$, is a solution of \eqref{eq:degPV1}, provided that 
\[ x\deriv{q}{x}-xq^2+(2a+2b+1)q-\ep x\not=0.\]
Conversely, if $w(z)$ is a solution of \eqref{eq:degPV1}, then
\beq\label{tr:wq} q(x)=\frac{1}{2\sqrt{z}\,w}\left\{z\deriv{w}{z} +(w-1)(aw+b)\right\},\qquad x=\sqrt{2z},\eeq
 is a solution of \eqref{eq:p31}.}\end{lemma}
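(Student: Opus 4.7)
The plan is to leverage the two inverted forms of the Hamiltonian system \eqref{eq:PT.HM3.DE32} already displayed in the excerpt: solving \eqref{eq:PT.HM3.DE32a} gives $p$ as a rational function of $q,q',x$, and solving \eqref{eq:PT.HM3.DE32b} gives $q$ as a rational function of $p,p',x$. Combined with the birational substitution $p=xw/(w-1)$, $x=2\sqrt{z}$ from \eqref{tr:pw}---whose compatibility with the underlying equations has already been verified in the derivation of \eqref{eq:degPV1} from \eqref{eq:p31p}---each direction of the lemma becomes a purely algebraic exercise.

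For the forward direction, I substitute
\[ p=\frac{1}{2q^2}\bigl\{xq'+xq^2+(2a+2b+1)q-\ep x\bigr\}\]
into the inversion $w=p/(p-x)$ of \eqref{tr:pw}. Clearing the common factor $1/(2q^2)$ from the numerator and denominator of $w$ yields \eqref{tr:qw} at once. The hypothesis that $xq'-xq^2+(2a+2b+1)q-\ep x$ is nonzero is exactly what ensures $p\ne x$, and hence that $w$ is finite; the fact that the resulting $w$ solves \eqref{eq:degPV1} is already part of the chain of substitutions worked out in~\S\ref{sec2}.

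For the converse, given $w(z)$ satisfying \eqref{eq:degPV1}, I set $p(x)=2\sqrt{z}\,w/(w-1)$; by the same compatibility this $p$ satisfies \eqref{eq:p31p}, so substituting it into
\[ q=\frac{1}{2p(x-p)}\bigl\{xp'-(2a+2b+1)p+2bx\bigr\}\]
and expanding $p'=(\d p/\d z)(\d z/\d x)$ via the chain rule should produce a solution of \eqref{eq:p31}. The denominator collapses using $x-p=-x/(w-1)$, while the numerator can be rearranged via the identity $-aw^2+(a-b)w+b=-(aw+b)(w-1)$ into a multiple of $(w-1)(aw+b)+zw_z$. Collecting constants then gives \eqref{tr:wq}.

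The only real obstacle is bookkeeping---tracking the chain-rule factor $\d z/\d x$ through the converse direction and verifying the polynomial factorisation so that the clean combination $(w-1)(aw+b)+zw_z$ emerges in \eqref{tr:wq}. Nothing genuinely new has to be done at the level of differential equations, because the direct and inverse Hamiltonian reductions of \eqref{eq:PT.HM3.DE32}, together with the birationality of the substitution \eqref{tr:pw}, have already been carried out in the preparatory material.
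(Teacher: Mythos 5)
Your proposal is correct and follows exactly the paper's own (very terse) proof: solve \eqref{eq:PT.HM3.DE32a} for $p$ and substitute into the inversion $w=p/(p-x)$ of \eqref{tr:pw} to get \eqref{tr:qw}, and solve \eqref{eq:PT.HM3.DE32b} for $q$ with $p=xw/(w-1)$ to get \eqref{tr:wq}, with the non-vanishing hypothesis correctly identified as the condition $p\neq x$. The algebraic details you supply (the collapse of $x-p$ to $-x/(w-1)$ and the factorisation of the numerator into $(w-1)(aw+b)+zw_z$) check out, so this is simply a fleshed-out version of the argument the paper sketches.
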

\begin{proof} Solving \eqref{eq:PT.HM3.DE32a} for $p(x)$, substituting in \eqref{tr:pw} and solving for $w(z)$ gives \eqref{tr:qw}.
Also solving \eqref{eq:PT.HM3.DE32b} for $q(x)$ and substituting \eqref{tr:pw} into the resulting expression gives \eqref{tr:wq}.
\end{proof}

An alternative method of deriving solutions of \eqref{eq:degPV1} involves the second-order, second-degree equation satisfied associated with the 
Hamiltonian system \eqref{eq:PT.HM3.DE32}, due to Jimbo and Miwa \cite{refJM81} and Okamoto \cite{refOkamotoPIII}, which is often called the ``$\sigma$-equation".
\begin{theorem}{\label{thm:22}If  ${\HIII}(q,p,x;a,b,\ep)$ is given by \eqref{eq:PT.HM3p.DE30}, then
\beq \label{ham:PIII}
\sigma(x;a,b,\ep)=\HIII(q,p,x;a,b,\ep)+qp-\tfrac12\ep x^2+(a+b)^2, 
\eeq 
where $q(x)$ and $p(x)$ satisfy the system \eqref{eq:PT.HM3.DE32},
satisfies the second-order, second-degree equation $(\sPIII)$
\begin{align}\label{eq:PT.HM3.DE34}
\left(x\deriv[2]{\sigma}{x} -\deriv{\sigma}{x}\right)^{\!2} + 2\left\{\left(\deriv{\sigma}{x}\right)^{\!2}-x^2\right\}\left(x\deriv{\sigma}{x}-2\sigma\right) &
-8\ep(a^2-b^2) x\deriv{\sigma}{x}=8(a^2+b^2)x^2.
\end{align} 
Conversely, if $\sigma(x;a,b,\ep)$ satisfies \eqref{eq:PT.HM3.DE34} then the solution of the Hamiltonian system 
\eqref{eq:PT.HM3.DE32} is given by
\begin{subequations}\begin{align} q(x)&=\frac{\ds \ep x\sigma''(x)-\ep(2a+2b+1)\sigma'(x)-2(a-b) x}{\ds x^2-\left[\sigma'(x)\right]^{2}},\\
p(x)&=\tfrac12\ep\sigma'(x)+\tfrac12x.  \label{tr:sigma-pq}\end{align}\end{subequations}
}\end{theorem}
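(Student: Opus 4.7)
The plan is to derive the $\sigma$-equation by using the Hamiltonian system \eqref{eq:PT.HM3.DE32} to express $p$ and $q$ in terms of $\sigma$, $\sigma'$, $\sigma''$, and then to eliminate $p$ and $q$ from the defining identity \eqref{ham:PIII}. First I would differentiate \eqref{ham:PIII} along the flow. Because $q'\partial_q\HIII+p'\partial_p\HIII=0$ along Hamiltonian trajectories, one has $\tfrac{\d}{\d x}\HIII = \partial_x\HIII = -q^2p+\ep p+2bq$; likewise $x(q'p+qp') = p\,\partial_p\HIII-q\,\partial_q\HIII = x(q^2p+\ep p-2bq)$. Combining these in \eqref{ham:PIII} produces the clean relation $\sigma'(x)=2\ep p-\ep x$, whence $p=\tfrac12(\ep\sigma'+x)$, which is \eqref{tr:sigma-pq}. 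Differentiating once more gives $\sigma''=2\ep p'-\ep$; substituting $xp'=-\partial_q\HIII$ from \eqref{eq:PT.HM3.DE32b} and using $4p(p-x)=(\sigma')^2-x^2$ yields a linear equation for $q$ whose solution is exactly the expression for $q$ claimed in the theorem.

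To obtain \eqref{eq:PT.HM3.DE34} itself, I would substitute the formulas for $p$ and $q$ back into \eqref{ham:PIII} and multiply through by $(\sigma')^2-x^2$. The quadratic-in-$q$ piece of $\HIII$ produces, via $q[(\sigma')^2-x^2]=-[\ep x\sigma''-\ep(2a+2b+1)\sigma'-2(a-b)x]$, the $(x\sigma''-\sigma')^2$ term; the linear-in-$q$ pieces combine to give the coefficient $-8\ep(a^2-b^2)x\sigma'$; and the constants regroup via the identities $(2a+2b+1)(2a+2b-1)=4(a+b)^2-1$ and $2(a^2+b^2)-(a-b)^2=(a+b)^2$ to deliver the right-hand side $8(a^2+b^2)x^2$. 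This algebraic bookkeeping is the main obstacle: nothing is conceptually hard, but tracking signs, the $\ep=\pm 1$ factors, and the $(a\pm b)$ groupings requires care.

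For the converse, starting from $\sigma$ satisfying \eqref{eq:PT.HM3.DE34} and defining $p,q$ by the stated formulas, the relation $\sigma'=2\ep p-\ep x$ holds by construction, and a direct substitution shows that the $q$ formula is algebraically equivalent to the Hamilton equation $xp'=-\partial_q\HIII$. The second Hamilton equation $xq'=\partial_p\HIII$ and the defining identity \eqref{ham:PIII} then follow by running the forward calculation backwards, using the sigma-equation itself (or its first derivative) at the step that, in the forward direction, introduced the constraint on $\sigma$.
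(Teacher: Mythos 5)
Your proposal is correct, but note that the paper does not actually prove Theorem \ref{thm:22}: its ``proof'' is a citation to Jimbo--Miwa and Okamoto. What you have written is essentially the standard derivation from those references, and the key identities you state all check out: along the Hamiltonian flow $\tfrac{\d}{\d x}\HIII=\partial_x\HIII=-q^2p+\ep p+2bq$ and $x(qp)'=p\,\partial_p\HIII-q\,\partial_q\HIII=x(q^2p+\ep p-2bq)$, so indeed $\sigma'=2\ep p-\ep x$; then $4p(p-x)=(\sigma')^2-x^2$ and the $p$-equation give $q\left[(\sigma')^2-x^2\right]=-\left[\ep x\sigma''-\ep(2a+2b+1)\sigma'-2(a-b)x\right]$ exactly as claimed. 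Substituting back into \eqref{ham:PIII} and writing $N=\ep M-P$ with $M=x\sigma''-\sigma'$ and $P=2(a+b)\ep\sigma'+2(a-b)x$, the cross terms cancel to leave $M^2-P^2$ on the left, and the right-hand side assembles to $4(a+b)^2x^2+4(a-b)^2x^2=8(a^2+b^2)x^2$, which is \eqref{eq:PT.HM3.DE34}; so the ``algebraic bookkeeping'' you flag as the main obstacle does close up. The thinnest part of your sketch is the converse: defining $p,q$ by \eqref{tr:sigma-pq} makes $xp'=-\partial_q\HIII$ an algebraic identity, but recovering $xq'=\partial_p\HIII$ genuinely requires differentiating the $q$-formula and invoking \eqref{eq:PT.HM3.DE34}, and one should note the nondegeneracy assumption $x^2-\left[\sigma'(x)\right]^2\not\equiv0$ needed for the $q$-formula to make sense. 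With that caveat, your argument is a complete and correct substitute for the proof the paper omits.
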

\begin{proof}See Jimbo and Miwa \cite{refJM81} and Okamoto \cite{refOkamotoPIII}.\end{proof}


Consequently solutions of deg-\PV\ \eqref{eq:degPV1} can be expressed in terms of solutions of $\sPIII$ \eqref{eq:PT.HM3.DE34}. 
\begin{corollary}{\label{coro29}If $\sigma(x;a,b,\ep)$ is a solution of $\sPIII$ \eqref{eq:PT.HM3.DE34}, then
\beq w(z;a,b,\ep)=\frac{\sigma'(x;a,b,\ep)+\ep x}{\sigma'(x;a,b,\ep)-\ep x}, \qquad z=\tfrac12x^2,\eeq
is a solution of \eqref{eq:degPV1}.
}\end{corollary}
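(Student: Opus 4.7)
The plan is to obtain the formula for $w(z;a,b,\ep)$ by composing two results already at hand: the expression for the Hamiltonian variable $p(x)$ in terms of $\sigma'(x)$ coming from Theorem~\ref{thm:22}, and the transformation \eqref{tr:pw} that was used in \S\ref{sec2} to pass from the Hamiltonian system \eqref{eq:PT.HM3.DE32} to deg-\PV\ \eqref{eq:degPV1}. Concretely, I would first invert \eqref{tr:pw}, which (after identifying the $x$ appearing there with $\sqrt{2z}$, equivalently $z=\tfrac12 x^2$) asserts $p=xw/(w-1)$, to obtain the algebraic inverse
\[ w=\frac{p}{p-x}. \]
Next I would substitute the formula $p(x)=\tfrac12\ep\sigma'(x)+\tfrac12 x$ supplied by \eqref{tr:sigma-pq} into this inverse, which after cancelling the factor of $\tfrac12$ yields $w=(\ep\sigma'+x)/(\ep\sigma'-x)$; multiplying numerator and denominator by $\ep$ and using $\ep^2=1$ then delivers the stated expression $w=(\sigma'+\ep x)/(\sigma'-\ep x)$.

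To check that the $w$ so produced is a genuine solution of deg-\PV\ \eqref{eq:degPV1}, I would invoke the converse half of Theorem~\ref{thm:22}: any $\sigma$ satisfying \sPIII\ \eqref{eq:PT.HM3.DE34} gives, via the formulas in \eqref{tr:sigma-pq}, a pair $(q,p)$ solving the Hamiltonian system \eqref{eq:PT.HM3.DE32}. The derivation recalled in \S\ref{sec2} shows that such a $p$, pushed through the transformation \eqref{tr:pw}, produces a solution $w(z)$ of \eqref{eq:degPV1}. Hence no further differential verification is needed: the corollary is merely the composition of transformations already justified. The only mildly delicate point is bookkeeping the substitution $z=\tfrac12 x^2$ and the $\pm 1$ parameter $\ep$, but neither of these poses a real obstacle, so the main step is essentially a one-line algebraic substitution.
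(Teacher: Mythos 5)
Your argument is exactly the paper's (one-line) proof, made explicit: the paper also obtains the formula by combining the transformation \eqref{tr:pw} with the expression $p=\tfrac12\ep\sigma'+\tfrac12x$ from Theorem~\ref{thm:22}, the differential verification being already contained in the derivation of \eqref{eq:degPV1} from the Hamiltonian system in \S\ref{sec2}. The only caveat is the bookkeeping you flag yourself: \eqref{tr:pw} as printed has $x=2\sqrt{z}$ while the corollary states $z=\tfrac12x^2$, an inconsistency internal to the paper that does not affect the final formula $w=(\sigma'+\ep x)/(\sigma'-\ep x)$.
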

\begin{proof}This immediately follows from \eqref{tr:pw} and  Theorem \ref{thm:22}.\end{proof}

\begin{remark}{\rm From Lemma \ref{lem21} and Corollary \ref{coro29}, it's clear that it's simpler to derive solutions of deg-\PV\ \eqref{eq:degPV1} from
equation \eqref{PV0-SIII} rather than equation \eqref{tr:qw}. Further as shown in \S\ref{sec:p3sol}, classical solutions of  $\sPIII$ 
involve one determinant, whereas classical solutions of \PIII\ involve two determinants.
}\end{remark}

\section{\label{sec:p3sol}Classical solutions of \PIII\ and \sPIII}

\subsection{\label{sec:p3sol1}Rational solutions of \PIII\ and \sPIII}
Rational solutions of \PIII\ \eqref{eq:PIII} are classified in the following theorem.
\begin{theorem}\label{thm:P3rats}
Equation \eqref{eq:PIII} has a rational solution if and only if 
\beq\ep_1A+\ep_2B=4n,\eeq with $n\in\Z$ and $\ep_1^2=1$, $\ep_2^2=1$, independently.
\end{theorem}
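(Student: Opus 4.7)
The plan is to convert the condition $\ep_1A+\ep_2 B=4n$ into constraints on the Hamiltonian parameters $(a,b,\ep)$ of \S\ref{sec2}. Substituting $A=2(a-b)$ and $B=2\ep(a+b+1)$ shows that this is equivalent to $(a,b)$ lying on one of the two families of half-integer lines $a\in\tfrac12+\Z$ or $b\in\tfrac12+\Z$. This reformulation is the key starting point, because on the $(a,b)$-plane the \bts\ of \PIII\ act by integer translations and the problem becomes one of identifying orbits under these translations.

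For sufficiency I would first locate seed rational solutions. Inserting $q\equiv\pm1$ directly into \eqref{eq:PIII} yields the single requirement $A+B=0$, which under the identifications above corresponds to $a=-\tfrac12$ (for $\ep=1$) and $b=-\tfrac12$ (for $\ep=-1$). The classical \bts\ of \PIII, as presented in \cite[\S34]{refGLS}, shift $(a,b)\mapsto(a\pm1,b)$ and $(a,b)\mapsto(a,b\pm1)$ while preserving rationality of $q(x)$. Iterating these transformations from the two seeds $q=\pm1$ populates every point of the half-integer lattice described above, which gives one direction of the theorem.

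For necessity, I would pass to the $\sigma$-function of Theorem \ref{thm:22}. A rational $q$ forces $p$ to be rational via \eqref{eq:PT.HM3.DE32a}, and hence $\sigma(x;a,b,\ep)$ defined by \eqref{ham:PIII} is rational; an inspection of the highest-order balance in \eqref{eq:PT.HM3.DE34} shows that $\sigma$ must in fact be polynomial after subtracting the explicit $-\tfrac12\ep x^2+(a+b)^2$ terms. Matching the leading and subleading coefficients of this polynomial against the second-degree equation \eqref{eq:PT.HM3.DE34} then yields arithmetic relations on $a,b$ that must reduce to $a+\tfrac12\in\Z$ or $b+\tfrac12\in\Z$.

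The main obstacle is this necessity step: a naive coefficient-by-coefficient argument in the $\sigma$-equation becomes unwieldy because one must exclude every possible polynomial degree and every off-lattice value of $(a,b)$ simultaneously. The cleanest resolution is to combine the above polynomial analysis with Umemura's classical-solutions framework \cite{refUmemura98,refUmemura01}, in which the affine Weyl group acting on $(a,b)$ enumerates the possible rational orbits and its symmetry reduces the verification to checking that the seeds $q=\pm1$ lie in the only admissible orbits; this closes the ``only if'' direction and completes the proof.
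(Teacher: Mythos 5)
The paper does not prove this theorem at all: its ``proof'' is a citation to \luk\ \cite{refLuk67b} and to \cite{refMCB97,refMurata95}, so there is nothing internal to compare against except the strategy those references use. Your reformulation of $\ep_1A+\ep_2B=4n$ as $a+\tfrac12\in\Z$ or $b+\tfrac12\in\Z$ via \eqref{p3:params} is correct, and your sufficiency argument (constant seeds $q\equiv\pm1$ on the line $A+B=0$, which for $\ep=1$ is $a=-\tfrac12$ and for $\ep=-1$ is $b=-\tfrac12$, propagated over the half-integer lattice by the \bts, which are rational in $q$, $\d q/\d x$ and $x$ and hence preserve rationality) is the standard and essentially sound route, modulo checking that the transformations do not degenerate on the solutions in question.

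The necessity step, however, contains a genuine gap. Your key claim --- that a rational solution forces $\sigma$ of \eqref{ham:PIII} to be \emph{polynomial} after subtracting $-\tfrac12\ep x^2+(a+b)^2$ --- is false: by \eqref{sol:sigma1} the rational solutions of \sPIII\ are $\sigma_n=2x\,\d(\ln S_n)/\d x-\tfrac12x^2-2\mu x-\tfrac14$, which have simple poles at every zero of the Umemura polynomial $S_n(x;\mu)$; already $S_1(x;\mu)=x+\mu$ produces a pole at $x=-\mu$. So the object you must control is a rational function with an unknown finite set of movable poles, and a leading-coefficient match in \eqref{eq:PT.HM3.DE34} cannot see those poles --- this is precisely where the difficulty of the ``only if'' direction lives. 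The appeal to the affine Weyl group does not close it either: the group action organises solutions you already possess into orbits, but it cannot by itself exclude the existence of a rational solution at an off-lattice value of $(a,b)$, which is the statement to be proved. The proofs in \cite{refLuk67b,refMurata95} do this by a direct local analysis of the admissible Laurent expansions of a rational $q$ at $x=0$ and $x=\infty$ together with a residue/degree count; some argument of that kind (or an equivalent one) is needed to make your necessity step complete.
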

\begin{proof}For details see \luk\ \cite{refLuk67b}; see also \cite{refMCB97,refMurata95}.\end{proof}

Umemura \cite{refUmemura20}\footnote{The original manuscript was written by Umemura in 1996 for the proceedings of the conference ``\textit{Theory of nonlinear special functions: the \p\ transcendents}" in Montreal, which were not published; for further details see \cite{refOkOh20}.}
derived special polynomials associated with rational solutions of \PIII\ \eqref{eq:PIII}, which we now define; see also \cite{refPAC03,refKajiwara,refKM99}. 

\begin{defn}{\rm The \textit{Umemura polynomial} $S_{n}(x;\mu)$ is given by the recursion relation 
\beq \label{eq:Srecrel}
S_{n+1}S_{n-1} =-x\left\{S_{n}\deriv[2]{S_{n}}{x}-\left(\deriv{S_{n}}{x}\right)^{\!2}\right\} -S_{n} \deriv{S_{n}}{x}
+(x+\mu)S_{n}^2,\eeq 
where $S_{-1}(x;\mu)=S_0(x;\mu)=1$, with $\mu$ an arbitrary parameter. 
}\end{defn}
 
\begin{remarks}{\rm 
\begin{enumerate}[(i)] \item[]
\item The Umemura polynomial $S_n(x;\mu)$ has the Wronskian representation
\beq\label{def:Umpoly} S_n(x;\mu) = 
c_n\Wr\left(\ph_1,\ph_3,\ldots,\ph_{2n-1}\right), 
\qquad c_n=\prod_{k=0}^n(2k+1)^{n-k},\eeq
 where $\Wr(\ph_1,\ph_2,\ldots,\ph_n)$ is the Wronskian defined by
\[\Wr(\ph_1,\ph_2,\ldots,\ph_n) = 
\left|\begin{matrix}
\ph_1 & \ph_2 & \ldots & \ph_n\\
\ph_1' & \ph_2' & \ldots & \ph_n'\\
\vdots & \vdots & \ddots & \vdots \\
\ph_1^{(n-1)} & \ph_2^{(n-1)} & \ldots &
\ph_n^{(n-1)} \end{matrix}\right|,\qquad \ph_j^{(k)}=\deriv[k]{\ph_j}{x},
\]
and
\[\ph_m(x;\mu)=L_{2m-1}^{(\mu-2m+1)}(-x),\]
with $L_k^{(\a)}(x)$ the Laguerre polynomial, for details see Kajiwara and Masuda \cite{refKM99}; see also \cite{refPAC03,refKajiwara}. 
\item Rational solutions of \PIII\ \eqref{eq:PIII} are expressed in terms of Umemura polynomials.
For example, 
\beq \label{PIII:rat} w_n(z;\mu)=1+\deriv{}{z}\ln \left\{\frac{ S_{n-1}(z;\mu-1)}{ S_n(z;\mu)}\right\},\eeq
satisfies \eqref{eq:PIII} for the parameters
\[A=2n+2\mu-1,\qquad B=2n-2\mu+1,\]
for details see \cite{refPAC03,refKajiwara,refKM99}.
\comment{\item Bothner, Miller and Sheng \cite{refBM,refBMS} studied the asymptotics of the (scaled) poles and roots of the rational solutions in their so-called ``eye-problem".
Bothner, Miller and Sheng \cite{refBM,refBMS} 
study numerically how the distributions of poles and zeros of the rational solutions of \PIII\ \eqref{eq:PIII} behave as $n$ increases and how the patterns vary with $\mu\in\mathbb{C}$ (note that they use a different notation to our notation).}
\end{enumerate}}\end{remarks}

To describe rational solutions of deg-\PV\ \eqref{eq:degPV}, it is more convenient to use rational solutions of $\sPIII$ \eqref{eq:PT.HM3.DE34}, which involve one Umemura polynomial and are discussed in the following theorem, whereas rational solutions of \PIII\ \eqref{eq:PIII}  which involve two Umemura polynomials, as shown in \eqref{PIII:rat}.

\begin{theorem}
The rational function solution of $\sPIII$ \eqref{eq:PT.HM3.DE34} is given by
\begin{subequations}\label{sol:sigma_rat} \beq 
\sigma_n(x;\mu,\ep) = 2x\deriv{}{x} \left\{\ln S_{n}(x;\mu)\right\} - \tfrac12 x^2-2\mu x-\tfrac14,\qquad n\geq0,
\label{sol:sigma1}\eeq 
with $S_{n}(x;\mu)$ the Umemura polynomial, for the parameters
\beq \label{sig:params1} a=n+\tfrac12,\qquad b=\mu,\qquad \ep=1. 
\eeq \end{subequations}
\end{theorem}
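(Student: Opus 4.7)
The plan is to prove the theorem by induction on $n$, using the Toda-type recursion \eqref{eq:Srecrel} for the Umemura polynomials. The conceptual role of \eqref{eq:Srecrel} is the familiar one in tau-function theory: it is the bilinear (Hirota) identity satisfied by the tau-functions of \sPIII\ along the \bk\ chain generated by the shift $a\mapsto a+1$, and the ansatz \eqref{sol:sigma1} is the standard tau-to-$\sigma$ relation for this chain.

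For the base case $n=0$, we have $S_0\equiv 1$, so $\sigma_0(x;\mu,1)=-\tfrac12 x^2-2\mu x-\tfrac14$. Substituting into \sPIII\ with $(a,b,\ep)=(\tfrac12,\mu,1)$ reduces to a polynomial identity in $x$ and $\mu$, since $x\sigma_0''-\sigma_0'$, $(\sigma_0')^2-x^2$ and $x\sigma_0'-2\sigma_0$ are all low-degree polynomials; this can be checked term by term.

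For the inductive step, I would invoke the \bt\ of \sPIII\ that implements the shift $a\mapsto a+1$ with $b$ and $\ep$ fixed. This transformation is known explicitly, via Okamoto's affine Weyl group action on the Hamiltonian system \eqref{eq:PT.HM3.DE32}, and produces a two-term recursion of Toda type
\[ \sigma_{n+1}+\sigma_{n-1}=\Phi(x,\sigma_n,\sigma_n',\sigma_n'') \]
for an explicit rational function $\Phi$. Writing $\sigma_n=2x(\ln\tau_n)'-\tfrac12 x^2-2\mu x-\tfrac14$, this Toda relation translates into a bilinear differential identity on the $\tau_n$; the core of the argument is to check that this bilinear identity coincides with \eqref{eq:Srecrel}, so that the choice $\tau_n=S_n(x;\mu)$ closes the induction.

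The main obstacle I expect is the bookkeeping in this inductive step: matching the precise coefficients and any residual gauge factors (in $x$ and $\mu$) in the \bk\ recursion for $\sigma$ with those in \eqref{eq:Srecrel}, and confirming that the parameter identification $a=n+\tfrac12$, $b=\mu$ is preserved by the shift (rather than, say, drifting into $b\mapsto b+1$). A useful sanity check before tackling the general case is to compute $S_1(x;\mu)=x+\mu$ from \eqref{eq:Srecrel} at $n=0$, form $\sigma_1=2x/(x+\mu)-\tfrac12 x^2-2\mu x-\tfrac14$, and verify \sPIII\ directly for $(a,b,\ep)=(\tfrac32,\mu,1)$; clearing the denominator $(x+\mu)^3$ reduces the check to a manageable polynomial identity in $x$ and $\mu$.
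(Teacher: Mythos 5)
The paper itself offers no argument for this theorem beyond the citation to \cite{refPAC03}, so there is nothing to compare your induction against line by line; your overall strategy (verify a seed, then propagate along the $a\mapsto a+1$ B\"ack\-lund chain and match the resulting Toda-type bilinear identity with \eqref{eq:Srecrel}) is indeed the route taken in the cited literature, e.g.\ \cite{refKM99,refPAC03}. But as written your proposal is a plan rather than a proof: the explicit B\"ack\-lund transformation of $\sigma$, the Toda recursion $\sigma_{n+1}+\sigma_{n-1}=\Phi$, and the verification that its bilinear form under the substitution $\sigma_n=2x(\ln\tau_n)'-\tfrac12x^2-2\mu x-\tfrac14$ is precisely \eqref{eq:Srecrel} are all asserted to exist but never exhibited. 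That matching, together with the check that the shift moves $a$ and not $b$, is the entire content of the theorem; deferring it to ``bookkeeping'' leaves the proof essentially empty.

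There is also a concrete problem with your base case which indicates that the ``term by term'' check was not actually carried out. With $S_0\equiv1$ and $\sigma_0=-\tfrac12x^2-2\mu x+c$, one finds $x\sigma_0''-\sigma_0'=2\mu$, $(\sigma_0')^2-x^2=4\mu(x+\mu)$ and $x\sigma_0'-2\sigma_0=2\mu x-2c$, so the left-hand side of \eqref{eq:PT.HM3.DE34} with $(a,b,\ep)=(\tfrac12,\mu,1)$ equals $(8\mu^2+2)x^2+4\mu(1-4c)\,x+4\mu^2(1-4c)$, which agrees with the right-hand side $8(a^2+b^2)x^2=(8\mu^2+2)x^2$ only when $c=+\tfrac14$, not $c=-\tfrac14$ as in \eqref{sol:sigma1}. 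So either the constant in \eqref{sol:sigma1} or a sign in \eqref{eq:PT.HM3.DE34} carries a typo; in any case the additive normalisation of $\sigma$ cannot be waved away, because \eqref{eq:PT.HM3.DE34} involves $\sigma$ itself and not only its derivatives, and your induction must track this constant exactly through the recursion before the inductive step can even be set up.
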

\begin{proof}See Clarkson  \cite{refPAC03}.
\end{proof}
\subsection{\label{sec:p3sol2}Special function solutions of \PIII\ and \sPIII}
Special function solutions of \PIII\ \eqref{eq:PIII}, which are expressed in terms of Bessel functions and are classified in the following Theorem.

\begin{theorem}{\label{thm:P3sf} 
Equation \eqref{eq:PIII} has solutions expressible in terms of the Riccati equation 
\beq \label{eq:PT.SF.eq31}
x\deriv{q}{x}=\ep_1xq^2+(A\ep_1-1)q+\ep_2x,
\eeq if and only if 
\beq \ep_1A+\ep_2B=4n+2,\eeq 
with $n\in\Z$ and $\ep_1^2=1$, $\ep_2^2=1$, independently.
Further, the Riccati equation \eqref{eq:PT.SF.eq31} has the solution 
\beq \label{eq:PT.SF.eq32}
q(x)=-{\ep_1}\,\deriv{}{x}\ln\psi_{\nu}(x),
\eeq 
where $\psi_{\nu}(x)$ satisfies
\beq \label{eq:PT.SF.eq33}x\deriv[2]{\psi_{\nu}}{x}+(1-2\ep_1\nu)\deriv{\psi_{\nu}}{x}+\ep_1\ep_2x\psi_{\nu}=0,\eeq 
which has solution
\beq \label{psinu}
\psi_{\nu}(x)=\begin{cases} x^{\nu}\left\{C_1J_{\nu}(x)+C_2Y_{\nu}(x)\right\},&\mbox{\rm if}\quad \ep_1=1,\phantom{-}\enskip\ep_2=1,\\
x^{-\nu}\left\{C_1J_{\nu}(x)+C_2Y_{\nu}(x)\right\},&\mbox{\rm if}\quad \ep_1=-1,\enskip\ep_2=-1,\\
x^{\nu}\left\{C_1I_{\nu}(x)+C_2K_{\nu}(x)\right\},&\mbox{\rm if}\quad\ep_1=1,\phantom{-}\enskip\ep_2=-1,\\
x^{-\nu}\left\{C_1I_{\nu}(x)+C_2K_{\nu}(x)\right\},\quad&\mbox{\rm if}\quad\ep_1=-1,\enskip\ep_2=1,\end{cases}\eeq 
with $C_1$, $C_2$ arbitrary constants, and $J_{\nu}(x)$, $Y_{\nu}(x)$, $I_{\nu}(x)$, $K_{\nu}(x)$ Bessel functions.
}\end{theorem}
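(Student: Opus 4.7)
The plan is to prove the theorem in three stages: (i) show that any solution of the Riccati equation \eqref{eq:PT.SF.eq31} yields a solution of \PIII\ \eqref{eq:PIII} provided a resonance condition on $A$ and $B$ holds; (ii) linearise the Riccati equation to obtain \eqref{eq:PT.SF.eq33} and identify the solutions with Bessel functions; (iii) prove that $\epsilon_1 A + \epsilon_2 B \in 4\Z + 2$ is \emph{necessary} for such a Riccati reduction to exist.

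For stage (i), I would differentiate \eqref{eq:PT.SF.eq31} once to obtain an expression for $xq''$ and substitute it, together with the Riccati equation itself, into the difference of the two sides of \eqref{eq:PIII}. Using \eqref{eq:PT.SF.eq31} repeatedly to eliminate $q'$ from the result, every term polynomial in $q$ should cancel, leaving the single scalar condition $\epsilon_1 A + \epsilon_2 B = 2$. This is the ``seed'' case $n=0$; the remaining integer values of $n$ are produced in stage (iii).

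For stage (ii), setting $q = -\epsilon_1 (\ln \psi_\nu)'$ gives $xq' = -\epsilon_1 x \psi_\nu''/\psi_\nu + \epsilon_1 x (\psi_\nu'/\psi_\nu)^2$ and $\epsilon_1 x q^2 = \epsilon_1 x (\psi_\nu'/\psi_\nu)^2$, so the quadratic contributions cancel in \eqref{eq:PT.SF.eq31}. After clearing $\psi_\nu$, I obtain
\begin{equation*}
x \psi_\nu'' + (1 - A \epsilon_1) \psi_\nu' + \epsilon_1 \epsilon_2 x \psi_\nu = 0,
\end{equation*}
which coincides with \eqref{eq:PT.SF.eq33} upon setting $\nu = A/2$. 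The ansatz $\psi_\nu = x^{\pm\nu}\phi(x)$ then reduces this to the standard Bessel equation for $\phi$ when $\epsilon_1\epsilon_2 = 1$ and to the modified Bessel equation when $\epsilon_1 \epsilon_2 = -1$; pairing the two signs of $\epsilon_1$ with the two signs of $\epsilon_2$ produces the four cases of \eqref{psinu}.

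For stage (iii), which I expect to be the main obstacle, I would invoke the \bts\ of \PIII\ originally established by Gromak. These act on the parameter pair $(A,B)$ by the shifts $(A,B) \mapsto (A+2\ep, B-2\ep)$ with $\ep = \pm 1$, together with the discrete symmetries $(A,B) \mapsto (\pm A,\pm B)$ generated by $q \mapsto 1/q$ and $q \mapsto -q$. Iterating these shifts from the seed $\epsilon_1 A + \epsilon_2 B = 2$ produces Riccati-type solutions on the entire lattice $\epsilon_1 A + \epsilon_2 B = 4n+2$, with $\nu$ shifted by integers and $\psi_\nu$ transforming by the standard Bessel recurrences. The converse --- that no Riccati reduction exists outside this lattice --- is the delicate step and requires the classical irreducibility results of \luk, Murata and Umemura, which rule out sporadic classical solutions beyond those generated by the affine Weyl group action on the seed.
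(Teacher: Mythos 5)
The paper does not actually prove this theorem: its ``proof'' is a citation to Okamoto and to \cite{refPACreview,refGLS,refMW,refMCB97,refMurata95}, so any honest reconstruction is necessarily doing more work than the source. Your stages (i) and (ii) are correct and are exactly the standard argument: differentiating the Riccati equation, substituting into \eqref{eq:PIII} and eliminating $q'$ leaves the residual $(2\ep_2-\ep_1\ep_2A-B)q$, whence the seed condition $\ep_1A+\ep_2B=2$; and the logarithmic-derivative substitution gives $x\psi''+(1-A\ep_1)\psi'+\ep_1\ep_2x\psi=0$, which with $\nu=A/2$ and the ansatz $\psi=x^{\ep_1\nu}\phi$ reduces to the (modified) Bessel equation according to the sign of $\ep_1\ep_2$. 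The necessity direction you defer to \luk, Murata and Umemura, which is no worse than what the paper itself does.

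There is, however, a concrete flaw in stage (iii). The shifts you propose, $(A,B)\mapsto(A+2\ep,B-2\ep)$, preserve $A+B$ exactly, and the discrete symmetries actually generated by $q\mapsto-q$ and $q\mapsto 1/q$ are $(A,B)\mapsto(-A,-B)$ and $(A,B)\mapsto(-B,-A)$ (not all of $(\pm A,\pm B)$); every element of the group you describe therefore preserves $|A+B|$ up to swapping with $|A-B|$ only through the sign flips, and in particular the orbit of the seed hyperplane $\ep_1A+\ep_2B=2$ never reaches $\ep_1A+\ep_2B=6$. The correct \bts\ of \PIII\ \eqref{eq:PIII} (Gromak; see also \cite{refMCB97}) shift $(A,B)$ by all four sign combinations $(\pm2,\pm2)$ independently; it is the transformation $\W_1:(A,B)\mapsto(A+2,B+2)$, which you have omitted, that steps $\ep_1A+\ep_2B$ by $4$ when $\ep_1=\ep_2$ and hence generates the full lattice $4n+2$ from the seed, with $\nu$ shifting by an integer and $\psi_\nu$ transforming by the Bessel contiguity relations. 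With that correction stage (iii) goes through for the ``if'' direction.
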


\begin{proof} For details see 
Okamoto \cite{refOkamotoPIII}; see also \cite{refPACreview,refGLS,refMW,refMCB97,refMurata95}.\end{proof}

Determinantal representations of special function solutions of \PIII\ \eqref{eq:PIII} were given by Okamoto \cite{refOkamotoPIII}; see also \cite{refFW02,refMasuda07}.
As for rational solutions, to describe special function solutions of deg-\PV\ \eqref{eq:degPV}, it is more convenient to use special function solutions of $\sPIII$ \eqref{eq:PT.HM3.DE34}, which are discussed in the following theorem.

\begin{theorem}
Suppose $\tau_n(x;\mu,\ep)$ is the determinant given by
\begin{subequations} \label{def:tau} \beq\tau_n(x;\mu,\ep) 
=\det\left[\left(x\deriv{}{x}\right)^{j+k}\ph_{\mu}(x;\ep) \right]_{j,k=0}^{n-1},
\comment{=\left| \begin{matrix} \ph_{\mu} & \ph_{\mu}^{(1)} & \ldots & \ph_{\mu}^{(n-1)}\\
\ph_{\mu}^{(1)} &\ph_{\mu}^{(2)} & \ldots & \ph_{\mu}^{(n)} \\
\vdots & \vdots & \ddots & \vdots \\
\ph_{\mu}^{(n-1)} & \ph_{\mu}^{(n)} & \ldots & \ph_{\mu}^{(2n-2)}
\end{matrix} \right|,\qquad \ph_{\mu}^{(m)} = \left(x\deriv{}{x}\right)^{\!m}\ph_{\mu}(x;\ep)}
\eeq
where
\beq \ph_{\mu}(x;\ep) =\begin{cases} c_1 J_{\mu}(x) + c_2 Y_{\mu}(x), &\quad\text{if}\quad \ep=1,\\
c_1 I_{\mu}(x) + c_2 K_{\mu}(x), &\quad\text{if}\quad \ep=-1, 
\end{cases}\eeq \end{subequations}
with $c_1$, $c_2$ arbitrary constants, and $J_{\mu}(z)$, $Y_{\mu}(z)$, $I_{\mu}(z)$, $K_{\mu}(z)$ Bessel functions.

The Bessel function solution of $\sPIII$ \eqref{eq:PT.HM3.DE34} 
is given by
\begin{subequations}\label{sol:sigma_bessel} 
\beq \sigma_n(x;\mu,\ep) = 2x\deriv{}{x} \left\{\ln \tau_n(x;\mu,\ep)\right\} + \tfrac12\ep x^2+\mu^2-n^2+2n,
\label{sol:sigma}\eeq 
for the parameters
\beq \label{sig:params} a=n,\qquad b=\mu.
\eeq \end{subequations}
\end{theorem}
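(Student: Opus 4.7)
The plan is to prove the theorem by induction on $n$, using the \bt\ of \sPIII\ \eqref{eq:PT.HM3.DE34} that shifts $a\mapsto a+1$ while keeping $b=\mu$ and $\ep$ fixed. The aim is to identify the sequence $\{\sigma_n\}_{n\ge 0}$ defined by \eqref{sol:sigma} with the orbit of a Bessel seed solution of \sPIII\ under iteration of this transformation, showing along the way that each iterate admits the Hankel-type determinantal representation \eqref{def:tau}.

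I would first dispose of the base cases. For $n=0$ the empty determinant gives $\tau_0\equiv 1$ and \eqref{sol:sigma} reduces to $\sigma_0 = \tfrac12\ep x^2+\mu^2$; substituting into \eqref{eq:PT.HM3.DE34} with $a=0$, $b=\mu$ and using $\ep^2=1$ collapses the left-hand side to $8\mu^2 x^2$, matching the right-hand side. For $n=1$, $\tau_1 = \ph_\mu(x;\ep)$ is a genuine Bessel seed; here I would verify \sPIII\ with $a=1$, $b=\mu$ by invoking the Riccati reduction of Theorem~\ref{thm:P3sf} together with the unified Bessel identity
\[\left(x\deriv{}{x}\right)^{\!2}\ph_\mu = (\mu^2-\ep x^2)\,\ph_\mu,\]
which holds simultaneously for both $\ep=\pm 1$ and encodes the Bessel equations satisfied by $J_\mu,Y_\mu,I_\mu,K_\mu$.

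For the inductive step, the key tool is a bilinear Toda-type identity for the tau-sequence of the schematic form
\[\tau_{n+1}\,\tau_{n-1} = \tau_n\left(x\deriv{}{x}\right)^{\!2}\!\tau_n - \left(x\deriv{\tau_n}{x}\right)^{\!2} + f_n(x;\mu,\ep)\,\tau_n^{\,2},\]
with an explicit correction $f_n$ dictated by the shift of the constant term $\mu^2-n^2+2n$ in \eqref{sol:sigma}. Combined with the Bessel identity above, this allows one to express $2x\bigl[(\ln\tau_{n+1})'-(\ln\tau_n)'\bigr]$ purely in terms of $\tau_n$ and its derivatives, which is precisely the action of the \bt\ $a\mapsto a+1$ at the level of $\sigma$. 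Iterating then produces $\sigma_n$ for every $n\ge 1$.

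The main obstacle is the derivation of the bilinear identity for general $n$. It is ultimately a consequence of the Jacobi determinant identity applied to the Hankel matrix $\bigl[(x\tfrac{\d}{\d x})^{j+k}\ph_\mu\bigr]_{j,k=0}^{n-1}$, but turning it into the required closed form demands systematic use of the Bessel identity to reduce each entry $(x\tfrac{\d}{\d x})^{2k}\ph_\mu$ modulo lower-order ones. A cleaner route is to invoke the determinantal theory of Okamoto \cite{refOkamotoPIII}, Forrester and Witte \cite{refFW02}, and Masuda \cite{refMasuda07}, in which the Toda structure of the \PIII\ tau-functions is established and the tau-functions are shown to coincide with Hankel determinants of the seed Bessel function; the sigma-formula \eqref{sol:sigma} is then a direct logarithmic-derivative computation.
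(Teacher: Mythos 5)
Your outline is sound and, in its fallback form, coincides with what the paper actually does: the paper states this theorem with no proof at all, relying on the immediately following Lemma (the Toda equation \eqref{eq:taun}) which it attributes to Okamoto \cite[Theorem 2]{refOkamotoPIII}, so your reconstruction is supplying work that the paper delegates to a citation. Your base cases are right: for $n=0$ one gets $\sigma_0=\tfrac12\ep x^2+\mu^2$, the left-hand side of \eqref{eq:PT.HM3.DE34} collapses to $8\mu^2x^2$, and your unified identity $\bigl(x\,\d/\d x\bigr)^2\ph_{\mu}=(\mu^2-\ep x^2)\ph_{\mu}$ is the correct way to handle $\ep=\pm1$ simultaneously. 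Two points of comparison with the paper. First, the bilinear identity you posit needs no correction term: $f_n\equiv0$, since the exact statement is \eqref{eq:taun}, i.e.\ $\tau_n\bigl(x\,\d/\d x\bigr)^2\tau_n-\bigl(x\,\d\tau_n/\d x\bigr)^2=\tau_{n+1}\tau_{n-1}$, equivalently $\bigl(x\,\d/\d x\bigr)^2\ln\tau_n=\tau_{n+1}\tau_{n-1}/\tau_n^2$; the shift of the additive constant $\mu^2-n^2+2n$ is absorbed by the parameter change $a=n\mapsto n+1$, not by a modification of the Toda relation, and the identity itself follows, as you surmise, from the Jacobi determinant identity applied to the Hankel matrix after reducing entries with the Bessel equation. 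Second, the one genuinely unexecuted step in your induction is the \bt\ of \sPIII\ in the $a$-direction: the paper records \bts\ only for deg-\PV\ (Theorem \ref{thm:bt}) and never writes down the corresponding shift at the level of $\sigma$, so to close your inductive step you must either derive that $\sigma$-level transformation from Theorem \ref{thm:22} combined with Theorem \ref{thm:bt}, or verify directly that \eqref{eq:taun} together with the Bessel identity forces $\sigma_{n+1}$ defined by \eqref{sol:sigma} to satisfy \eqref{eq:PT.HM3.DE34} with $a=n+1$, $b=\mu$. Your alternative of invoking the determinantal theory of Okamoto \cite{refOkamotoPIII}, Forrester and Witte \cite{refFW02} and Masuda \cite{refMasuda07} is precisely the paper's implicit proof, and is the honest resolution if you do not wish to carry out that computation.
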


\begin{lemma}
The determinant $\tau_n(x;\mu,\ep)$ given by \eqref{def:tau} satisfies the equation
\beq \label{eq:taun} x^2\left\{ \tau_n \deriv[2]{\tau_n}{x}-\left(\deriv{\tau_n}{x}\right)^2\right\} + x\tau_n \deriv{\tau_n}{x} = \tau_{n+1}\tau_{n-1}, \eeq
or equivalently the Toda equation
\beq \left(x\deriv{}{x}\right)^{\!2} \ln \tau_n= \frac{\tau_{n+1}\tau_{n-1}}{\tau_n^2}. \eeq
\end{lemma}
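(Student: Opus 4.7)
My plan is to prove the bilinear identity \eqref{eq:taun} using the classical Desnanot--Jacobi (Sylvester) identity, exploiting the Hankel structure of $\tau_n$ and the fact that the Euler operator $D:=x\,\d/\d x$ acts on the entries $a_m:=D^m\ph_\mu$ as a pure shift $Da_m=a_{m+1}$. Since $D\tau_n=x\tau_n'$ and $D^2\tau_n=x^2\tau_n''+x\tau_n'$, a direct computation shows that the left-hand side of \eqref{eq:taun} equals $\tau_n D^2\tau_n-(D\tau_n)^2$, so it suffices to establish
\[ \tau_n D^2\tau_n-(D\tau_n)^2=\tau_{n+1}\tau_{n-1}, \]
and the Toda form is then obtained by dividing through by $\tau_n^2$.

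Let $H=[a_{j+k}]_{j,k=0}^{n}$ be the enlarged $(n+1)\times(n+1)$ Hankel matrix, so $\det H=\tau_{n+1}$ while deleting the last one, respectively two, rows and columns returns $\tau_n$, respectively $\tau_{n-1}$. Writing $H[I;J]$ for the submatrix obtained by removing the rows in $I$ and columns in $J$, I would first establish
\[ D\tau_n=\det H[\{n-1\};\{n\}]=\det H[\{n\};\{n-1\}] \]
via the derivation rule $D\det M=\sum_i\det(M\text{ with row }i\text{ replaced by its $D$-derivative})$: because $D$ shifts $a_m\mapsto a_{m+1}$, differentiating any row $i<n-1$ reproduces the original row $i+1$ and contributes zero, and the same argument applied to columns gives the symmetric expression. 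The crucial step is the identification $D^2\tau_n=\det H[\{n-1\};\{n-1\}]$; for this I would apply $D$ to the \emph{row}-shifted form of $D\tau_n$ by differentiating its \emph{columns}, whereupon the same adjacent-column cancellation leaves only the single surviving term where column $n-1$ is replaced by $(a_n,a_{n+1},\ldots,a_{2n-2},a_{2n})^T$, and a direct inspection shows the resulting matrix coincides with $H[\{n-1\};\{n-1\}]$.

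Having identified all four corner minors, I then invoke the Desnanot--Jacobi identity applied to $H$ with the index pair $\{n-1,n\}$:
\[ \det H\cdot\det H[\{n-1,n\};\{n-1,n\}]=\det H[\{n-1\};\{n-1\}]\det H[\{n\};\{n\}]-\det H[\{n-1\};\{n\}]\det H[\{n\};\{n-1\}], \]
which by the above identifications reads exactly $\tau_{n+1}\tau_{n-1}=\tau_n D^2\tau_n-(D\tau_n)^2$, completing the proof. The main obstacle is the identification of $D^2\tau_n$: a naive iteration of the row-differentiation produces two nonvanishing contributions rather than one, and it is essential to mix the two orientations (differentiating first by rows and then by columns) so that the single surviving shift simultaneously affects the last row and the last column, matching the unique doubly-bordered minor required by Sylvester's identity. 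Once this matching is in place, everything else is bookkeeping.
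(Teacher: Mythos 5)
Your identity and the strategy behind it check out. The paper itself offers no argument here: its proof is a citation to Okamoto \cite[Theorem 2]{refOkamotoPIII}, so any self-contained derivation is necessarily "a different route" in form, though yours is essentially the standard mechanism underlying the cited result. Concretely, your reduction of the left-hand side of \eqref{eq:taun} to $\tau_n D^2\tau_n-(D\tau_n)^2$ with $D=x\,\d/\d x$ is correct, as are the three minor identifications: removing row $n-1$ and column $n$ (or vice versa) from the bordered Hankel matrix $H$ does give the single surviving term of $D\tau_n$, and your device of differentiating $D\tau_n$ \emph{by columns} after having produced it \emph{by rows} is legitimate (multilinearity of the determinant allows either expansion) and is exactly what isolates the doubly-bordered minor $\det H[\{n-1\};\{n-1\}]=D^2\tau_n$, avoiding the two-term mess that a second row-differentiation would create. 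The Jacobi/Desnanot--Jacobi identity applied to the adjacent index pair $\{n-1,n\}$ then yields $\tau_{n+1}\tau_{n-1}=\tau_n D^2\tau_n-(D\tau_n)^2$ with the correct signs, and the $n=1$ base case is covered by the convention $\tau_0=1$. What your approach buys is a short, fully elementary proof that makes transparent why the Euler operator (rather than $\d/\d x$) is the right derivation for this particular determinant, namely that it shifts the Hankel entries; what the paper's citation buys is brevity and the placement of the lemma within Okamoto's general $\tau$-function framework, where the same Toda equation is derived for the whole family of special-function solutions at once.
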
 
\begin{proof} See Okamoto \cite[Theorem 2]{refOkamotoPIII}.
\end{proof}

\section{\label{sec:bts}\bts}
A \textit{\bt}\ relates the solution of a \peq\ either to another solution of the same equation with different values of the parameters, or to another \peq. All \peqs, except the first \peq, have \bts.
Hierarchies of classical solutions of the \p\ equations can be obtained by applying \bk\ transformations to a ``seed solution". 

Let $w_j(z_jz;\a_j,\b_j,\ga_j)$, $j=0,1,2$, be solutions of deg-\PV\  \eqref{eq:degPV} with
\[\begin{array}{ll} z_1=-z_0,\qquad 
&({\a_1}, {\b_1}, {\ga_1})=(\a_0,\b_0,-\ga_0),\\[2.5pt]
z_2=z_0,\qquad &({\a_2}, {\b_2}, {\ga_2})=(-\b_0,-\a_0,-\ga_0).
\end{array}\]
Then deg-\PV\ \eqref{eq:degPV} has the symmetries 
\begin{align}
\label{degPV:S1}\mathcal{S}_1:\qquad &{w_1}({z_1})={w_0(-z_0)},\\
\label{degPV:S2}\mathcal{S}_2:\qquad &{w_2}({z_2})={1}/{w_0(z_0)}.
\end{align}

\begin{theorem}\label{thm:bt}Suppose that $W_0=w(z;\a,\b,\ga)$ satisfies deg-\PV\ \eqref{eq:degPV} with parameters 
\[ \a=\tfrac12a^2,\qquad \b=-\tfrac12b^2,\qquad \ga=c.\]
Let $W_j=w(z;\a_j,\b_j,\ga_j)$, $j=1,2,3,4$, be solutions of \eqref{eq:degPV} with parameters
\[\begin{array}{l@{\qquad}l@{\qquad}l}
\a_1=\tfrac12(a+1)^2,& \b_1=-\tfrac12b^2,& \ga_1=c,\\[2.5pt]
\a_2=\tfrac12(a-1)^2,& \b_2=-\tfrac12b^2,& \ga_2=c,\\[2.5pt]
\a_3=\tfrac12a^2,& \b_3=-\tfrac12(b+1)^2,& \ga_3=c,\\[2.5pt]
\a_4=\tfrac12a^2,& \b_4=-\tfrac12(b-1)^2,& \ga_4=c,
\end{array}\] respectively. Then these solutions can be obtained from $W_0$ as follows
\begin{subequations}\begin{align}
\W_1:\quad W_1
&=\frac{\ds\left\{zW_0'+(W_0-1) (aW_0 -b)\right\} \left\{zW_0'+(W_0-1) (aW_0 +b)\right\}}{\ds z^2(W_0')^2+2a zW_0(W_0-1)W_0'+2cz^2W_0(W_0-1)+(W_0-1)^{2} (a^2 W_0^2 -b^2)},\\
\W_2:\quad W_2
&=\frac{\ds\left\{zW_0'-(W_0-1) (aW_0 -b)\right\} \left\{zW_0'-(W_0-1) (aW_0 +b)\right\}}{\ds z^2(W_0')^2-2 azW_0(W_0-1)W_0'+2cz^2W_0(W_0-1)+(W_0-1)^{2} (a^2 W_0^2 -b^2)},\\
\W_3:\quad W_3
&=\frac{\ds z^2(W_0')^2+2bz(W_0-1) W_0'+2cz^2 W_0^{2}(W_0-1) -(W_0-1)^{2} (a^2 W_0^2 -b^2)}{\ds\left\{zW_0'-(W_0-1) (aW_0 -b)\right\} \left\{zW_0'+(W_0-1) (aW_0 +b)\right\}},\\
\W_4:\quad W_4
&=\frac{\ds z^2(W_0')^2-2bz(W_0-1) W_0'+2cz^2 W_0^{2}(W_0-1) -(W_0-1)^{2} (a^2 W_0^2 -b^2)}{\ds\left\{zW_0'-(W_0-1) (aW_0 -b)\right\} \left\{zW_0'+(W_0-1) (aW_0 +b)\right\}}.\end{align}\end{subequations}
\end{theorem}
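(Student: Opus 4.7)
The plan is to work through the Hamiltonian picture for \PIII\ using Lemma \ref{lem21}. Given $W_0=w(z;\tfrac12a^2,-\tfrac12b^2,c)$ satisfying deg-\PV\ \eqref{eq:degPV}, formula \eqref{tr:wq} produces a solution $q_0(x)$ of \PIII\ \eqref{eq:p31} with parameters $A=2(a-b)$, $B=2c(a+b+1)$, where $x=\sqrt{2z}$. The classical \bts\ of \PIII\ act birationally on the Hamiltonian variables $(q,p)$ and shift $(a,b)$ by $(\pm 1,0)$ or $(0,\pm 1)$; since $p$ is algebraically determined by $(q,q',x)$ via \eqref{eq:PT.HM3.DE32a}, each such transformation descends to an explicit rational map $q_0\mapsto q_j$ in $(q_0,q_0',x)$. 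Substituting $q_j$ into the forward map \eqref{tr:qw} and re-expressing in $(W_0,W_0',z)$ yields the four claimed rational expressions for $W_j$.

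The workload is roughly halved by the discrete symmetries \eqref{degPV:S1}--\eqref{degPV:S2}. The involution $\mathcal{S}_2:W\mapsto 1/W$ exchanges $a\leftrightarrow b$ (and flips the sign of $\gamma$, which is immaterial to the shift structure), so once $\W_1$ is established, $\W_3=\mathcal{S}_2\circ\W_1\circ\mathcal{S}_2$ produces the $b\to b+1$ shift, and similarly $\W_4=\mathcal{S}_2\circ\W_2\circ\mathcal{S}_2$. Thus it is enough to prove $\W_1$ and $\W_2$; indeed $\W_2$ is the parameter-lattice inverse of $\W_1$, but deriving both in parallel keeps the rational-function bookkeeping symmetric.

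The principal obstacle is the algebraic simplification of the composition. After substitution the raw expression for $W_j$ contains $q_0''$, equivalently $W_0''$, which must be eliminated using deg-\PV\ for $W_0$. A factorisation identity that guides and cross-checks the reduction is
\[
z^2(W_0')^2+2az\,W_0(W_0-1)W_0'+(W_0-1)^2(a^2W_0^2-b^2)=\bigl\{zW_0'+(W_0-1)(aW_0-b)\bigr\}\bigl\{zW_0'+(W_0-1)(aW_0+b)\bigr\},
\]
obtained from $(P-Q)(P+Q)=P^2-Q^2$ with $P=zW_0'+(W_0-1)aW_0$ and $Q=(W_0-1)b$. Since the denominator of $\W_1$ differs from this factorised numerator only by the additional term $2cz^2W_0(W_0-1)$, one obtains the compact normal form $W_1-1=-2cz^2W_0(W_0-1)/\mathrm{den}_1$, which is the target towards which the substituted expression must be reduced; the analogous sign change on the cross-term $2azW_0(W_0-1)W_0'$ yields $\W_2$, and the $a\leftrightarrow b$ exchange under $\mathcal{S}_2$ produces $\W_3$ and $\W_4$. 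The concluding step, that each $W_j$ satisfies deg-\PV\ with the stated shifted parameters, is a mechanical differentiate-and-substitute verification using the deg-\PV\ relation for $W_0$ to clear the single higher derivative that appears.
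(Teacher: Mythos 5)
Your argument takes a genuinely different route from the paper, which offers no derivation at all and simply cites Adler and Filipuk--Van Assche for these \bts. You instead make the result self-contained by routing through the \PIII\ Hamiltonian correspondence of \S2: push $W_0$ to a \PIII\ solution via \eqref{tr:wq}, apply the classical \PIII\ \bts\ (which shift $(a,b)$ by unit steps once $p$ is eliminated through \eqref{eq:PT.HM3.DE32a}), and pull back via \eqref{tr:qw}. The two structural ingredients you supply check out: the difference-of-squares factorisation of the numerators and denominators is correct, giving the normal forms $W_1-1=-2cz^2W_0(W_0-1)/\mathrm{den}_1$ and $W_3-1=2cz^2W_0^{2}(W_0-1)/\mathrm{den}_3$; and the conjugation $\W_3=\mathcal{S}_2\circ\W_1\circ\mathcal{S}_2$ does reproduce the stated formula exactly (substituting $\widetilde{W}=1/W_0$ into $\W_1$ with $a\leftrightarrow b$ and $c\to-c$ yields $\mathrm{den}_3/\mathrm{num}_3$, whose reciprocal is $W_3$, with the double sign flip restoring $\ga=c$), so proving $\W_1$ and $\W_2$ genuinely suffices. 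What your route buys is a proof internal to the paper's own framework and a clear explanation of \emph{why} the shifts are in $a$ and $b$ rather than in $\a$ and $\b$; what it costs is that the central computation --- reducing the composite map to the closed forms for $\W_1$ and $\W_2$ after eliminating $W_0''$ via deg-\PV\ --- is described and cross-checked against the normal form but not actually carried out, and that elimination is the only place the argument could still fail. Two small points to tidy: the Hamiltonian set-up of \S2 fixes $\ga=\ep=\pm1$ in \eqref{params:degPV1}, whereas the theorem allows general $c$, so you should either rescale $z$ to normalise $c$ or observe that the final identities are polynomial in $c$ and hence extend from $c=\pm1$; and the claim that $\W_2$ is the inverse of $\W_1$ is true on the parameter lattice but should not be used as a substitute for deriving its explicit formula.
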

\begin{proof} See Adler \cite{refAdler}; also Filipuk and Van Assche \cite{refFVA13}.
\end{proof}

\section{\label{sec:dpvsol}Classical solutions of deg-\PV}
To discuss classical solutions of deg-\PV\ \eqref{eq:degPV}, it is convenient to make the transformation
\beq\label{tr41} w(z)=u(x),\qquad z=\tfrac12x^2,\eeq
in \eqref{eq:degPV}, which gives
\beq\deriv[2]{u}{x}= \left(\frac{1}{2u} + \frac{1}{u-1}\right)\left(\deriv{u}{x}\right)^{\!2} -
\frac{1}{x}\deriv{u}{x} + \frac{4(u-1)^2(\a u^2+\b)}{x^2u} + {2\ga u}.
\label{eq:degPV0}\eeq
We could fix the parameter $\ga$ in \eqref{eq:degPV0}, by rescaling $x$ if necessary, but it is more convenient not to do so. Instead classical solutions will be classified for $\ga=\pm1$.
From Corollary \ref{coro29} and \eqref{tr41}, we have that if $\sigma(x;a,b,\ep)$ is a solution of $\sPIII$ \eqref{eq:PT.HM3.DE34}, then
\beq\label{p50sigma} u(x;a,b,\ep)=\frac{\sigma'(x;a,b,\ep)+\ep x}{\sigma'(x;a,b,\ep)-\ep x},\eeq
is a solution of \eqref{eq:degPV0} with $\ga=\ep$. As remarked above, it is simpler to derive classical solutions of deg-\PV\ \eqref{eq:degPV} from $\sPIII$ rather than \PIII, compare equations \eqref{PV0-SIII} and \eqref{tr:qw}.

\def\u{u_0}
\def\ux{\u'}
\begin{theorem}\label{thm:btu}Supppose that $u_0=u(x;\a,\b,\ga)$ satisfies \eqref{eq:degPV0} with parameters 
\[ \a=\tfrac12a^2,\qquad \b=-\tfrac12b^2,\qquad \ga=c.\]
Let $u_j=u(x;\a_j,\b_j,\ga_j)$, $j=1.2.3,4$ be solutions of \eqref{eq:degPV0} with parameters
\[\begin{array}{l@{\qquad}l@{\qquad}l}
\a_1=\tfrac12(a+1)^2,& \b_1=-\tfrac12b^2,& \ga_1=c,\\[2.5pt]
\a_2=\tfrac12(a-1)^2,& \b_2=-\tfrac12b^2,& \ga_2=c,\\[2.5pt]
\a_3=\tfrac12a^2,& \b_3=-\tfrac12(b+1)^2,& \ga_3=c,\\[2.5pt]
\a_4=\tfrac12a^2,& \b_4=-\tfrac12(b-1)^2,& \ga_4=c,
\end{array}\] respectively.  Then these can be obtained from $u_0$ as follows
\begin{subequations}\begin{align}
\mathcal{U}_1:\quad u_1
&=\frac{\ds\left\{x \ux+2(\u-1) (a\u -b)\right\} \left\{x \ux+2(\u-1) (a\u +b)\right\}}{\ds x^2(\ux)^2+4ax\u(\u-1)\ux+4c\u(\u-1) x^{2}+4(\u-1)^{2} (a^2 \u^2 -b^2)},\\ 
\mathcal{U}_2:\quad u_2
&=\frac{\ds\left\{x \ux-2(\u-1) (a\u -b)\right\} \left\{x \ux-2(\u-1) (a\u +b)\right\}}{\ds x^2(\ux)^2-4 ax\u(\u-1)\ux+4c\u(\u-1) x^{2}+4(\u-1)^{2} (a^2 \u^2 -b^2)},\\ 
\mathcal{U}_3:\quad u_3
&=\frac{\ds x^2(\ux)^2+4 bx (\u-1)\ux+4 c x^2 \u^{2}(\u-1)-4(\u-1)^{2} (a^2 \u^2 -b^2)}{\ds\left\{x \ux-2(\u-1) (a\u -b)\right\} \left\{x \ux+2(\u-1) (a\u +b)\right\}},\\ 
\mathcal{U}_4:\quad u_4
&=\frac{\ds x^2(\ux)^2-4 bx (\u-1)\ux+4 c x^2 \u^{2}(\u-1)-4(\u-1)^{2} (a^2 \u^2 -b^2)}{\ds\left\{x \ux-2(\u-1) (a\u -b)\right\} \left\{x \ux+2(\u-1) (a\u +b)\right\}}.
\end{align}\end{subequations}
satisfy \eqref{eq:degPV} with parameters
\[\begin{array}{l@{\qquad}l@{\qquad}l}
\a_1=\tfrac12(a+1)^2,& \b_1=-\tfrac12b^2,& \ga_1=c,\\[2.5pt]
\a_2=\tfrac12(a-1)^2,& \b_2=-\tfrac12b^2,& \ga_2=c,\\[2.5pt]
\a_3=\tfrac12a^2,& \b_3=-\tfrac12(b+1)^2,& \ga_3=c,\\[2.5pt]
\a_4=\tfrac12a^2,& \b_4=-\tfrac12(b-1)^2,& \ga_4=c,
\end{array}\] respectively.
\end{theorem}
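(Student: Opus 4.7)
The plan is to deduce Theorem \ref{thm:btu} directly from Theorem \ref{thm:bt} via the change of variables \eqref{tr41}. Since the substitution $w(z)=u(x)$ with $z=\tfrac12 x^2$ carries solutions of \eqref{eq:degPV} bijectively to solutions of \eqref{eq:degPV0} while preserving the parameters $(\a,\b,\ga)$, every Bäcklund transformation $W_0\mapsto W_j$ established on the $w$-side automatically induces a Bäcklund transformation $\u\mapsto u_j$ on the $u$-side with exactly the same shift in $(a,b,c)$. The task therefore reduces to rewriting each of the rational expressions $\W_1,\ldots,\W_4$ of Theorem \ref{thm:bt} in the new variables.

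First I would apply the chain rule, $\d w_0/\d z=\ux/x$, to obtain the building-block identities
\[ zW_0'=\tfrac12 x\ux, \qquad z^2(W_0')^2=\tfrac14 x^2(\ux)^2,\qquad zW_0(W_0-1)W_0'=\tfrac12 x\u(\u-1)\ux.\]
Substituting these together with $W_0=\u$ into the numerator and denominator of each $\W_j$ produces a rational expression in $x$ and $\u$. In each case a common factor of $\tfrac14$ cancels between numerator and denominator, leaving precisely the formulas $\mathcal{U}_1,\ldots,\mathcal{U}_4$ stated in Theorem \ref{thm:btu}; the shifted parameter values $(\a_j,\b_j,\ga_j)$ are inherited unchanged from Theorem \ref{thm:bt}, since the parametrization $\a=\tfrac12 a^2$, $\b=-\tfrac12 b^2$, $\ga=c$ is identical on both sides and the shifts in $(a,b,c)$ are the same. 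The final assertion that the $u_j$ also realize the corresponding $w_j$-solutions of \eqref{eq:degPV} is then immediate from the bijection \eqref{tr41}.

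No new idea beyond Theorem \ref{thm:bt} is needed; the argument is routine algebra. The step demanding the most care is the uniform bookkeeping of the factors of $2$ and powers of $x$ introduced by $z=\tfrac12 x^2$, particularly in the $c$-dependent terms where $z^2=\tfrac14 x^4$ must combine correctly with the coefficient structure displayed in Theorem \ref{thm:btu}, so that the common $\tfrac14$ clearing between numerator and denominator is consistent in every term. Once the case $\mathcal{U}_1$ has been verified, the cases $\mathcal{U}_2,\mathcal{U}_3,\mathcal{U}_4$ follow by the same substitution with essentially no additional work.
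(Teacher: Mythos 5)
Your approach is exactly the paper's: its entire proof of Theorem \ref{thm:btu} is the one-line remark that the result follows by applying \eqref{tr41} to the B\"acklund transformations of Theorem \ref{thm:bt}, and your chain-rule identities $zW_0'=\tfrac12xu_0'$, $z^2(W_0')^2=\tfrac14x^2(u_0')^2$ are the right way to carry that out. However, the step you yourself single out as the delicate one does \emph{not} close as you assert. With Theorem \ref{thm:bt} as printed, the $\ga$-dependent term $2cz^2W_0(W_0-1)$ becomes $\tfrac12cx^4u_0(u_0-1)$, i.e.\ $2cx^4u_0(u_0-1)$ after clearing the overall factor of $\tfrac14$, which is not the term $4cu_0(u_0-1)x^2$ appearing in $\mathcal{U}_1$ (and likewise for $\mathcal{U}_2$--$\mathcal{U}_4$). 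A direct check on the seed $u_0=\mu/(x+\mu)$ (with $a=\tfrac12$, $b=\mu$, $c=1$) shows that $\mathcal{U}_1$ as stated correctly produces the $n=1$ algebraic solution $\mu\{(x+\mu)^2-1\}/\{(x+\mu)^3-\mu\}$, whereas the literal transform of the printed $\W_1$ does not; so the formulas of Theorem \ref{thm:btu} are the correct ones and the factor $z^2$ in the $\ga$-terms of Theorem \ref{thm:bt} should evidently read $z$. Your sketch, executed verbatim, would therefore either fail to reproduce the stated $\mathcal{U}_j$ or silently paper over this typo; a complete write-up needs to detect and resolve the discrepancy (e.g.\ by verifying one of the $\mathcal{U}_j$ independently, as above) rather than assert that the factors of $\tfrac14$ cancel uniformly in every term.
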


\comment{\begin{theorem}Supppose that $u=u(x;\a,\b,\ga)$ satisfies \eqref{eq:degPV0} with parameters 
\[ \a=\tfrac12a^2,\qquad \b=-\tfrac12b^2,\qquad \ga=c.\]
Then $u_j=u(x;\a_j,\b_j,\ga_j)$ given by
\begin{subequations}\begin{align}
\mathcal{U}_1:\quad u_1
&=\frac{\ds\left\{x \ux+2(\u-1) (a\u -b)\right\} \left\{x \ux+2(\u-1) (a\u +b)\right\}}{\ds x^2(\ux)^2+4ax\u(\u-1)\ux+4cu(\u-1) x^{2}+4(\u-1)^{2} (a^2 \u^2 -b^2)},\\
\mathcal{U}_2:\quad u_2
&=\frac{\ds\left\{x \ux-2(\u-1) (a\u -b)\right\} \left\{x \ux-2(\u-1) (a\u +b)\right\}}{\ds x^2(\ux)^2-4 ax\u(\u-1)\ux+4cu(\u-1) x^{2}+4(\u-1)^{2} (a^2 \u^2 -b^2)},\\
\mathcal{U}_3:\quad u_3
&=\frac{\ds x^2(\ux)^2+4 bx (\u-1)\ux+4 c x^2 u^{2}(\u-1)-4(\u-1)^{2} (a^2 \u^2 -b^2)}{\ds\left\{x \ux-2(\u-1) (a\u -b)\right\} \left\{x \ux+2(\u-1) (a\u +b)\right\}},\\
\mathcal{U}_4:\quad u_4
&=\frac{\ds x^2(\ux)^2-4 bx (\u-1)\ux+4 c x^2 u^{2}(\u-1)-4(\u-1)^{2} (a^2 \u^2 -b^2)}{\ds\left\{x \ux-2(\u-1) (a\u -b)\right\} \left\{x \ux+2(\u-1) (a\u +b)\right\}},
\end{align}\end{subequations}
satisfy \eqref{eq:degPV0} with parameters
\[\begin{array}{l@{\qquad}l@{\qquad}l}
\a_1=\tfrac12(a+1)^2,& \b_1=-\tfrac12b^2,& \ga_1=c,\\[2.5pt]
\a_2=\tfrac12(a-1)^2,& \b_2=-\tfrac12b^2,& \ga_2=c,\\[2.5pt]
\a_3=\tfrac12a^2,& \b_3=-\tfrac12(b+1)^2,& \ga_3=c,\\[2.5pt]
\a_4=\tfrac12a^2,& \b_4=-\tfrac12(b-1)^2,& \ga_4=c,
\end{array}\] respectively.
\end{theorem}}
\begin{proof}This is easily proved by applying \eqref{tr41} to the \bts\ in Theorem \ref{thm:bt}.
\end{proof}

\subsection{\label{sec:dpvsol1}Algebraic solutions}
Since deg-\PV\ \eqref{eq:degPV} and equation \eqref{eq:degPV0} are related by the transformation \eqref{tr41} then
algebraic solutions of deg-\PV\ \eqref{eq:degPV}, which are rational functions of $\sqrt{z}$, are equivalent to rational solutions of \eqref{eq:degPV0}, which are rational functions of $x$. Therefore we discuss rational solutions of \eqref{eq:degPV0}, which are classified in the following Theorem.

\begin{theorem}{Necessary and sufficient conditions for the existence of
rational solutions of \eqref{eq:degPV0} are either
\beq \label{eq:pvalg1}
(\a,\b,\ga)=\left(\tfrac12(n+\tfrac12),-\tfrac12\mu^2,1\right),\eeq or
\beq \label{eq:pvalg2}
(\a,\b,\ga)=\left(\tfrac12\mu^2,-\tfrac12(n+\tfrac12),-1\right),\eeq where
$n\in\Z$ and $\mu$ is an arbitrary constant.}\end{theorem}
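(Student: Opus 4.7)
The strategy is to push the classification back to \PIII, where Theorem \ref{thm:P3rats} already settles the existence question, and to realise the two families of the theorem explicitly through the $\sigma$-form of Corollary \ref{coro29}.

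For sufficiency, I would start from the rational $\sigma$-solution $\sigma_n(x;\mu,1)$ of $\sPIII$ given by \eqref{sol:sigma_rat}; since $\sigma_n'$ is then a rational function of $x$, Corollary \ref{coro29} composed with \eqref{tr41} yields
\[ u(x)=\frac{\sigma_n'(x;\mu,1)+x}{\sigma_n'(x;\mu,1)-x}, \]
a rational solution of \eqref{eq:degPV0} realising the first family (with $\gamma=1$). The second family (with $\gamma=-1$) then follows automatically from the symmetry $\mathcal{S}_2$, which sends $w\mapsto 1/w$ and $(\alpha,\beta,\gamma)\mapsto(-\beta,-\alpha,-\gamma)$: applying it to the first family produces the second, and rationality of $u$ is preserved under $u\mapsto 1/u$.

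For necessity, suppose $u(x)$ is a rational solution of \eqref{eq:degPV0}. Running \eqref{tr41} together with the $w\mapsto q$ formula of Lemma \ref{lem21} produces a function $q(x)$ that is a rational solution of \PIII\ \eqref{eq:p31} with parameters $A=2(a-b)$, $B=2\epsilon(a+b+1)$ (where $\alpha=\tfrac12 a^2$, $\beta=-\tfrac12 b^2$, $\gamma=\epsilon$). Applying Theorem \ref{thm:P3rats} then forces $\epsilon_1 A+\epsilon_2 B=4n$ for some $n\in\mathbb{Z}$ and independent signs $\epsilon_1,\epsilon_2=\pm 1$. Splitting by the sign combination, the case $\epsilon_1=\epsilon_2\epsilon$ gives $a+\tfrac12\in\mathbb{Z}$ (with $b$ free), while $\epsilon_1=-\epsilon_2\epsilon$ gives $b+\tfrac12\in\mathbb{Z}$ (with $a$ free). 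Combined with the two possible values of $\epsilon$ and the symmetry $\mathcal{S}_2$, these two half-integer conditions reproduce exactly the two parameter families in the statement.

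The principal obstacle is handling the degenerate situations where the correspondence with \PIII\ breaks down: the denominator $xq'-xq^2+(2a+2b+1)q-\epsilon x$ appearing in Lemma \ref{lem21} may vanish identically, and the endpoint $u\equiv 1$ makes the $w\mapsto q$ map singular. Each of these exceptional cases must be checked directly and shown either to lie in one of the two advertised families or to be excluded by \eqref{eq:degPV0}; once that is done, rationality transfers bijectively between deg-\PV\ and \PIII\ and the classification descends cleanly.
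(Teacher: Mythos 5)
The paper offers no proof of this theorem --- it simply cites Gromak, Laine and Shimomura \cite[\S38]{refGLS} --- so there is no argument of the paper's to compare yours against; what follows assesses your proposal on its own terms. Your sufficiency half is sound and uses exactly the machinery the paper sets up elsewhere (the rational $\sigma$-solutions \eqref{sol:sigma_rat}, Corollary \ref{coro29}, and the symmetry \eqref{degPV:S2}).

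The necessity half, however, has a concrete gap: the case bookkeeping does not land on the two advertised families. Pulling a rational solution of \eqref{eq:degPV0} back to \PIII\ via \eqref{tr:wq} and applying Theorem \ref{thm:P3rats} gives $\ep_1(a-b)+\ep_2\ep(a+b+1)=2n$, whose two sign-branches yield ``$a+\tfrac12\in\Z$'' or ``$b+\tfrac12\in\Z$'' \emph{for each} value of $\ep=\ga=\pm1$ --- four cases, of which only two lie in \eqref{eq:pvalg1}--\eqref{eq:pvalg2}. The problematic case ($\ga=+1$, $b+\tfrac12\in\Z$, $a$ arbitrary) cannot be discarded: for instance $u(x)=1+\i x/a$ solves \eqref{eq:degPV0} with $(\a,\b,\ga)=(\tfrac12a^2,-\tfrac18,1)$ for \emph{every} $a$, as direct substitution shows, and for generic $a$ this triple belongs to neither family. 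Your appeal to $\mathcal{S}_2$ does not repair this, since \eqref{degPV:S2} sends $(\a,\b,\ga)\mapsto(-\b,-\a,-\ga)$ and maps this case to $(\tfrac12(m+\tfrac12)^2,-\tfrac12a^2,-1)$, again outside both families; the symmetry that actually relates it to \eqref{eq:pvalg2} is \eqref{degPV:S1}, i.e.\ $x\mapsto\i x$, which flips the sign of $\ga$ while preserving rationality. So either you must read the theorem modulo that rescaling (and say so explicitly, since as literally stated the ``necessary'' direction then fails), or you must show the extra case degenerates --- which it does not. Separately, the exceptional situations you flag at the end (the map \eqref{tr:wq} producing $q\equiv0$, so that $u$ instead satisfies the first-order equation $\tfrac12xu'+(u-1)(au+b)=0$) are not a formality: that equation has rational solutions precisely when $2(a+b)\in\Z$, and these must be reconciled with the claimed parameter list before the classification ``descends cleanly''. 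As written, the proof establishes sufficiency but not the stated necessity.
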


\begin{proof} For details see Gromak, Laine and Shimomura \cite[\S38]{refGLS}; see also \cite{refMCB97,refMurata95}. \end{proof}

We remark that the solutions of \eqref{eq:degPV0} satisfying \eqref{eq:pvalg1} are related to those satisfying \eqref{eq:pvalg2} through the analog of the symmetry \eqref{degPV:S2}.
\comment{\[
\widetilde{\mathcal{S}}:\qquad \widetilde{u}(\widetilde{x})
=\ifrac{1}{u(x)},\qquad \widetilde{x}=x, \qquad
(\widetilde{\a},\widetilde{\b},\widetilde{\ga})
=(-\b,-\a,-\ga).\] }%
Consequently we shall be concerned only with rational solutions of \eqref{eq:degPV0} for the parameters given by \eqref{eq:pvalg1}.

\begin{theorem}\label{thm:pvalgsol1}{The rational solution of \eqref{eq:degPV0} for the parameters 
\eqref{eq:pvalg1} is given by
\beq \label{eq:pvalgsol1}
u_n(x;\mu)
=1-\frac{x S_{n}^2(x;\mu)}{S_{n+1}(x;\mu)S_{n-1}(x;\mu)},\qquad n\geq0, \eeq 
where $S_n(x;\mu)$ is the Umemura polynomial \eqref{def:Umpoly}.
}\end{theorem}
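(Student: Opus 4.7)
The plan is to obtain the rational solution of \eqref{eq:degPV0} by pulling back the known rational solution of $\sPIII$ through the transformation of Corollary~\ref{coro29}, combined with the transformation \eqref{tr41}. This is the ``one-determinant'' route advertised in the remark after Corollary \ref{coro29}, which should give directly the single ratio of Umemura polynomials claimed.

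Concretely, I would start from the rational solution of $\sPIII$ \eqref{eq:PT.HM3.DE34} given in \eqref{sol:sigma_rat},
\[
\sigma_n(x;\mu,1)=2x\deriv{}{x}\{\ln S_n(x;\mu)\}-\tfrac12 x^2-2\mu x-\tfrac14,
\]
for which the corresponding $\sPIII$ parameters are $a=n+\tfrac12$, $b=\mu$, $\ep=1$, matching \eqref{eq:pvalg1} via \eqref{params:degPV1}. Differentiating once gives
\[
\sigma_n'(x)+x+2\mu=\frac{2S_n S_n'+2x\bigl(S_n S_n''-(S_n')^2\bigr)}{S_n^2}.
\]
The right-hand side is precisely the combination appearing on the left of the Umemura recursion \eqref{eq:Srecrel}, which after rearrangement reads
\[
2x\bigl(S_n S_n''-(S_n')^2\bigr)+2S_n S_n'=2(x+\mu)S_n^2-2S_{n+1}S_{n-1}.
\]
Dividing by $S_n^2$ and cancelling $2(x+\mu)$ from both sides yields the key identity
\[
\sigma_n'(x)=x-\frac{2S_{n+1}(x;\mu)\,S_{n-1}(x;\mu)}{S_n^2(x;\mu)}.
\]

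Next I would apply Corollary \ref{coro29} together with \eqref{tr41}, which gives
\[
u_n(x;\mu)=\frac{\sigma_n'(x)+x}{\sigma_n'(x)-x}.
\]
Substituting the closed form for $\sigma_n'$, the numerator becomes $2x-2S_{n+1}S_{n-1}/S_n^2$ and the denominator $-2S_{n+1}S_{n-1}/S_n^2$. Dividing these immediately produces
\[
u_n(x;\mu)=1-\frac{xS_n^2(x;\mu)}{S_{n+1}(x;\mu)\,S_{n-1}(x;\mu)},
\]
which is \eqref{eq:pvalgsol1}. The parameter identification $\a=\tfrac12(n+\tfrac12)^2$, $\b=-\tfrac12\mu^2$, $\ga=1$ follows at once from \eqref{params:degPV1} and the parameters \eqref{sig:params1} of $\sigma_n$.

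I do not expect a serious obstacle: the entire argument is a one-line substitution once the Umemura recursion \eqref{eq:Srecrel} has been recognised as exactly the combination produced by $\sigma_n'+x+2\mu$. The only mildly delicate points are checking that the base cases $n=0$ (where $S_{-1}=S_0=1$) produce $u_0=1-x/S_1(x;\mu)$ consistently and confirming that the denominator $\sigma_n'-x=-2S_{n+1}S_{n-1}/S_n^2$ does not vanish identically, so that the quotient in Corollary \ref{coro29} is well defined; both are clear from the structure of the Umemura polynomials. No new calculation beyond the recursion \eqref{eq:Srecrel} is required.
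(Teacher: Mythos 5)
Your proposal is correct and follows exactly the paper's own route: substituting the rational $\sPIII$ solution \eqref{sol:sigma_rat} into \eqref{p50sigma} and simplifying via the Umemura recursion \eqref{eq:Srecrel}. The paper states this in one line; your write-up simply carries out the same computation explicitly, and the key identity $\sigma_n'(x)=x-2S_{n+1}S_{n-1}/S_n^2$ is verified correctly.
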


\begin{proof} 
Substituting the rational solution of $\sPIII$ \eqref{eq:PT.HM3.DE34} given by \eqref{sol:sigma_rat} into \eqref{p50sigma} 
and then using the reccurence relation \eqref{eq:Srecrel} gives the result.
\end{proof}

\begin{remark}{\rm The Umemura polynomial $S_n(x;\mu)$ satisfies the difference equation
\beq S_{n+1}(x;\mu)S_{n-1}(x;\mu)=x S_{n}^2(x;\mu) + \mu S_n(x;\mu+1)\,S_n(x;\mu-1).
\label{eq:Sndifference}\eeq
Hence from \eqref{eq:pvalgsol1} there are two alternative representations of the rational solution
\[\begin{split}
u_n(x;\mu)&=\frac{\mu S_n(x;\mu+1)\,S_n(x;\mu-1)}{\mu S_n(x;\mu+1)\,S_n(x;\mu-1)+xS_n^2(x;\mu)},\qquad
u_n(x;\mu)=\frac{\mu S_n(x;\mu+1)\,S_n(x;\mu-1)}{S_{n+1}(x;\mu)S_{n-1}(x;\mu)}.
\end{split}\]
}\end{remark}

\subsection{\label{sec:dpvsol2}Bessel function solutions}
\begin{theorem}{Necessary and sufficient conditions for the existence of Bessel function solutions of \eqref{eq:degPV0} are either
\beq \label{eq:pvbes1}
(\a,\b,\ga)=\left(\tfrac12n^2,-\tfrac12\mu^2,\ep\right),\eeq or
\beq \label{eq:pvbes2}
(\a,\b,\ga)=\left(\tfrac12\mu^2,-\tfrac12n^2,-\ep\right),\eeq 
with $\ep=\pm1$, and
where $n\in\Z^+$ and $\mu$ is an arbitrary constant.}\end{theorem}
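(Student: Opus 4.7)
The plan is to reduce the problem to the classification of Bessel function solutions of \PIII\ already recorded in Theorem \ref{thm:P3sf}, using the correspondence between deg-\PV\ and \PIII\ established in Lemma \ref{lem21}. Throughout I work with \eqref{eq:degPV0}, noting that the transformation \eqref{tr41} identifies classical (rational, Bessel, etc.) solutions of \eqref{eq:degPV1} with the analogous classical solutions of \eqref{eq:degPV0}.

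For \emph{necessity}, suppose that $u(x)$ is a Bessel function solution of \eqref{eq:degPV0} with parameters $(\a,\b,\ga)=(\tfrac12 a^2,-\tfrac12 b^2,\ep)$. By Lemma \ref{lem21} (after applying \eqref{tr41}), the function $q(x)$ defined by \eqref{tr:wq} is a solution of \PIII\ \eqref{eq:p31} with parameters $A=2(a-b)$, $B=2\ep(a+b+1)$, and because \eqref{tr:wq} is rational in $u$, $u'$, and $x$, this $q$ is again a Bessel function solution. Theorem \ref{thm:P3sf} then forces
\[
\ep_1 A+\ep_2 B=4n+2,\qquad n\in\Z,\quad \ep_1,\ep_2\in\{\pm1\},
\]
which, after substitution and division by $2$, reads $\ep_1(a-b)+\ep_2\ep(a+b+1)=2n+1$. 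Splitting on the sign of $(\ep_1,\ep_2\ep)$ gives four subcases: $(+,+)$ yields $a=n$; $(-,-)$ yields $a=-(n+1)$; $(+,-)$ yields $b=-(n+1)$; $(-,+)$ yields $b=n$. The first two cases force $a\in\Z$, whence $\a=\tfrac12 a^2=\tfrac12 N^2$ for some $N\in\Z^+$, while $b$ (equivalently $\b=-\tfrac12\mu^2$) is arbitrary, yielding \eqref{eq:pvbes1}. The other two cases force $b\in\Z$, giving $\b=-\tfrac12 N^2$ with $\a=\tfrac12\mu^2$ free, which together with the use of the symmetry $\mathcal{S}_2$ (converted to \eqref{eq:degPV0} via \eqref{tr41}) yields \eqref{eq:pvbes2}.

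For \emph{sufficiency}, I would construct the solutions explicitly. Assume parameters of the form \eqref{eq:pvbes1}, i.e.\ $(\a,\b,\ga)=(\tfrac12 n^2,-\tfrac12\mu^2,\ep)$ with $n\in\Z^+$. Take $\sigma=\sigma_n(x;\mu,\ep)$ from \eqref{sol:sigma_bessel}, which solves $\sPIII$ \eqref{eq:PT.HM3.DE34} for $a=n$, $b=\mu$ and is built out of Bessel functions through the $\tau$-determinant \eqref{def:tau}. Then Corollary \ref{coro29}, together with \eqref{tr41}, gives
\[
u_n(x;\mu,\ep)=\frac{\sigma_n'(x;\mu,\ep)+\ep x}{\sigma_n'(x;\mu,\ep)-\ep x},
\]
which is a solution of \eqref{eq:degPV0} with the prescribed parameters and is manifestly expressible in Bessel functions. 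The case \eqref{eq:pvbes2} follows by applying the deg-\PV\ symmetry $\mathcal{S}_2$ (i.e.\ $u\mapsto 1/u$, $(\a,\b,\ga)\mapsto(-\b,-\a,-\ga)$) to the family just constructed.

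The main obstacle I expect is the careful bookkeeping in the necessity direction: one must verify that a Bessel function solution of \eqref{eq:degPV0} genuinely produces, via \eqref{tr:wq}, a Bessel function solution of \PIII\ (rather than collapsing to a rational solution through an accidental cancellation) and conversely that all four sign combinations in Theorem \ref{thm:P3sf} are realised and collectively reproduce precisely the two parameter families \eqref{eq:pvbes1}--\eqref{eq:pvbes2}. The rest amounts to the linear algebra of the substitution and a direct appeal to the $\sigma$-function construction already at our disposal.
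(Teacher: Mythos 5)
Your proposal is correct and follows essentially the same route as the paper: the paper's proof simply records the parameter correspondences $(A,B)=(2(a-b),2\ep(a+b+1))$ and $(\a,\b,\ga)=(\tfrac12a^2,-\tfrac12b^2,\ep)$ coming from the Hamiltonian system and then invokes Theorem \ref{thm:P3sf}. Your version merely makes explicit the four-fold sign case analysis, the role of the symmetry $\mathcal{S}_2$ in producing \eqref{eq:pvbes2}, and the sufficiency construction via the $\sigma$-function solutions \eqref{sol:sigma_bessel} and Corollary \ref{coro29} --- all of which is consistent with, and somewhat more detailed than, the paper's two-line argument.
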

\begin{proof} From \eqref{p3:params} and \eqref{params:degPV1}, the parameters in \PIII\  \eqref{eq:PIII} and deg-\PV\ \eqref{eq:degPV0}  are given by
\[ (A,B)=\big(2(a-b),2\ep(a+b+1)\big),\qquad (\a,\b,\ga)= (\tfrac12a^2,-\tfrac12b^2,\ep),\]
respectively, for parameters $a$, $b$ and $\ep$.
The result then follows from Theorem \ref{thm:P3sf}.
\end{proof}

\begin{theorem}\label{thm:BesselPV0}The Bessel function solution of \eqref{eq:degPV0} for the parameters 
\[(\a,\b,\ga)=\left(\tfrac12n^2,-\tfrac12\mu^2,\ep\right),\]
is given by
\beq u_n(x;\mu,\ep)
=1+\frac{\ep x^2\tau_{n}^2(x;\mu,\ep)}{\tau_{n+1}(x;\mu,\ep)\,\tau_{n-1}(x;\mu,\ep)},\qquad n\geq1,\eeq
where
\beq \tau_n(x;\mu,\ep) 
=\det\left[\left(x\deriv{}{x}\right)^{\!j+k}\ph_{\mu}(x;\ep) \right]_{j,k=0}^{n-1},\eeq
and $\tau_0(x;\mu,\ep) =1$,
with
\beq \ph_{\mu}(x;\ep) =\begin{cases} c_1 J_{\mu}(x) + c_2 Y_{\mu}(x), &\quad\text{if}\quad \ep=1,\\
c_1 I_{\mu}(x) + c_2 K_{\mu}(x), &\quad\text{if}\quad \ep=-1,
\end{cases}\eeq
$c_1$ and $c_2$ arbitrary constants, and
$J_{\mu}(x)$, $Y_{\mu}(x)$,  $I_{\mu}(x)$ and $K_{\mu}(x)$ Bessel functions. 
\end{theorem}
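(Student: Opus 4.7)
The strategy is to combine three ingredients already assembled above: Corollary~\ref{coro29}, which expresses solutions of \eqref{eq:degPV0} in terms of those of $\sPIII$ via $u=(\sigma'+\ep x)/(\sigma'-\ep x)$; the Bessel $\sigma$-solution \eqref{sol:sigma_bessel}; and the Toda identity of the preceding Lemma. Under the identifications $a=n$, $b=\mu$ from \eqref{sig:params}, the relations \eqref{params:degPV1} give $(\a,\b,\ga)=(\tfrac12 n^2,-\tfrac12\mu^2,\ep)$, matching the hypothesis of the theorem, so the parameter bookkeeping is automatic.

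The central computation is to show
\[
\sigma_n'(x) - \ep x = \frac{2\tau_{n+1}\tau_{n-1}}{x\tau_n^2}.
\]
Starting from $\sigma_n = 2x(\ln\tau_n)' + \tfrac12 \ep x^2 + \mu^2 - n^2 + 2n$ and differentiating, I obtain $\sigma_n'(x) = 2(\ln\tau_n)' + 2x(\ln\tau_n)'' + \ep x$. Expanding the Toda identity $(x\tfrac{\d}{\d x})^2\ln\tau_n = \tau_{n+1}\tau_{n-1}/\tau_n^2$ via the product rule gives $x(\ln\tau_n)' + x^2(\ln\tau_n)'' = \tau_{n+1}\tau_{n-1}/\tau_n^2$; dividing by $x$ and multiplying by $2$ yields the displayed identity. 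Consequently
\[
\sigma_n'(x) + \ep x = \frac{2\bigl[\tau_{n+1}\tau_{n-1} + \ep x^2 \tau_n^2\bigr]}{x\tau_n^2}.
\]

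Substituting these two expressions into the transformation \eqref{p50sigma} of Corollary~\ref{coro29}, the common factor $x\tau_n^2$ in numerator and denominator cancels, leaving
\[
u_n(x;\mu,\ep) = \frac{\sigma_n'(x) + \ep x}{\sigma_n'(x) - \ep x} = 1 + \frac{\ep x^2 \tau_n^2}{\tau_{n+1}\tau_{n-1}},
\]
which is the claimed formula. No genuine obstacle arises, since the two non-trivial analytic inputs (the determinantal Bessel $\sigma$-formula and the Toda identity satisfied by $\tau_n$) have already been supplied; the remainder is a short manipulation. The only point requiring a little care is the initial case $n=1$, which uses the convention $\tau_0=1$ stated with the theorem, and one should also verify that $\tau_{n+1}\tau_{n-1}\not\equiv 0$ so that the transformation \eqref{p50sigma} is well-defined.
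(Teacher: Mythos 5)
Your proposal is correct and follows exactly the route of the paper's own (one-line) proof: substitute the Bessel $\sigma$-solution \eqref{sol:sigma_bessel} into the transformation \eqref{p50sigma} and simplify using the Toda identity \eqref{eq:taun}; your explicit computation of $\sigma_n'\mp\ep x$ via $(x\,\d/\d x)^2\ln\tau_n=\tau_{n+1}\tau_{n-1}/\tau_n^2$ is the intended calculation, and the parameter bookkeeping through \eqref{sig:params} and \eqref{params:degPV1} is also as in the paper.
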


\begin{proof} 
Substituting the Bessel function solution of $\sPIII$ \eqref{eq:PT.HM3.DE34} given by \eqref{sol:sigma_bessel} into \eqref{p50sigma} 
and then using \eqref{eq:taun} gives the result.
\end{proof}

\comment{\begin{corollary} An alternative description of the Bessel function solution of \eqref{eq:degPV0} for the parameters \eqref{eq:pvbes1} 
is
\beq u_n(x;\mu,\ep)=1+\frac{\ep x^2}{\delta^{(2)}\left(\tau_{n}(x;\mu,\ep)\right)}
=1+\frac{\ep x^2\tau_{n}^2(x;\mu,\ep)}{\tau_{n+1}(x;\mu,\ep)\,\tau_{n-1}(x;\mu,\ep)},\eeq
where $\tau_{n}(x;\mu,\ep)$ is given by \eqref{def:tau} and $\delta$ is the Euler operator
\[\delta\equiv x\deriv{}{x}.\]
\end{corollary}
\begin{corollary}An alternative description of the Bessel function solution of \eqref{eq:degPV0} for the parameters \eqref{eq:pvbes1} is
\beq u_n(x;\mu,\ep)=\ep \frac{x^2\tau_{n}(x;\mu+1,\ep)\,\tau_{n}(x;\mu-1,\ep)}{\tau_{n+1}(x;\mu,\ep)\,\tau_{n-1}(x;\mu,\ep)}.\eeq
\end{corollary}}

\begin{corollary} {\label{thm:BesseldegPV} The Bessel function solution of \eqref{eq:degPV0} for the parameters 
\[(\a,\b,\ga)=\left(\tfrac12n^2,-\tfrac12\mu^2,2\ep\right),\]
is given by
\beq \label{wn:soln} w_n(z;\mu,\ep)
=1+\frac{\ep z\mathcal{T}_{n}^2(z;\mu,\ep)}{\mathcal{T}_{n+1}(z;\mu,\ep)\,\mathcal{T}_{n-1}(z;\mu,\ep)},\qquad n\geq1,\eeq
where
\beq \mathcal{T}_n(z;\mu,\ep) 
=\det\left[\left(z\deriv{}{z}\right)^{\!j+k}\psi_{\mu}(z;\ep) \right]_{j,k=0}^{n-1},\eeq
and $\mathcal{T}_0(z;\mu,\ep)=1$,
with
\beq \ph_{\mu}(z;\ep) =\begin{cases} c_1 J_{\mu}(2\sqrt{z}) + c_2 Y_{\mu}(2\sqrt{z}), &\quad\text{if}\quad \ep=1,\\
c_1 I_{\mu}(2\sqrt{z}) + c_2 K_{\mu}(2\sqrt{z}), &\quad\text{if}\quad \ep=-1, 
\end{cases}\eeq
$c_1$ and $c_2$ arbitrary constants, and
$J_{\mu}(x)$, $Y_{\mu}(x)$,  $I_{\mu}(x)$ and $K_{\mu}(x)$ Bessel functions. 
}\end{corollary}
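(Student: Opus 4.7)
The plan is to deduce the corollary directly from Theorem~\ref{thm:BesselPV0} by means of the rescaled change of variables $w(z)=u(x)$ with $x=2\sqrt{z}$, i.e.\ $z=x^2/4$, which is a rescaled version of \eqref{tr41}. The first step is to verify the parameter correspondence: the chain rule gives $w_z=(2/x)u_x$ and $w_{zz}=(4/x^2)u_{xx}-(4/x^3)u_x$, and substituting these into deg-\PV\ \eqref{eq:degPV} and clearing by $x^2/4$ yields \eqref{eq:degPV0} but with its $\ga$--parameter replaced by $\ga/2$. Consequently, if $u_n(x;\mu,\ep)$ from Theorem~\ref{thm:BesselPV0} solves \eqref{eq:degPV0} with parameters $(\tfrac12 n^2,-\tfrac12\mu^2,\ep)$, then $w_n(z;\mu,\ep):=u_n(2\sqrt{z};\mu,\ep)$ solves \eqref{eq:degPV} with parameters $(\tfrac12 n^2,-\tfrac12\mu^2,2\ep)$, which is precisely the hypothesis of the corollary.

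The second step is to rewrite the $\tau_n$--formula in terms of $\mathcal{T}_n$. With $x=2\sqrt{z}$, each of $J_\mu(x)$, $Y_\mu(x)$, $I_\mu(x)$, $K_\mu(x)$ becomes the corresponding Bessel function evaluated at $2\sqrt{z}$, so the seed $\ph_\mu(x;\ep)$ of Theorem~\ref{thm:BesselPV0}, read as a function of $z$, coincides with the seed used in the corollary. Moreover $dz/dx=x/2$ gives the operator identity
\[ x\,\frac{d}{dx}=2z\,\frac{d}{dz},\]
whence $(x\,d/dx)^{j+k}=2^{j+k}(z\,d/dz)^{j+k}$ acting on this common seed. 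Factoring $2^j$ out of the $j$-th row and $2^k$ out of the $k$-th column of the $n\times n$ Hankel determinant defining $\tau_n$ then yields
\[ \tau_n(x;\mu,\ep)=2^{\,2(0+1+\cdots+(n-1))}\,\mathcal{T}_n(z;\mu,\ep)=2^{n(n-1)}\,\mathcal{T}_n(z;\mu,\ep).\]

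A short exponent count using $(n+1)n+(n-1)(n-2)-2n(n-1)=2$ then gives
\[ \frac{\tau_n^2}{\tau_{n+1}\tau_{n-1}}=\frac{1}{4}\,\frac{\mathcal{T}_n^2}{\mathcal{T}_{n+1}\mathcal{T}_{n-1}},\]
and combining this with $x^2=4z$ converts the explicit formula of Theorem~\ref{thm:BesselPV0} into \eqref{wn:soln}, with the convention $\mathcal{T}_0=1$ covering the $n=1$ case. The only real obstacle is bookkeeping of the powers of $2$ arising from the row/column extraction and from $x^2=4z$; these conspire to cancel exactly, and no further identity beyond Theorem~\ref{thm:BesselPV0} is needed.
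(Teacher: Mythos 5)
Your argument is correct and is precisely the routine change-of-variables computation that the paper leaves implicit (the corollary is stated without proof immediately after Theorem~\ref{thm:BesselPV0}): the chain-rule bookkeeping showing that $x=2\sqrt{z}$ doubles the $\ga$-parameter, the row/column extraction giving $\tau_n(x;\mu,\ep)=2^{n(n-1)}\mathcal{T}_n(z;\mu,\ep)$, and the resulting cancellation of the powers of $2$ against $x^2=4z$ all check out. You have also, in effect, corrected a typo in the statement: as your parameter computation makes clear, the corollary concerns deg-\PV\ \eqref{eq:degPV} rather than \eqref{eq:degPV0}.
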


\comment{\[u_1(x;\mu,1)=\frac{\ph_{\mu+1}(x;1) \left[x\ph_{\mu+1}(x;1)-2 \mu \ph_{\mu}(x;1)\right]}{x \ph_{\mu+1}^2(x;1)-2\mu \ph_{\mu+1}(x;1) \ph_{\mu}(x;1)+x \ph_{\mu}^2(x;1)}\]
\[u_1(x;\mu,-1)=\frac{x \ph_{\mu+1}^2(x;-1)+2\mu \ph_{\mu+1}(x;-1) \ph_{\mu}(x;-1)-x \ph_{\mu}^2(x;-1)}{\ph_{\mu+1}(x;-1) \left[x\ph_{\mu+1}(x;-1) +2 \mu \ph_{\mu}(x;-1)\right]}\]}

In the next Lemma, it is shown that the first solution $u_1(x;\mu,\ep)$, the ``seed solution", satisfies a first-order, second-degree equation.

\begin{lemma} The solution of \eqref{eq:degPV0} for the parameters 
\[(\a,\b,\ga)=\left(\tfrac12,-\tfrac12\mu^2,\ep\right),\] is
\beq u_1(x;\mu,\ep)=\frac{\ph_{\mu+1}(x;\ep) \left[x\ph_{\mu+1}(x;\ep)-2\ep \mu \ph_{\mu}(x;\ep)\right]}{x \ph_{\mu+1}^2(x;\ep)-2\ep\mu \ph_{\mu+1}(x;\ep) \ph_{\mu}(x;\ep)+\ep x \ph_{\mu}^2(x;\ep)},\label{sol:u1}\eeq
where
\[\ph_{\mu}(x;\ep) =\begin{cases} c_1 J_{\mu}(x) + c_2 Y_{\mu}(x), &\quad\text{if}\quad \ep=1,\\
c_1 I_{\mu}(x) + c_2 K_{\mu}(x), &\quad\text{if}\quad \ep=-1, 
\end{cases}\]
with $c_1$ and $c_2$ constants, satisfies the first-order, second-degree equation
\beq x^2\left(\deriv{u}{x}\right)^{\!2}-4x u (u-1)\deriv{u}{x}+4\ep x^2 u (u-1)+4 (u-1)^{2} (u^2 -\mu^2 )=0.\label{u1:ode1}\eeq
\comment{and 
\[u_1(x;\mu,-1)=\frac{\ph_{\mu+1}(x;-1) \left[x\ph_{\mu+1}(x;-1) +2 \mu \ph_{\mu}(x;-1)\right]}{x \ph_{\mu+1}^2(x;-1)+2\mu \ph_{\mu+1}(x;-1) \ph_{\mu}(x;-1)-x \ph_{\mu}^2(x;-1)}\]
satisfies
\beq x^2\left(\deriv{u}{x}\right)^{\!2}-4x u (u-1) \deriv{u}{x}-4x^2 u (u-1)+4 (u-1)^{2} (u^2 -\mu^2 )=0,\eeq}
\end{lemma}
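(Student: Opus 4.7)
My approach is to recognise $u_1$ in \eqref{sol:u1} as the $n=1$ case of Theorem~\ref{thm:BesselPV0}, and then to verify \eqref{u1:ode1} by rewriting it in a completed-square form. For $n=1$ one has $\tau_0=1$ and $\tau_1=\ph_\mu$, so Theorem~\ref{thm:BesselPV0} gives $u_1=1+\ep x^2\ph_\mu^2/\tau_2$ with $\tau_2=\ph_\mu\,\delta^2\ph_\mu-(\delta\ph_\mu)^2$, where $\delta=x\,\d/\d x$. Since $\ph_\mu$ solves Bessel's (respectively modified Bessel's) equation, $\delta^2\ph_\mu=(\mu^2-\ep x^2)\ph_\mu$; combining this with the standard Bessel recurrence $\delta\ph_\mu=\mu\ph_\mu-\ep x\ph_{\mu+1}$, a short calculation collapses $\tau_2$ to $-xD$, where $D=x\ph_{\mu+1}^2-2\ep\mu\ph_{\mu+1}\ph_\mu+\ep x\ph_\mu^2$ is the denominator in \eqref{sol:u1}, while $D-\ep x\ph_\mu^2=\ph_{\mu+1}(x\ph_{\mu+1}-2\ep\mu\ph_\mu)$ is the numerator. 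Hence $u_1$ is indeed the $n=1$ Bessel function solution and, by Theorem~\ref{thm:BesselPV0}, already satisfies \eqref{eq:degPV0}.

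Next I would put \eqref{u1:ode1} into a more transparent form by completing the square in $u'$: using the elementary factorisation $u^2(u-1)^2-(u-1)^2(u^2-\mu^2)=\mu^2(u-1)^2$, equation \eqref{u1:ode1} becomes
\[
[\,xu'-2u(u-1)\,]^2=4(u-1)\bigl[\,\mu^2(u-1)-\ep x^2u\,\bigr].
\]
So the lemma reduces to verifying this compact identity for $u=u_1$.

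To carry out the verification, I would set $f=\ph_\mu$, $g=\ph_{\mu+1}$, so that the previous step gives $u_1-1=-\ep xf^2/D$ and $u_1=(D-\ep xf^2)/D$. The two Bessel recurrences $xf'=\mu f-\ep xg$ and $xg'=xf-(\mu+1)g$ express every derivative as a rational function of $(f,g,x)$, so applying them after differentiating $u_1$ by the quotient rule reduces $xu_1'-2u_1(u_1-1)$ to a rational function of $(f,g,x)$ with denominator $D$. Squaring and comparing with $4(u_1-1)[\mu^2(u_1-1)-\ep x^2u_1]$, which has the same denominator $D^2$, the claim becomes a polynomial identity in $f$, $g$, $x$. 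The main obstacle is this last algebra: the cross-terms involving $fg$ and $g^2$ cancel only after repeated use of the two recurrences, and the computation is most conveniently checked with a computer algebra system.
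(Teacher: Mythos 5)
Your proof is correct, and in substance it runs on the same engine as the paper's, though it is packaged differently. The paper also reduces everything to Bessel-function algebra: it sets $\Phi_\mu=\ph_{\mu+1}/\ph_\mu$, rewrites \eqref{sol:u1} as $u_1=1-x/(\ep x\Phi_\mu^2-2\mu\Phi_\mu+x)$, and uses the single Riccati equation $x\Phi_\mu'=\ep x\Phi_\mu^2-(2\mu+1)\Phi_\mu+x$, which is exactly your two recurrences for $(\ph_\mu,\ph_{\mu+1})$ combined. The differences are these: the paper \emph{derives} \eqref{u1:ode1} by positing the general first-order, second-degree ansatz \eqref{eq:u1} and fixing the coefficients $f_j,g_j$ by equating powers of $\Phi_\mu$, whereas you \emph{verify} the stated equation directly after completing the square; and you additionally obtain the closed form \eqref{sol:u1} from the $n=1$ case of Theorem \ref{thm:BesselPV0} via $\tau_2=-xD$, a link the paper's proof takes for granted. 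You are right to keep the direct verification as a separate step, since satisfying the second-order equation \eqref{eq:degPV0} does not by itself imply the first-order relation \eqref{u1:ode1}. One remark: the algebra you propose to delegate to a computer closes easily by hand. With $f=\ph_\mu$, $g=\ph_{\mu+1}$ the recurrences give $xD'+D=2\ep xf^2$, whence $xu_1'-2u_1(u_1-1)=-2\ep x^2ff'/D=2xf(xg-\ep\mu f)/D$, while $4(u_1-1)\left[\mu^2(u_1-1)-\ep x^2u_1\right]=4x^2f^2\left(\mu^2f^2+xD-\ep x^2f^2\right)/D^2=4x^2f^2(xg-\ep\mu f)^2/D^2$, and the two sides of your completed-square identity agree with no cross-term bookkeeping. (Both your argument and the paper's implicitly require the sign convention $\ph_{\mu+1}=c_1I_{\mu+1}-c_2K_{\mu+1}$ in the case $\ep=-1$, so that the recurrences hold for the full linear combination; this is a convention issue, not a gap.)
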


\begin{proof}
Define \[ \Phi_\mu(x;\ep)= \frac{\ph_{\mu+1}(x;\ep)}{\ph_{\mu}(x;\ep)},\]
then from \eqref{sol:u1}
\beq u_1(x;\mu,\ep)=
1-\frac{x}{\ep x\Phi_\mu^2-2\mu\Phi_\mu+ x},
\label{u1:Phi} \eeq
and $\Phi_\mu(x;\ep)$ satisfies the Riccati equation
\beq x\deriv{\Phi_\mu}{x}=\ep x\Phi_\mu^2- (2\mu+1)\Phi_\mu+x.\label{eq:Phi}\eeq
Next we assume that $u_1(x;\mu,\ep)$ satisfies a first-order, second-degree equation of the form
\beq x^2\left(\deriv{u}{x}\right)^{\!2} + x 
\left[f_2(x,\mu,\ep) u^2 + f_1(x,\mu,\ep) u +f_0(x,\mu,\ep)\right] 
\deriv{u}{x} + \sum_{j=0}^4 g_j(x,\mu,\ep)u^j=0,
\label{eq:u1}\eeq
where $\left\{f_j(x,\mu,\ep)\right\}_{j=0}^2$ and $\left\{g_j(x,\mu,\ep)\right\}_{j=0}^4$ are to be determined. Then substituting \eqref{u1:Phi} into \eqref{eq:u1}, using the fact that $\Phi_\mu(x;\ep)$ satisfies \eqref{eq:Phi}
and equating coefficients of powers of $\Phi_\mu$ yields
\begin{align*}
&f_{2} = -4,&& f_{1} = 4,&& f_{0} = 0,\\ 
&g_4=4,&& g_{3} = -8,&& g_{2} = 4\ep x^{2}-4 \mu^{2}+4,&& g_{1} = -4\ep x^{2}+8 \mu^{2},&&
g_{0} = -4 \mu^{2}.\end{align*}
Hence we obtain equation \eqref{u1:ode1}, as required.
\end{proof}

This demonstrates that special function solutions of \eqref{eq:degPV0}, and hence also deg-\PV\ \eqref{eq:degPV}, are different from special function solutions of \PII--\PVI\ where the ``seed solution" satisfies a Riccati equation, a first-order, first-degree equation. 

\begin{remark}{\rm Gromak, Laine and Shimomura \cite[equation (38.7)]{refGLS} give, without proof, a first-order, second-degree equation associated with Bessel function solutions of deg-\PV\ \eqref{eq:degPV}; see also Filipuk and Van Assche \cite[\S2.3]{refFVA13}.}\end{remark}

\comment{\begin{align*}
&\deriv{}{x}\left\{ x^2\left(\deriv{u}{x}\right)^{\!2}-4x u (u-1)\deriv{u}{x}+4\ep x^2 u (u-1)+4 (u-1)^{2} (u^2 -\mu^2 )\right\}\\
&\qquad -\left\{\frac{3u-1}{u(u-1)}\deriv{u}{x}-\frac{2(u-1)}{x}\right\}
\left\{ x^2\left(\deriv{u}{x}\right)^{\!2}-4x u (u-1)\deriv{u}{x}+4\ep x^2 u (u-1)+4 (u-1)^{2} (u^2 -\mu^2 )\right\}\\
&\qquad\qquad=2x\left\{x\deriv{u}{x}-2u(u-1)\right\}\left\{\deriv[2]{u}{x}- \left(\frac{1}{2u} + \frac{1}{u-1}\right)\left(\deriv{u}{x}\right)^{\!2} +
\frac{1}{x}\deriv{u}{x} - \frac{2(u-1)^2(u^2-\mu^2)}{x^2u} - {2\ep u}\right\}\\
 \end{align*}
 \[\begin{split}
\deriv{}{z}\left(\wz+\ep w^2+\frac{\k w}{z}+1\right)&-\frac{1}{w}\left(\wz+\ep w^2+\frac{\k w}{z}+1\right)^{\!2}
+ \frac{(\k+1)w + 2z}{wz}\left(\wz+\ep w^2+\frac{\k w}{z}+1\right)\\
&=\wzz - \frac{1}{w}\!\left(\wz\right)^{\!2} + \frac{1}{z} \,\wz - \frac{\ep(\k-1) w^2}{z} + \frac{\k+1}{z} - w^3 + \frac{1}{w}
\end{split}\]
with $\ep^2=1$.}

\def\cSG{complex sine-Gordon}
\section{\label{sec:app}Applications}
\subsection{Complex sine-Gordon equation}
Consider the two-dimensional complex sine-Gordon 
equation 
\beq \nabla^2\psi + \frac{(\nabla\psi)^2\overline{\psi}}{1-|\psi|^2}+\psi\left(1-|\psi|^2\right)=0,\label{eq:csg1}\eeq 
where $\nabla\psi=(\psi_x,\psi_y)$. 
Making the transformation $$\psi(x,y)=\cos(\ph(x,y))\exp\{\i\eta(x,y)\},\qquad
\overline{\psi}(x,y)=\cos(\ph(x,y))\exp\{-\i\eta(x,y)\},$$
in the \cSG\ equation \eqref{eq:csg1} yields 
\begin{align*}
&\nabla^2\ph+\frac{\cos\ph}{\sin^3\ph}(\nabla \eta)^2-\tfrac12 \sin(2\ph)=0,\\
&\sin(2\ph)\,\nabla^2\eta=4(\ph_x\eta_x+\ph_y\eta_y),
\end{align*}
which is the \textit{Pohlmeyer-Lund-Regge model}\ \cite{refLund77,refLR,refPohl}.

The \cSG\ equation \eqref{eq:csg1} has a separable solution in polar coordinates given by $\psi(r,\theta)=R_n(r)\,\e^{\i n\theta}$, 
where $R_n(r)$ satisfies
\beq \deriv[2]{R_n}{r}+\frac1r\,\deriv{R_n}{r} + \frac{R_n}{1-R_n^2}\left\{\left(\deriv{R_n}{r}\right)^{\!2}-\frac{n^2}{r^2}\right\}+R_n\big(1-R_{n}^2\big)=0,
\label{eq:csg2}\eeq 
We remark that this equation 
also arises in extended quantum systems \cite{refCFH05,refCH05,refCH08}, in relativity \cite{refGMS} and in coefficients in the three-term recurrence relation for orthogonal polynomials with respect to the weight $w(\theta)=\e^{t\cos\theta}$ on the unit circle, see \cite[equation (3.13)]{refWVAbk}. The orthogonal polynomials for this weight on the unit circle are related to unitary random matrices \cite{refPS}.

Equation \eqref{eq:csg2} can be shown to possess the \p\ property, though it is not in the list of 50 equations given in \cite[Chapter 14]{refInce}.
Equation \eqref{eq:csg2} can be transformed to the fifth \p\ equation \eqref{eq:PVgen} in two different ways.
\begin{enumerate}[(i)]
\item If $R_n(r)$ satisfies \eqref{eq:csg2} then making the transformation 
\beq R_n(r)=\frac{1+u_n(z)}{1-u_n(z)},\qquad r=\tfrac12z,\label{tr:ru}\eeq 
yields
\beq \deriv[2]{u_n}z= \left(\frac{1}{2u_n} + \frac{1}{u_n-1}\right)\left(\deriv{u_n}z\right)^{\!2} -
\frac{1}{z}\,\deriv{u_n}z+\frac{n^2(u_n-1)^2(u_n^2-1)}{8z^2u_n} -\frac{u_n(u_n+1)}{2(u_n-1)}, \label{eq:un} \eeq 
which is \PV\ \eqref{eq:PVgen} 
 with $\alpha=\tfrac18n^2$, $\beta=-\tfrac18n^2$, $\gamma=0$ and $\delta=-\tfrac12$. 
\item If $R_n(r)$ satisfies \eqref{eq:csg2} then making the transformation 
\beq R_n(r)=\frac{1}{\sqrt{1-v_n(x)}},\qquad r=\sqrt{x},\label{tr:rv} \eeq
yields 
\beq \deriv[2]{v_n}{x}=\left(\frac1{2v_n}+\frac1{v_n-1}\right)\left(\deriv{v_n}{x}\right)^{\!2}-\frac1x\,\deriv{v_n}{x}-\frac{n^2(v_n-1)^2}{2x^2 v_n}+\frac{v_n}{2x},\label{eq:vn} \eeq which is deg-\PV\ \eqref{eq:degPV} 
with $\a=0$, $\b=-\tfrac12n^2$ and $\ga=\tfrac12$ 
so is equivalent to \PIII\ \eqref{eq:PIII}, as mentioned above.
\end{enumerate}
This shows that solutions of equations \eqref{eq:un} and \eqref{eq:vn} are related by 
\[ v_n(x) =\frac{4u_n(z)}{1+u^2_n(z)} , \qquad x=\tfrac14z^2.\]

The function $R_n(r)$ satisfies the ordinary differential equation \eqref{eq:csg2},
the differential-difference equations
\begin{subequations}\label{eq23}\begin{align}
&\deriv{R_n}{r}+\frac{n}{r}R_{n}-\big(1-R_{n}^2\big)R_{n-1}=0,\label{eq2}\\
&\deriv{R_{n-1}}{r}-\frac{n-1}{r}R_{n-1}+\big(1-R_{n-1}^2\big)R_{n}=0,\label{eq3}
\end{align}\end{subequations}
since solving \eqref{eq2} for $R_{n-1}(r)$ and substituting in \eqref{eq3} yields equation \eqref{eq:csg2}. 
Also eliminating the derivatives in \eqref{eq23}, after letting $n\to n+1$ in \eqref{eq3}, yields the difference equation
\beq R_{n+1}+R_{n-1}={\frac{2n}{r}\,\frac{R_n}{1-R_n^2}},\label{eq4}
\eeq 
which is known as the discrete \p\ II equation \cite{refNP,refPS}.

If $n=1$ then equations \eqref{eq23} have the solution
\begin{equation*}
R_0(r)=1,\qquad R_1(r)
=\frac{C_1I_1(r)-C_2K_1(r)}{C_1I_0(r)+C_2K_0(r)},
\end{equation*} where $I_0(r)$, $K_0(r)$, $I_1(r)$ and $K_1(r)$ are the imaginary Bessel functions and $C_1$ and $C_2$ are arbitrary constants. For solutions which are bounded at $r=0$ then necesssarily $C_2=0$ and so
\beq \label{ph0ph1}
R_0(r)=1,\qquad R_1(r)
=\frac{I_1(r)}{I_0(r)}.
\eeq 
Hence one can use the difference equation \eqref{eq4} to determine $R_n(r)$, for $n\geq2$, which yields 
\begin{align*}
R_2(r)&=-\frac{rR_1^2(r)+2R_1(r)-r}{r\left[R_1^2(r)-1\right]},\\
R_3(r)&=\frac{R_1^3(r)-rR_1^2(r)-2R_1(r)+r}{R_1(r)\left[rR_1^2(r)+R_1(r)-r\right]},\\
R_4(r)&=\frac{r(r^2+5)R_1^4(r)+4R_1^3(r)-2r(r^2+3)R_1^2(r)+r^3}{r\left[(r^2-1)R_1^4(r)+4rR_1^3(r)-2(r^2+2)R_1^2(r)-4rR_1(r)+r^2 \right]}.
\end{align*}

These results suggest that \eqref{eq:csg2} should be solvable in terms of \PIII\ \eqref{eq:PIII}, which is illustrated in the following theorem.

\begin{theorem}\label{thm:6.1} If $R_n(r)$ satisfies \eqref{eq:csg2} then 
$\ds w_n(r)={R_{n+1}(r)}/{R_{n}(r)}$ satisfies 
\beq \deriv[2]{w_{n}}{r}= \frac{1}{w_n}\left(\deriv{w_{n}}{r}\right)^{\!2}-\frac{1}{r}\,\deriv{w_{n}}{r}
-\frac{2n}{r}w_n^2+ \frac{2(n+1)}{r} + w_n^3-\frac{1}{w_n},\label{p3}\eeq 
which is \PIII\ \eqref{eq:PIII} with parameters $\a=-2n$ and $\b=2(n+1)$. 
\end{theorem}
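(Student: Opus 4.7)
The plan is to exploit the pair of differential-difference relations \eqref{eq2}--\eqref{eq3} to produce a closed first-order system for $(R_n,R_{n+1})$, and then eliminate $R_n$ to obtain a second-order ODE for $w_n=R_{n+1}/R_n$. Concretely, \eqref{eq2} at level $n+1$ and \eqref{eq3} with $n$ replaced by $n+1$ furnish the Miura-type system
\begin{align*}
R_n' &= \frac{n}{r}R_n - (1-R_n^2)R_{n+1},\\
R_{n+1}' &= -\frac{n+1}{r}R_{n+1} + (1-R_{n+1}^2)R_n.
\end{align*}

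From the logarithmic-derivative identity $w_n'/w_n = R_{n+1}'/R_{n+1} - R_n'/R_n$, substituting the two equations above, and using $R_{n+1}^2 = w_n^2 R_n^2$ to reduce everything to the pair $(w_n, R_n^2)$, I obtain the algebraic relation
\[
w_n' + \frac{2n+1}{r}w_n = 1 + w_n^2 - 2 w_n^2 R_n^2,
\]
which solves for $R_n^2$ explicitly as a rational expression in $w_n$, $w_n'$ and $r$. This is the first-order Miura reduction of the system.

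To close the chain, I differentiate this expression for $R_n^2$ with respect to $r$, and independently compute $(R_n^2)'=2R_nR_n'$ using the first equation of the system, which gives $(R_n^2)' = \frac{2n}{r}R_n^2 - 2(1-R_n^2)R_n^2\,w_n$. Equating these two expressions for $(R_n^2)'$ and substituting the explicit form of $R_n^2$ everywhere on the right yields a second-order ODE for $w_n$ alone. The main obstacle is the resulting algebraic simplification: after clearing denominators and expanding, one must verify that the $r^{-2}$ contributions cancel and that the surviving coefficients of $(w_n')^2$, $w_n w_n'/r$, $w_n^3/r$, $w_n^4$, $1/w_n$, etc., match exactly those of \PIII. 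Once this routine but careful calculation is carried out, the result collapses to
\[
\deriv[2]{w_n}{r}= \frac{1}{w_n}\left(\deriv{w_n}{r}\right)^{\!2}-\frac{1}{r}\deriv{w_n}{r}-\frac{2n}{r}w_n^2+ \frac{2(n+1)}{r} + w_n^3-\frac{1}{w_n},
\]
which is \PIII\ \eqref{eq:PIII} with $A=-2n$ and $B=2(n+1)$, as claimed.
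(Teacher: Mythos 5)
Your derivation is correct, and I have checked the algebra: with $N=1+w_n^2-w_n'-\tfrac{(2n+1)}{r}w_n$ one indeed gets $R_n^2=N/(2w_n^2)$, and equating the two expressions for $(R_n^2)'$ reduces, after cancellation of the $(2n+1)w_n^2/r^2$ terms on both sides, to
\[
-w_n''\,w_n=1-(w_n')^2-\frac{(2n+2)w_n}{r}+\frac{w_nw_n'}{r}-w_n^4+\frac{2nw_n^3}{r},
\]
which is exactly \eqref{p3}. The comparison with the paper is somewhat moot here: the paper does not prove Theorem \ref{thm:6.1} at all, but simply cites Hisakado, Tracy--Widom and Van Assche. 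Your Miura-type elimination from the differential-difference pair \eqref{eq23} is essentially the standard argument in those references, so you have supplied the self-contained proof that the paper delegates; that is a genuine gain in completeness rather than a different method. Two small points of hygiene. First, your attribution is transposed: \eqref{eq2} at level $n+1$ gives the equation for $R_{n+1}'$ (with the $-\tfrac{n+1}{r}R_{n+1}$ term), while \eqref{eq3} with $n\to n+1$ gives the one for $R_n'$; the system you wrote down is nevertheless the correct one. Second, the steps divide by $w_n$ and use $R_{n+1}^2=w_n^2R_n^2$, so the argument implicitly assumes $R_n\not\equiv0$ (true for the vortex solutions under discussion, but worth a word, since $R_{-1}$ or a trivial branch would make $w_n$ undefined).
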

\begin{proof} See Hisakado \cite{refHisakado} and Tracy \& Widom \cite{refTW99}; see also \cite[\S3.1]{refWVAbk}. \end{proof}

We note that since the parameters in \eqref{p3} satisfy $-\a+\b=4n+2$, with $n\in\Integer^+$, then the equation has solutions expressible in terms of the modified Bessel functions $I_0(r)$ and $I_1(r)$ (as well as $K_0(r)$ and $K_1(r)$, but these are not needed here). 

\begin{theorem}{\label{thm:6.2}Let $\tau_n(r;\nu)$ be the $n\times n$ determinant
\beq \label{def:taun}
\tau_n(r;\nu)=\det\left[\left(r\deriv{}{r}\right)^{j+k}I_{\nu}(r)\right]_{j,k=0}^{n-1},
\comment{\left|\begin{matrix} \psi_\nu & \psi_\nu^{(1)} & \ldots & \psi_\nu^{(n-1)}\\
\psi_\nu^{(1)} & \psi_\nu^{(2)} & \ldots & \psi_\nu^{(n)}\\
\vdots & \vdots & \ddots & \vdots \\
\psi_\nu^{(n-1)} & \psi_\nu^{(n)} & \ldots & \psi_\nu^{(2n-2)}
\end{matrix}\right|,\qquad \psi_\nu^{(k)}=\left(r\deriv{}{r}\right)^k\psi_\nu,}
\eeq 
with $
I_{\nu}(r)$ the {modified Bessel function}, then
\beq 
w_n(r;\nu)=\frac{\tau_{n+1}(r;\nu+1)\,\tau_{n}(r;\nu)}{\tau_{n+1}(r;\nu)\,\tau_{n}(r;\nu+1)}
\equiv\deriv{}{z}\left\{\ln\frac{\tau_{n+1}(z;\nu)}{\tau_{n}(z;\nu+1)}\right\}-\frac{n+\nu}{z},\qquad n\geq0,
\eeq 
satisfies \PIII\ \eqref{eq:PIII} with $\a=2(\nu-n)$ and $\b=2(\nu+n+1)$. 
}\end{theorem}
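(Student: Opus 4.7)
My plan is to prove Theorem \ref{thm:6.2} by reducing it to Okamoto's inversion formula \eqref{tr:sigma-pq} of Theorem \ref{thm:22} applied to the Bessel-function solution of \sPIII\ (the sigma-level precursor of Theorem \ref{thm:BesselPV0}), with the tau functions $\tau_n(r;\nu)$ identified as the specialisation $c_1=1$, $c_2=0$, $\ep=-1$, $\mu=\nu$ of the $\tau_n(x;\mu,\ep)$ studied in Section \ref{sec:p3sol2}.

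The first step is to observe that \sPIII\ \eqref{eq:PT.HM3.DE34} possesses the symmetry $(a,b,\ep)\mapsto(b,a,-\ep)$ — the coefficient $\ep(a^2-b^2)$ is invariant under this joint change, while $(a^2+b^2)$ is manifestly so — so the same
\[ \sigma_n(r)=2r\deriv{}{r}\ln\tau_n(r;\nu)-\tfrac12 r^2+\nu^2-n^2+2n \]
solves \sPIII\ both with parameters $(a,b,\ep)=(n,\nu,-1)$ and with $(\nu,n,1)$. Feeding the second labelling into Okamoto's formula \eqref{tr:sigma-pq} produces a solution $q_n(r)$ of \PIII\ \eqref{eq:PIII} with parameters $A=2(\nu-n)$, $B=2(\nu+n+1)$, exactly those of Theorem \ref{thm:6.2}.

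The second step is to reshape this $q_n(r)$ into the two equivalent forms stated for $w_n$. Substituting $\sigma_n'$ and $\sigma_n''$ into \eqref{tr:sigma-pq} and repeatedly applying the Toda relation \eqref{eq:taun} to collapse $\left(r\deriv{}{r}\right)^{\!2}\ln\tau_n$ into $\tau_{n+1}\tau_{n-1}/\tau_n^2$ reduces both numerator and denominator to bilinear polynomials in $\tau_{n\pm1}(r;\nu)$ and $\tau_n(r;\nu)$. A companion bilinear identity, obtained by applying the Bessel recurrence $I_{\nu}'(r)=I_{\nu+1}(r)+\nu I_\nu(r)/r$ column-by-column inside the Wronskian defining $\tau_n$, then links $\tau_n(r;\nu+1)$ to $\tau_n(r;\nu)$ and $\tau_{n\pm1}(r;\nu)$; inserting it yields the ratio form $w_n=\tau_{n+1}(r;\nu+1)\,\tau_n(r;\nu)/[\tau_{n+1}(r;\nu)\,\tau_n(r;\nu+1)]$, after which the logarithmic-derivative form follows from one further Bessel manipulation. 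As a sanity check, at $n=0$ one has $\tau_0=1$ and $\tau_1(r;\nu)=I_\nu(r)$, so $w_0=I_{\nu+1}/I_\nu=(\ln I_\nu)'-\nu/r$, and this is readily seen to satisfy the Riccati reduction of \PIII\ supplied by Theorem \ref{thm:P3sf} with $A=2\nu$, $B=2\nu+2$ (choosing $\ep_1=-1$, $\ep_2=1$).

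The main obstacle will be the bookkeeping for the bilinear identity connecting the $\nu$-shift and the $n$-shift of $\tau_n$: both shifts are needed in compatible form to produce the ratio and logarithmic-derivative expressions for $w_n$ simultaneously, and to reconcile the signs introduced by Okamoto's formula with the $\ep=-1$ choice dictated by the modified Bessel functions $I_\nu$, $K_\nu$. Once that identity is isolated, the remainder of the argument is a direct algebraic verification closely paralleling the proof of Theorem \ref{thm:BesselPV0}.
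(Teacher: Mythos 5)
The paper offers no proof of Theorem \ref{thm:6.2} beyond citing Forrester--Witte and Masuda, so your attempt to derive it from the paper's own $\sigma$-function machinery is necessarily a different route, and its skeleton is sound. The invariance of \sPIII\ \eqref{eq:PT.HM3.DE34} under $(a,b,\ep)\mapsto(b,a,-\ep)$ is genuine, so the Bessel solution \eqref{sol:sigma_bessel} with $(a,b,\ep)=(n,\nu,-1)$ and $c_1=1$, $c_2=0$ can indeed be re-read with labels $(\nu,n,1)$; Okamoto's inversion \eqref{tr:sigma-pq} together with \eqref{p3:params} then produces a solution of \PIII\ \eqref{eq:PIII} with precisely $A=2(\nu-n)$ and $B=2(\nu+n+1)$, and your $n=0$ consistency check against the Riccati case of Theorem \ref{thm:P3sf} (with $\ep_1=-1$, $\ep_2=1$) is correct.

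The gap is that the actual content of the theorem --- identifying that solution with the ratio $\tau_{n+1}(r;\nu+1)\,\tau_n(r;\nu)/\bigl[\tau_{n+1}(r;\nu)\,\tau_n(r;\nu+1)\bigr]$ and with the equivalent form $\deriv{}{z}\ln\bigl[\tau_{n+1}(z;\nu)/\tau_n(z;\nu+1)\bigr]-(n+\nu)/z$ --- hinges entirely on a bilinear contiguity identity relating the $\nu$-shifted determinants $\tau_n(r;\nu\pm1)$ to $\tau_{n\pm1}(r;\nu)$ and $\tau_n(r;\nu)$, and you assert this identity rather than establish it. The Toda relation \eqref{eq:taun} only controls the $n$-shift at fixed $\nu$; the $\nu$-shift requires a separate Jacobi/Pl\"ucker-type determinant argument built on the Bessel recurrences, and this is exactly what the cited references supply and where essentially all of the work lies. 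Note also that the equality of the two displayed forms of $w_n$ is itself one of these nontrivial bilinear identities, not a routine consequence of the definition. Until that identity is written down and verified, what you have is a correct and well-organised programme, not yet a proof.
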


\begin{proof} See, for example, \cite{refFW02,refMasuda07}. 
\end{proof}

\begin{theorem}{\label{thm:6.3}Equation \eqref{eq:csg2}
has the solution
\beq \label{eq:Rn} R_n(r)=\frac{\tau_{n}(r;1)}{\tau_{n}(r;0)},\eeq 
where $\tau_n(r;\nu)$ is the determinant given by \eqref{def:taun}.}\end{theorem}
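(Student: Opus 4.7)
The plan is to combine the two preceding theorems with an induction on $n$, using the difference equation \eqref{eq4} (or equivalently the ratio $R_{n+1}/R_n$) to propagate a determinantal formula. The starting observation is that the conjectured formula \eqref{eq:Rn} implies a clean expression for this ratio, which should match the Bessel-determinant \PIII\ solutions of Theorem~\ref{thm:6.2} for a specific value of $\nu$.

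First I would match parameters. By Theorem~\ref{thm:6.1}, $w_n(r)=R_{n+1}(r)/R_n(r)$ satisfies \PIII\ with parameters $\a=-2n$, $\b=2(n+1)$. Comparing with the parameter pair $(\a,\b)=(2(\nu-n),\,2(\nu+n+1))$ in Theorem~\ref{thm:6.2} forces $\nu=0$. Since \eqref{ph0ph1} selects the branch bounded at $r=0$ (i.e.\ $C_2=0$ in \eqref{psinu}), the relevant determinant is the one built from $I_\nu(r)$ alone, as written in \eqref{def:taun}. Specialising Theorem~\ref{thm:6.2} to $\nu=0$ therefore gives
\beq\label{eq:ratioRn}
\frac{R_{n+1}(r)}{R_n(r)}=w_n(r;0)=\frac{\tau_{n+1}(r;1)\,\tau_n(r;0)}{\tau_{n+1}(r;0)\,\tau_n(r;1)}.
\eeq

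Next I would carry out the induction. Using the convention that the empty determinant equals $1$, so $\tau_0(r;\nu)=1$, and $\tau_1(r;\nu)=I_\nu(r)$, the base case reproduces \eqref{ph0ph1}:
\[ R_0(r)=1=\frac{\tau_0(r;1)}{\tau_0(r;0)},\qquad R_1(r)=\frac{I_1(r)}{I_0(r)}=\frac{\tau_1(r;1)}{\tau_1(r;0)}. \]
Assuming \eqref{eq:Rn} holds at index $n$, multiplying by \eqref{eq:ratioRn} telescopes cleanly,
\[ R_{n+1}(r)=R_n(r)\cdot\frac{\tau_{n+1}(r;1)\,\tau_n(r;0)}{\tau_{n+1}(r;0)\,\tau_n(r;1)}=\frac{\tau_n(r;1)}{\tau_n(r;0)}\cdot\frac{\tau_{n+1}(r;1)\,\tau_n(r;0)}{\tau_{n+1}(r;0)\,\tau_n(r;1)}=\frac{\tau_{n+1}(r;1)}{\tau_{n+1}(r;0)}, \]
closing the induction.

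The main obstacle is justifying that the pure-$I_\nu$ determinant is exactly the correct specialisation at every step. Theorem~\ref{thm:6.2} provides a two-parameter family of \PIII\ solutions (since $\ph_\mu$ in \eqref{def:tau} admits both $I_\nu$ and $K_\nu$), so one must argue that the bounded solutions $R_n$ consistently pick out the $K_\nu$-free branch. This follows because the recurrence \eqref{eq4} together with the initial data \eqref{ph0ph1} uniquely determines $R_n$ for $n\ge 0$, so the formula produced by the induction above is necessarily the right one; a direct alternative would be to verify \eqref{eq:Rn} against the bilinear identity \eqref{eq:taun} for $\tau_n(r;\nu)$ (with $r$ playing the role of $x$), which encodes the Toda-type recursion underlying Theorem~\ref{thm:6.2} and would give an independent proof without invoking Theorem~\ref{thm:6.1}.
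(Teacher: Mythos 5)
Your proof is correct and follows essentially the same route as the paper: an induction on $n$ with base case \eqref{ph0ph1} and inductive step $R_{n+1}=w_n(r;0)\,R_n$, combining Theorems \ref{thm:6.1} and \ref{thm:6.2} at $\nu=0$. The only difference is that you make explicit the parameter matching and the selection of the $K_\nu$-free branch, points the paper's proof leaves implicit.
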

\begin{proof} The proof is straightforward using induction. From \eqref{ph0ph1} we have
\[ R_1(r)=\frac{I_1(r)}{I_0(r)}=\frac{\tau_{1}(r;1)}{\tau_{1}(r;0)},\]
so \eqref{eq:Rn} is true if $n=1$.
Assuming \eqref{eq:Rn} holds then from Theorems \ref{thm:6.1} and \ref{thm:6.2} 
\[ R_{n+1}(r)= w_n(r;0) R_n(r) =\frac{\tau_{n+1}(r;1)\,\tau_{n}(r;0)}{\tau_{n+1}(r;0)\,\tau_{n}(r;1)}\times\frac{\tau_{n}(r;1)}{\tau_{n}(r;0)}
= \frac{\tau_{n+1}(r;1)}{\tau_{n+1}(r;0)},\]
as required, and so the result follows by induction.
\end{proof}

\begin{corollary}{\rm Equations \eqref{eq:un} and \eqref{eq:vn} have the Bessel function solutions
\[ u_n(z) = \frac{\tau_n(\tfrac12z;1)+\tau_n(\tfrac12z;0)}{\tau_n(\tfrac12z;1)-\tau_n(\tfrac12z;0)},\qquad
v_n(x) = 1- \frac{\tau_n^2(\sqrt{x};0)}{\tau_n^2(\sqrt{x};1)},\]
respectively, with $\tau_n(r;\nu)$ the determinant given by \eqref{def:taun}.
}\end{corollary}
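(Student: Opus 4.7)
The plan is to reduce both identities to Theorem \ref{thm:6.3} by substituting the tau-function representation of $R_n(r)$ into the two explicit transformations \eqref{tr:ru} and \eqref{tr:rv} that relate the complex sine-Gordon reduction \eqref{eq:csg2} to \eqref{eq:un} and \eqref{eq:vn}. No new differential equation manipulation is needed once one appeals to Theorem \ref{thm:6.3}; the work is essentially algebraic substitution plus a check of one discrete symmetry.

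First, for the $v_n$ identity, I would start from the transformation \eqref{tr:rv}, which is $R_n(r)=1/\sqrt{1-v_n(x)}$ with $r=\sqrt{x}$, and invert it to get
\[ v_n(x)=1-\frac{1}{R_n^2(\sqrt{x})}.\]
Since equation \eqref{eq:vn} was derived precisely as the image of \eqref{eq:csg2} under \eqref{tr:rv}, any solution of \eqref{eq:csg2} produces a solution of \eqref{eq:vn} in this way. Substituting $R_n(r)=\tau_n(r;1)/\tau_n(r;0)$ from Theorem \ref{thm:6.3} then yields exactly the claimed formula $v_n(x)=1-\tau_n^2(\sqrt{x};0)/\tau_n^2(\sqrt{x};1)$.

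For the $u_n$ identity, I would similarly invert \eqref{tr:ru}, namely $R_n(r)=(1+u_n(z))/(1-u_n(z))$ with $r=\tfrac12 z$, to obtain
\[ u_n(z)=\frac{R_n(\tfrac12z)-1}{R_n(\tfrac12z)+1}=\frac{\tau_n(\tfrac12z;1)-\tau_n(\tfrac12z;0)}{\tau_n(\tfrac12z;1)+\tau_n(\tfrac12z;0)}.\]
This differs from the asserted formula by $u\mapsto 1/u$, so the one small observation I must record is that \eqref{eq:un}, which has parameters $\alpha=\tfrac18n^2$, $\beta=-\tfrac18n^2$, $\gamma=0$, $\delta=-\tfrac12$, is invariant under the involution $u\mapsto 1/u$: indeed this is the restriction to \PV\ of the symmetry $\mathcal{S}_2$ of \eqref{degPV:S2}, and the fact that $\alpha$ and $-\beta$ coincide and $\gamma=0$ makes the parameter set fixed. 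Hence the reciprocal of the expression just derived is also a solution, which is exactly $u_n(z)=(\tau_n(\tfrac12z;1)+\tau_n(\tfrac12z;0))/(\tau_n(\tfrac12z;1)-\tau_n(\tfrac12z;0))$.

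The only place requiring more than mechanical substitution is this verification that $u\mapsto 1/u$ preserves \eqref{eq:un} at the chosen parameters; I would either cite the known $\mathcal{S}_2$ symmetry of \PV\ (evaluated at $\alpha=-\beta$, $\gamma=0$) or expand $(1/u)'$ and $(1/u)''$ directly in \eqref{eq:un} and observe that the coefficient structure is symmetric. This is the only step that is not immediate; everything else is a one-line substitution of the formula from Theorem \ref{thm:6.3} into the transformations \eqref{tr:ru} and \eqref{tr:rv}.
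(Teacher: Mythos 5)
Your proposal is correct and follows exactly the route the paper intends: the corollary is stated without proof as an immediate consequence of Theorem \ref{thm:6.3}, obtained by inverting the transformations \eqref{tr:ru} and \eqref{tr:rv} and substituting $R_n(r)=\tau_n(r;1)/\tau_n(r;0)$. Your one non-mechanical step is also a genuine and worthwhile catch: direct substitution into \eqref{tr:ru} yields $u_n(z)=\bigl(\tau_n(\tfrac12z;1)-\tau_n(\tfrac12z;0)\bigr)/\bigl(\tau_n(\tfrac12z;1)+\tau_n(\tfrac12z;0)\bigr)$, the reciprocal of the printed formula, so the corollary as stated is only saved by the invariance of \eqref{eq:un} under $u\mapsto 1/u$ (valid here because $\alpha=-\beta=\tfrac18n^2$ and $\gamma=0$, the $\mathcal{S}_2$-type symmetry you cite). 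In other words, either the corollary carries a typo or it is implicitly using that symmetry; your argument covers both readings, and the $v_n$ formula follows by direct substitution with no such adjustment.
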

\begin{lemma}{The formal asymptotic behaviour of the vortex solution $R_n(r)$ is given by
\begin{align}
R_n(r) &= \frac{r^n}{2^n\,n!}\left\{1-\frac{r^{2}}{4(n+1)} + \mathcal{O}\left(r^{4}\right)\right\},&&\mbox{as}\quad r\to0,\\
R_n(r) &= 1-\frac{n}{2r}-\frac{n^2}{8r^2}-\frac{n(n^2+1)}{16r^3}+\mathcal{O}(r^{-4}),&&\mbox{as}\quad r\to\infty.
\end {align}
}\end{lemma}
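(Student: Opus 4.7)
The plan is to run an induction on $n$ using the discrete \PII\ equation~\eqref{eq4}, which expresses $R_{n+1}$ in terms of $R_n$ and $R_{n-1}$, together with the seed values $R_0(r)=1$ and $R_1(r)=I_1(r)/I_0(r)$ from~\eqref{ph0ph1}. The two asymptotic regimes will be handled in parallel, using the same inductive scheme but with different input expansions for $R_1$.

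For the base cases I would use the Maclaurin series $I_\nu(r)=\sum_{k\ge0}(r/2)^{\nu+2k}/[k!\,\Gamma(\nu+k+1)]$ for small $r$ and Hankel's asymptotic expansion
\[ I_\nu(r)\sim\frac{\e^r}{\sqrt{2\pi r}}\sum_{k\ge0}(-1)^k\frac{a_k(\nu)}{r^k},\qquad a_k(\nu)=\frac{\prod_{j=1}^{k}\bigl(4\nu^2-(2j-1)^2\bigr)}{k!\,8^k},\]
for large $r$. Long division then yields $R_1(r)=r/2-r^3/16+\O(r^5)$ and $R_1(r)=1-1/(2r)-1/(8r^2)-1/(8r^3)+\O(r^{-4})$, matching the claimed formulas at $n=1$.

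The inductive step amounts to substituting the two hypotheses into the right-hand side of~\eqref{eq4} and reading off the first few coefficients of $R_{n+1}$. For the small-$r$ regime, the ansatz $R_n(r)=r^n/[2^n n!]\bigl(1-r^2/[4(n+1)]+\O(r^4)\bigr)$ gives $R_n^2=\O(r^{2n})$, so the factor $1/(1-R_n^2)$ is harmless, and the leading $r^{n-1}/[2^{n-1}(n-1)!]$ contributions of $2nR_n/r$ and $R_{n-1}$ cancel exactly, producing $R_{n+1}=r^{n+1}/[2^{n+1}(n+1)!]+\O(r^{n+3})$, as required. For the large-$r$ regime, a short direct computation shows that $1-R_n^2 = n/r + n/(8r^3) + \O(r^{-4})$ (the $1/r^2$ coefficient vanishes identically), so $2nR_n/[r(1-R_n^2)]$ admits a regular expansion in $1/r$; assembling the result gives $R_{n+1}+R_{n-1}=2-n/r-(n^2+1)/(4r^2)+\O(r^{-3})$, consistent with the induction hypothesis at index $n+1$.

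The main obstacle is the bookkeeping in the inductive step for the large-$r$ expansion, where the subtle cancellation in $1-R_n^2$ must be tracked to sufficient order to recover all the stated coefficients of $R_{n+1}$; in particular, one must verify that the $1/r^2$ contribution to $1-R_n^2$ really vanishes so that $1/(1-R_n^2)\sim r/n$ is a clean, regular series in $1/r$. An alternative approach would be to substitute the series/asymptotic expansions of $I_\nu$ directly into the determinantal formula $R_n=\tau_n(r;1)/\tau_n(r;0)$ of Theorem~\ref{thm:6.3}, but this seems to require more intricate estimates on Hankel determinants of the form $\det[(r\,\d/\d r)^{j+k}I_\nu(r)]$ and is less streamlined than the recurrence-based induction.
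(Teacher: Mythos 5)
Your overall strategy --- seed the expansions with $R_0=1$, $R_1=I_1(r)/I_0(r)$ and propagate them through the discrete \p\ II recurrence \eqref{eq4} --- is exactly what the paper does (its proof is the single sentence ``These are determined from \eqref{eq4} and \eqref{ph0ph1}''), and the computations you display are correct: the base-case expansions of $I_1/I_0$, the cancellation $1-R_n^2=\ifrac{n}{r}+\ifrac{n}{(8r^3)}+\O(r^{-4})$, and the resulting coefficients of $R_{n+1}$ all check out. However, as formulated your induction has a genuine logical gap: each application of \eqref{eq4} loses one order of accuracy, so a fixed-length induction hypothesis cannot reproduce itself. Concretely, in the small-$r$ regime your hypothesis gives $\ifrac{2nR_n}{r}$ and $R_{n-1}$ each only to absolute order $\O(r^{n+3})$, so their difference yields $R_{n+1}=\ifrac{r^{n+1}}{2^{n+1}(n+1)!}+\O(r^{n+3})$ --- the coefficient $-\ifrac{1}{4(n+2)}$ of the relative $r^2$ correction, which is part of the statement being proved, is left undetermined for $n\ge2$. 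Likewise at large $r$ you recover only the coefficients of $1$, $r^{-1}$, $r^{-2}$ in $R_{n+1}$ but not the stated $r^{-3}$ coefficient $-\ifrac{(n+1)((n+1)^2+1)}{16}$, and the next inductive step would in turn need $R_{n+1}$ to order $r^{-4}$ to control $1-R_{n+1}^2$.

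The repair is routine but should be stated. Either (i) take as the inductive object the \emph{full} formal series $R_n=\sum_{k\ge0}c_{n,k}r^{n+2k}$ and $R_n=1+\sum_{k\ge1}d_{n,k}r^{-k}$, showing that \eqref{eq4} determines all coefficients at level $n+1$ recursively from those at levels $n$ and $n-1$ (for fixed $n$ this amounts to starting from sufficiently many terms of $R_1$, which is legitimate since $I_1/I_0$ has a complete expansion in both regimes); or (ii) use the ordinary differential equation \eqref{eq:csg2}, which for each fixed $n$ forces the Frobenius series at $r=0$ to have the form $c\,r^{n}\bigl(1-\ifrac{r^2}{(4(n+1))}+\O(r^4)\bigr)$ with a single free constant $c$, and similarly constrains the expansion at infinity; the recurrence and the seed \eqref{ph0ph1} are then needed only to fix the normalisation $c=1/(2^n n!)$ and to select the correct branch at infinity. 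Either route closes the argument; without one of them the inductive step, as written, does not establish all the coefficients asserted in the lemma.
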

\begin{proof}
These are determined from \eqref{eq4} and \eqref{ph0ph1}.
\end{proof} 

\figs{\begin{figure}[!ht]\[ \includegraphics[width=3.5in]{figures/cSG1plot1234}\]
\caption{}\end{figure}}%

\figs{\begin{figure}[!ht]\[\begin{array}{c@{\qquad}c}
 \includegraphics[width=2.25in]{figures/cSG1plot1} & \includegraphics[width=2.25in]{figures/cSG1plot2}\\
 R_1(r) & R_2(r)\\
 \includegraphics[width=2.25in]{figures/cSG1plot3} & \includegraphics[width=2.25in]{figures/cSG1plot4} \\
 R_3(r) & R_4(r)
\end{array}\]
\caption{}\end{figure}}

\def\bb{\nu}
\subsection{\label{ssec:gencharlier}Generalised Charlier polynomials}
The {Charlier polynomials} $C_n(k;z)$ 
are a family of orthogonal polynomials introduced in 1905 by Charlier \cite{refChar} given by
\begin{equation}
C_n(k;z) =\HyperpFq{2}{0}\left(-n,-k;;-\ifrac{1}{z}\right)=(-1)^nn!L_n^{(-1-k)}\left(-\ifrac{1}{z}\right),\quad z>0,
\label{charlier}\end{equation}
where $\HyperpFq{2}{0}(a,b;;z)$ is the hypergeometric function 
and $L_n^{(\a)}(z)$ is the {associated Laguerre polynomial}, see, for example, 
\cite[\S18.19]{refDLMF}.
The Charlier polynomials are orthogonal on the lattice $\N$ with respect to the Poisson distribution 
\begin{equation} \w(k) =\frac{z^k}{k!},\qquad z>0,\label{charlierw}\end{equation}
and satisfy the orthogonality condition
\[\sum_{k=0}^\infty C_m(k;z)C_n(k;z)\frac{z^k}{k!}=\frac{n!\,\e^{z}}{z^n}\delta_{m,n}.\]

Smet and Van Assche \cite{refSmetvA} generalized the Charlier weight \eqref{charlierw}
with one additional parameter through the weight function
\[\w(k;\bb)=\frac{\Gamma(\bb+1)\,z^k}{\Gamma(\bb+k+1)\,\Gamma(k+1)},\qquad z>0,\] 
with $\bb$ a parameter such that $\bb>-1$. This gives the discrete weight 
\beq\w(k;\bb) = \frac{z^k}{(\bb+1)_k\,k!},\qquad z>0,\label{gencharlierw}\eeq
where $(\bb+1)_k
=\Gamma(\bb+1+k)/\Gamma(\bb+1)$ is the Pochhammer symbol, on the lattice $\N$.
Discrete orthogonal polynomials are characterized by the discrete Pearson equation
\beq\Delta\big[\sigma(k)\w(k)\big]=\tau(k)\w(k),\label{eq:disPearson}\eeq
where $\Delta$ is the forward difference operator
\[\Delta f(k)=f(k+1)-f(k).\]
The weight \eqref{gencharlierw} satisfies the discrete Pearson equation \eqref{eq:disPearson} with 
\[\sigma(k)=k(k+\bb),\qquad\tau(k)=-k^2-\bb k+z,\]
and so the generalised Charlier polynomials are semi-classical orthogonal polynomials since $\tau(k)$ is a polynomial with deg$(\tau)>1$.
The special case $\nu=0$ was first considered by Hounkonnou, Hounga and Ronveaux \cite{refHHR} and later studied by Van Assche and Foupouagnigni \cite{refVAF03}.  

For the generalised Charlier weight \eqref{gencharlierw}, the orthonormal polynomials $p_n(k;z)$ satisfy the orthogonality condition
\[\sum_{k=0}^\infty p_m(k;z)p_n(k;z)\frac{z^k}{(\bb+1)_k\,k!}=\delta_{m,n},\]
and the three-term recurrence relation
\beq
kp_n(k;z)=a_{n+1}(z)p_{n+1}(k;z)+b_n(z)p_n(k;z)+a_n(z)p_{n-1}(k;z),
\label{deq:rr1}\eeq
with $p_{-1}(k;z)=0$ and $p_0(k;z)=1$. Our interest is in 
the coefficients $a_n(z)$ and $b_n(z)$ in the recurrence relation \eqref{deq:rr1}.

Smet and Van Assche \cite[Theorem 2.1]{refSmetvA} proved the following theorem for the recurrence relation coefficients associated with the generalised Charlier weight \eqref{gencharlierw}.

\begin{theorem}{The recurrence relation coefficients $a_n(z)$ and $b_n(z)$ for orthonormal polynomials associated with the generalised Charlier weight \eqref{gencharlierw} on the lattice $\N$ satisfy the discrete system
\beq\label{SVAsys}\begin{array}{l}
(a_{n+1}^2-z)(a_{n}^2-z)=z(b_n-n)(b_n-n+\bb),\\[5pt] \ds b_n+b_{n-1}-n+\bb+1=\ifrac{nz}{a_n^2},
\end{array}\eeq
with initial conditions
\beq \label{SVAsysic}
a_0^2=0,\qquad b_0=\frac{\sqrt{z}\,\BesselI{\bb+1}(2\sqrt{z})}{\BesselI{\bb}(2\sqrt{z})}=z\deriv{}{z}\big\{\ln\BesselI{\bb}(2\sqrt{z})\big\}-\frac{\bb}{2},
\eeq
with $\BesselI{\nu}(k)$ the {modified Bessel function}.}\end{theorem}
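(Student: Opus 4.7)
The plan is to follow Smet and Van Assche and derive \eqref{SVAsys} as a pair of ``string equations'' produced by feeding the discrete Pearson equation \eqref{eq:disPearson}, with $\sigma(k)=k(k+\nu)$ and $\tau(k)=-k^2-\nu k+z$, into orthogonality-based moment identities. Throughout I would work with the monic family $P_n(k;z)=k^n+\mathcal{O}(k^{n-1})$ rather than the orthonormal $p_n$, so that $a_n^2=h_n/h_{n-1}$ with $h_n=\sum_{k\geq 0}P_n(k)^2\w(k;\nu)$, and $b_n$ is the diagonal coefficient in $kP_n=P_{n+1}+b_nP_n+a_n^2P_{n-1}$. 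The key consequence of the explicit form of $\w(k;\nu)$ is the index-shift identity $\sigma(k)\w(k;\nu)=z\,\w(k-1;\nu)$, equivalent to the stated Pearson equation, which allows one to freely translate $k\mapsto k\pm 1$ inside sums.

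To produce the product identity in \eqref{SVAsys} I would evaluate the shifted inner product $\sum_{k\geq 0}P_{n+1}(k-1)P_n(k-1)\w(k;\nu)$ in two ways. On one hand, the shift identity rewrites this as $z\sum_{j\geq 0}P_{n+1}(j)P_n(j)\w(j;\nu)/\sigma(j+1)$; after clearing the quadratic factor $\sigma(j+1)=(j+1)(j+1+\nu)$ by repeated application of $kP_n=P_{n+1}+b_nP_n+a_n^2P_{n-1}$ and then using orthogonality, only terms involving $(b_n-n)$ and $(b_n-n+\nu)$ survive, yielding $z(b_n-n)(b_n-n+\nu)\,h_{n-1}$. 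On the other hand, expanding $P_{n+1}(k-1)$ and $P_n(k-1)$ in the basis $\{P_j(k)\}$ through the three-term recurrence leaves only the top-degree crossterm, which contributes $(a_{n+1}^2-z)(a_n^2-z)\,h_{n-1}$. Equating gives the first line of \eqref{SVAsys}. The second line comes analogously from the first $\tau$-moment $\sum_k(kP_n)P_n\w$, paired with Abel summation-by-parts applied to $\sum_k P_n^2\,\Delta[\sigma\w]$; the boundary terms vanish at $k=0$ because $\sigma(0)=0$ and at $k=\infty$ because $\w$ decays super-exponentially. The main obstacle is not any one computation but the bookkeeping: after each $k$-shift one has to re-expand everything in $\{P_j\}$ by the recurrence and cancel until the identity closes on $a_n^2$ and $b_n$ alone.

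For the initial conditions, $a_0^2=h_0/h_{-1}$ is zero by convention. To compute $b_0=\sum_{k\geq 0}k\,\w(k;\nu)\big/\sum_{k\geq 0}\w(k;\nu)$, I would match the series termwise against the modified Bessel series $\BesselI{\nu}(x)=\sum_{k\geq 0}(x/2)^{2k+\nu}/\{k!\,\Gamma(\nu+k+1)\}$ to obtain
\[\sum_{k=0}^\infty \w(k;\nu)=\Gamma(\nu+1)\,z^{-\nu/2}\BesselI{\nu}(2\sqrt{z}),\qquad \sum_{k=0}^\infty k\,\w(k;\nu)=\Gamma(\nu+1)\,z^{(1-\nu)/2}\BesselI{\nu+1}(2\sqrt{z}),\]
the second via the shift $k\mapsto k+1$. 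The ratio is $b_0=\sqrt{z}\,\BesselI{\nu+1}(2\sqrt{z})\big/\BesselI{\nu}(2\sqrt{z})$, and the equivalent logarithmic-derivative form in \eqref{SVAsysic} then follows from the Bessel identity $\BesselI{\nu+1}(x)=\BesselI{\nu}'(x)-(\nu/x)\BesselI{\nu}(x)$ applied at $x=2\sqrt{z}$.
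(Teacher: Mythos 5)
Your overall strategy --- extracting the Laguerre--Freud system from the discrete Pearson equation via orthogonality and index shifts --- is the right one, and it is the strategy of the reference on which the paper relies: the paper itself offers no proof of this theorem, only the citation to Smet and Van Assche \cite{refSmetvA}. Your treatment of the initial conditions is correct and complete: the two series evaluations against the modified Bessel series, the ratio $b_0=\sqrt{z}\,I_{\nu+1}(2\sqrt{z})/I_{\nu}(2\sqrt{z})$, and the passage to the logarithmic-derivative form via $I_{\nu+1}(x)=I_{\nu}'(x)-(\nu/x)I_{\nu}(x)$ all check out, as does the key shift identity $\sigma(k)\w(k;\nu)=z\,\w(k-1;\nu)$.

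The derivation of the system \eqref{SVAsys} itself, however, has genuine gaps: as described, neither evaluation of your shifted inner product goes through. On the first hand, the substitution $k\mapsto j+1$ puts $\sigma(j+1)$ in the \emph{denominator}, and a sum $\sum_{j}P_{n+1}(j)P_n(j)\w(j)/\sigma(j+1)$ is not accessible to orthogonality: the three-term recurrence lets you absorb polynomial factors multiplying $P_n$, not rational ones, so there is nothing to ``clear''; you also drop the $k=0$ boundary term $P_{n+1}(-1)P_n(-1)\w(0)$. On the other hand, writing $P_{n+1}(k-1)=\sum_j e_jP_j(k)$ and $P_n(k-1)=\sum_j f_jP_j(k)$ and using orthogonality leaves the full diagonal sum $\sum_{j\le n}e_jf_jh_j$, not a single top cross-term; the coefficients $e_j,f_j$ with $j\le n-1$ are precisely the auxiliary quantities that must be introduced and then eliminated, and they are not functions of $a_n$ and $b_n$ alone. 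Your route to the second equation is moreover circular: since $\tau=z-\sigma$, the Pearson equation here is exactly equivalent to the shift identity $\sigma(k+1)\w(k+1)=z\w(k)$, so pairing $\sum_kP_n^2\,\Delta[\sigma\w]$ (after summation by parts, whose boundary terms do vanish as you say) against $\sum_kP_n^2\,\tau\w$ returns the tautology $\langle\tau P_n,P_n\rangle=\langle\tau P_n,P_n\rangle$ rather than $b_n+b_{n-1}-n+\nu+1=nz/a_n^2$. The missing idea is the systematic introduction and elimination of the lower-order connection coefficients between $\{P_j(k)\}$ and $\{P_j(k\pm1)\}$ (equivalently, of auxiliary shifted sums); that elimination is the substance of the cited proof, not bookkeeping that can be waved away.
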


\begin{remark}{\rm The discrete system such as \eqref{SVAsys} for recurrence relation coefficients is sometimes known as the \textit{Laguerre-Freud equations}, cf.~\cite{refBelRon,refHHR,refMagnus86}. 
}\end{remark}
The recurrence relation coefficients $a_n(z)$ and $b_n(z)$ also satisfy the Toda lattice, cf.~\cite[Theorem 3.8]{refWVAbk}
\begin{subequations}\label{eq:toda}
\begin{align}\label{eq:toda1}
&z\deriv{}{z}a_{n}^2=a_{n}^2(b_{n}-b_{n-1}),\\
&z\deriv{}{z}b_n=a_{n+1}^2-a_{n}^2.\label{eq:toda2}
 \end{align}\end{subequations}

Letting $a_n^2(z)=x_n(z)$ and $b_n(z)=y_n(z)$ in \eqref{SVAsys} and \eqref{eq:toda} yields
\begin{align*}
&(x_{n+1}-z)(x_{n}-z)=t(y_n-n)(y_n-n+\bb), &&z\deriv{x_n}{t}=x_{n}(y_n-y_{n-1}), \\ & y_n+y_{n-1}-n+\bb+1=\frac{nz}{x_n},
&&z\deriv{y_n}{z}=x_{n+1}-x_{n}.
\end{align*}
Eliminating $x_{n+1}$ and $y_{n-1}$ in these equations yields the differential system
\begin{subequations}\begin{align}
z\deriv{x_n}{z}&=x_n(2y_n+\bb-n+1)-nz,\label{charsys1a}\\
z\deriv{y_n}{z}&= -x_n+z+\frac{(y_n-n)(y_n-n+\bb)z}{x_n-z}.\label{charsys1b}
\end{align}\end{subequations}
Solving \eqref{charsys1a} for $y_n$ gives
\[ y_n=\frac{z}{2x_n}\deriv{x_n}{z}+\frac{nz}{2x_n}+\frac{n-\bb-1}2,\]
and substituting this into \eqref{charsys1b} yields
\begin{align}
\deriv[2]{x_n}{z}
=\frac12\left(\frac{1}{x_n}+\frac{1}{x_n-z}\right)- \frac{x_n}{z(x_n-z)}\deriv{x_n}{z}
&-\frac{2x_n^2}{z^2}+\frac{4x_n+n^2-\bb^2+1}{2z}-\frac{n^2}{2x_n}+\frac{1-\bb^2}{2(x_n-z)}.
\label{eq:xn}\end{align}
Making the transformation
\beq x_n(z)=\frac{z}{1-w_n(z)}.\label{tr:xw}\eeq
in \eqref{eq:xn} yields
\begin{align}
\deriv[2]{w_n}{z} &= \left(\frac{1}{2w_n} + \frac{1}{w_n-1}\right)\! \left(\deriv{w_n}{z} \right)^{\!2} - \frac{1}{z} \deriv{w_n}{z} 
+\frac{(w_n-1)^2(n^2 w_n^2 -\bb^2)}{2w_nz^2} 
- \frac{2w_n}{z},\label{eq:PT.DE.PV0}\end{align} 
which is deg-\PV\ \eqref{eq:degPV} with parameters $\a=\tfrac12n^2$, $\b=-\tfrac12{\bb}^2$ and $\gamma=-2$.

Solving \eqref{charsys1b} for $x_n$ gives
\beq x_n=-\tfrac12z\deriv{y_n}{z}+z+\tfrac12X_n,\label{eq:charx}\eeq
where
\beq X_n^2= z^2\left(\deriv{y_n}{z}\right)^{\!\!2}+4z(y_n-n)(y_n-n+\bb).\label{eq:charu}\eeq
From \eqref{eq:charu} we get
\begin{align} \deriv{X_n}{z}& = \frac{z^2}{X_n}\deriv[2]{y_n}{z}\deriv{y_n}{z} 
+\frac{z}{X_n}\left(\deriv{y_n}{z}\right)^{\!\!2} + \frac{2z(2y_n-2n+\bb)}{X_n} \deriv{y_n}{z} 
+\frac{2(y_n-n)(y_n-n+\bb)}{X_n}.
\label{eqcharut}\end{align}
Substituting \eqref{eq:charx} into \eqref{charsys1a}, then using \eqref{eqcharut}, solving for $X_n$, and substituting into \eqref{eq:charu} yields the second-order, second-degree equation
\beq \label{eq:yn}
\left(2z\deriv[2]{y_n}{z}+\deriv{y_n}{z}+8y_n-8n+4\bb\right)^{\!\!2}=\frac{(4y_n-2n+2\bb+1)^2}{z}\left\{z\left(\deriv{y_n}{z}\right)^{\!\!2}+4(y_n-n)(y_n-n+\bb)\right\}.\eeq
Making the transformation
\[ y_n(z)=\tfrac12 v_n( x) +\tfrac12n-\tfrac12{\bb}-\tfrac14,\qquad  x=2\sqrt{z},\] 
in \eqref{eq:yn} yields
\beq \label{eq:u2n}
\left(\deriv[2]{v_n}{ x}+4v_n-4n-2\right)^{\!\!2}=\frac{4v_n^2}{ x^2}\left\{\left(\deriv{v_n}{ x}\right)^{\!\!2}+4v_n^2-4(2n+1)v_n+(2n+1)^2-4\bb^2\right\}.\eeq

Equation (A.5) in \cite{refCosgrove06b} is 
\beq \label{eq:A5}
\left(\deriv[2]{v}{ x}-av-b\right)^{\!\!2}=\frac{4v^2}{ x^2}\left\{\left(\deriv{v}{ x}\right)^{\!\!2}-av^2-2bv-c\right\},\eeq
with 
$a$, $b$ and $c$ parameters, an equation derived by Chazy \cite{refChazy09}, 
and is the primed version of equation SD-III in \cite{refCS}.
Hence equation \eqref{eq:u2n} is the special case of equation \eqref{eq:A5} with
\[ 
a=-4, \qquad b=4n+2,\qquad c=4 \bb^{2}-(2n+1)^{2}.\]
Cosgrove \cite{refCosgrove06b} showed that equation \eqref{eq:A5} is solvable in terms of solutions of \PIII\ \eqref{eq:PIII}. Consequently, the solution of \eqref{eq:u2n} is given by
\[ v_n(x)=\frac{x}{2q}\left(\deriv{q}{x}+q^2+1\right),\]
where $q(x)$ satisfies \PIII\ \eqref{eq:PIII} for the parameters $A = 2 \bb -2 n -2$ and $B = 2 \bb +2 n$.

\comment{\begin{theorem}If we define
\beq
\tau_n(z;\bb)=\det\left[\left(z\deriv{}{z}\right)^{\!j+k} \Big(t^{-\bb/2}\BesselI{\bb}\big(2\sqrt{z}\big)\Big)\right]_{j,k=0}^{n-1},
\label{genC:tau}\eeq
with $\tau_0(z;\bb)=1$ and where $\BesselI{\nu}(z)$ is the modified Bessel function,
then the recurrence relation coefficients $a_n(z)$ and $b_n(z)$ have the solutions
\begin{subequations}\label{deq:anbn1}\begin{align}
\label{deq:anbn1a}
a_n^2(z)&=\frac{\tau_{n+1}(z;\bb)\tau_{n-1}(z;\bb)}{\tau_n^2(z;\bb)} = \left(z\deriv{}{z}\right)^2 \ln\tau_n(z;\bb),\\ 
b_n(z) &= z\deriv{}{z}\ln\frac{\tau_{n+1}(z;\bb)}{\tau_n(z;\bb)}.\label{deq:anbn1b}
\end{align}\end{subequations}
\end{theorem}}

\begin{theorem}The recurrence relation coefficients $a_n(z)$ and $b_n(z)$ are given by
\begin{subequations}\label{deq:anbn1}
\begin{align}
\label{deq:anbn1a}
a_n^2(z)&=x_n(z)=\frac{\mathcal{T}_{n+1}(z;\bb)\mathcal{T}_{n-1}(z;\bb)}{\mathcal{T}_n^2(z;\bb)},\\
b_n(z) &=y_n(z)= z\deriv{}{z}\left\{\ln\frac{\mathcal{T}_{n+1}(z;\bb)}{\mathcal{T}_n(z;\bb)}\right\}-\frac{\bb}2,\label{deq:anbn1b}
\end{align}\end{subequations}
where
\[
\mathcal{T}_n(z;\bb)=\det\left[\left(z\deriv{}{z}\right)^{\!j+k} \BesselI{\bb}\big(2\sqrt{z}\big)\right]_{j,k=0}^{n-1},\]
with $\mathcal{T}_0(z;\bb)=1$, and $\BesselI{\nu}(x)$ is the modified Bessel function.
\end{theorem}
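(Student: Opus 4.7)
The plan is to combine the reduction carried out in the derivation preceding the theorem with Corollary \ref{thm:BesseldegPV} (which supplies the Bessel-function solution of deg-\PV) and the Toda system \eqref{eq:toda}, fixing the remaining constants by the initial data \eqref{SVAsysic}. From the computation culminating in \eqref{eq:PT.DE.PV0}, the inverse of \eqref{tr:xw}, namely $w_n(z)=1-z/x_n(z)$, satisfies deg-\PV\ with $(\a,\b,\ga)=(\tfrac12n^2,-\tfrac12\bb^2,-2)$. This is exactly the hypothesis of Corollary \ref{thm:BesseldegPV} with $\ep=-1$, which identifies $w_n$ as a Bessel-function solution built from $\ph_{\bb}(z;-1)=c_1 \BesselI{\bb}(2\sqrt z)+c_2 \BesselK{\bb}(2\sqrt z)$. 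Inverting \eqref{tr:xw} then yields
\[a_n^2(z)=x_n(z)=\frac{z}{1-w_n(z)}=\frac{\mathcal{T}_{n+1}(z;\bb)\,\mathcal{T}_{n-1}(z;\bb)}{\mathcal{T}_n^2(z;\bb)}\]
for some choice of $c_1$ and $c_2$.

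Next, I would recover $b_n$ by integrating the Toda equation \eqref{eq:toda2}, $z\deriv{b_n}{z}=a_{n+1}^2-a_n^2$. The bilinear identity of Lemma \ref{eq:taun}, transferred to the $z$-variable, reads $\left(z\deriv{}{z}\right)^{\!2}\ln\mathcal{T}_n=\mathcal{T}_{n+1}\mathcal{T}_{n-1}/\mathcal{T}_n^2=a_n^2$, so
\[z\deriv{b_n}{z}=\left(z\deriv{}{z}\right)^{\!2}\ln\mathcal{T}_{n+1}-\left(z\deriv{}{z}\right)^{\!2}\ln\mathcal{T}_n=z\deriv{}{z}\left\{z\deriv{}{z}\ln\frac{\mathcal{T}_{n+1}}{\mathcal{T}_n}\right\},\]
and integration gives $b_n=z\deriv{}{z}\ln(\mathcal{T}_{n+1}/\mathcal{T}_n)+C$ for some $z$-independent constant $C$.

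Finally, I would fix $c_1$, $c_2$, and $C$ by matching the initial data \eqref{SVAsysic}. With the conventions $\mathcal{T}_0=1$ and $\mathcal{T}_{-1}=0$, the formula for $a_0^2$ gives zero automatically, while the explicit form of $b_0$ forces $\mathcal{T}_1=\BesselI{\bb}(2\sqrt z)$ (so $c_1=1$ and $c_2=0$, up to an immaterial overall rescaling that cancels in every bilinear ratio) together with $C=-\bb/2$. The main obstacle is justifying the $z$-variable Toda identity used in the second step: Lemma \ref{eq:taun} is stated for the $x$-variable determinants $\tau_n$ of \eqref{def:tau}, whereas we need it for the $z$-variable determinants $\mathcal{T}_n$ of Corollary \ref{thm:BesseldegPV}. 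Since $x=2\sqrt z$ implies $x\deriv{}{x}=2z\deriv{}{z}$, a row-column scaling argument gives $\tau_n=2^{n(n-1)}\mathcal{T}_n$, and the powers of $2$ cancel in every bilinear ratio, so the Toda identity transfers verbatim.
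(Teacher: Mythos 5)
Your derivation of \eqref{deq:anbn1a} is essentially the paper's: both substitute the Bessel-function solution \eqref{wn:soln} of deg-\PV\ (with $\ep=-1$, matching $\ga=-2$ in \eqref{eq:PT.DE.PV0}) into the inverse of \eqref{tr:xw}. For \eqref{deq:anbn1b} you take a genuinely different route. The paper runs an induction on $n$ using the \emph{first} Toda equation \eqref{eq:toda1}, $z\,\d x_n/\d z=x_n(y_n-y_{n-1})$, which gives $y_n=z\,\d(\ln x_n)/\d z+y_{n-1}$ and telescopes from $y_0$; you instead integrate the \emph{second} Toda equation \eqref{eq:toda2} after invoking the bilinear identity $\bigl(z\,\d/\d z\bigr)^2\ln\mathcal{T}_n=\mathcal{T}_{n+1}\mathcal{T}_{n-1}/\mathcal{T}_n^2$. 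Your route makes the role of the $\tau$-function Toda equation explicit, but as written it has a gap: integrating $z\,\d b_n/\d z=a_{n+1}^2-a_n^2$ for each fixed $n$ produces an integration constant $C_n$ that may depend on $n$, and the initial data \eqref{SVAsysic} only pin down $C_0=-\nu/2$. To conclude that $C_n=-\nu/2$ for every $n$ you still need one more relation --- for instance \eqref{eq:toda1}, which forces $C_n-C_{n-1}=0$, or the second equation of the discrete system \eqref{SVAsys} --- and once you add that you have essentially reconstructed the paper's induction.

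A smaller inaccuracy concerns the change of variable. Under $x=2\sqrt z$ one has $\tau_n=2^{n(n-1)}\mathcal{T}_n$, so the powers of $2$ do \emph{not} cancel in the bilinear ratio: $\tau_{n+1}\tau_{n-1}/\tau_n^2=4\,\mathcal{T}_{n+1}\mathcal{T}_{n-1}/\mathcal{T}_n^2$. The residual factor $4$ is instead matched by $\bigl(x\,\d/\d x\bigr)^2=4\bigl(z\,\d/\d z\bigr)^2$ acting on $\ln\tau_n=\ln\mathcal{T}_n+n(n-1)\ln 2$, so the transferred identity is indeed the one you use, but for this reason rather than the one you give. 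With the constant $C_n$ pinned down as above and this bookkeeping corrected, your argument goes through.
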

\begin{proof}
The expression \eqref{deq:anbn1a} for $a_n^2(z)$ follows immediately by substituting \eqref{wn:soln}
in \eqref{tr:xw}.  To prove the result \eqref{deq:anbn1b} for $b_n(z)$ we use induction and the fact that from equation \eqref{eq:toda2}, $a_n^2(z)=x_n(z)$ and $b_n(z)=
y_n(z)$ are related by
\[z\deriv{x_n}{t}=x_{n}(y_n-y_{n-1}),\] and initially
\[y_0(z)
= z\deriv{}{z}\big\{\ln \mathcal{T}_1(z;\bb)\big)\}-\frac{\bb}{2}.\]
Hence
\[\begin{split} y_1(z)&=z\deriv{}{z}\big\{\ln x_1(z)\big\}+y_0(z)\\
&= z\deriv{}{z}\left\{\ln \frac{\mathcal{T}_2(z;\bb)\mathcal{T}_0(z;\bb)}{\mathcal{T}_1^2(z;\bb)}\right\}+ z\deriv{}{z}\left\{\ln \mathcal{T}_1(z;\bb)\right\}-\frac{\bb}{2} \\
&= z\deriv{}{z}\left\{\ln \frac{\mathcal{T}_2(z;\bb)}{\mathcal{T}_1(z;\bb)}\right\}-\frac{\bb}{2},
\end{split}\]
since $\mathcal{T}_0(z;\bb)=1$,
so \eqref{deq:anbn1b} is true for $n=1$.
Now suppose that \eqref{deq:anbn1b} is true, then
\[\begin{split} y_{n+1}(z)&=z\deriv{}{z}\big\{\ln x_n(z)\big\}+y_n(z)\\
&=z\deriv{}{z}\left\{\ln\frac{\mathcal{T}_{n+2}(z;\bb)\mathcal{T}_{n}(z;\bb)}{\mathcal{T}_{n+1}^2(z;\bb)}\right\}+z\deriv{}{z}\left\{\ln\frac{\mathcal{T}_{n+1}(z;\bb)}{\mathcal{T}_n(z;\bb)}\right\}-\frac{\bb}2\\
&=z\deriv{}{z}\left\{\ln\frac{\mathcal{T}_{n+2}(z;\bb)}{\mathcal{T}_{n+1}(z;\bb)}\right\}-\frac{\bb}2,
\end{split}\]
as required, and so the result follows by induction. We remark that equation \eqref{eq:toda1} is identically satisfied by $a_n^2(z)$ and $b_n(z)$ given by \eqref{deq:anbn1}.
\end{proof}

\comment{\begin{remark}{\rm The associated Bessel function solution of \eqref{eq:u2n} is given by
\beq
v_n( x)=  x\deriv{}{ x}\left(\ln\frac{\tau_{n+1}( x;\bb)}{\tau_n( x;\bb)}\right)-n+\tfrac12.
\eeq
where
\[
\tau_n( x;\bb)=\det\left[\left( x\deriv{}{ x}\right)^{\!j+k} \BesselI{\bb}(x)\right]_{j,k=0}^{n-1},\]
with $\tau_0( x;\bb)=1$, and $\BesselI{\bb}(x)$ is the modified Bessel function.
}\end{remark}

\begin{lemma}{As $z\to\infty$
\begin{align}
a_n^2(z)= \frac{n\,z^{1/2}}{2}&-\frac{n(4\bb^2-1)}{64\,z^{1/2}}-\frac{n^2(4\bb^2-1)}{64\,z}-\frac{n(4\bb^2-1)(16n^2-4\bb^2+9)}{4096\,z^{3/2}} + \O(z^{-2}),\\
b_n(z)= \frac{n\,z^{1/2}}{2}&+\frac{2n-2\bb+1}{4}+\frac{4\bb^2-1}{16\,z^{1/2}}+\frac{(4n+1)(4\bb^2-1)}{64\,z}\nonumber\\
&+\frac{(4\bb^2-1)(48n^2+24n-4\bb^2+25)}{1024\,z^{3/2}} + \O(z^{-2}).
\end{align}
}\end{lemma}}

\def\dz{\delta_z}
In a recent paper, Fern\'andez-Irisarri and Ma\~nas \cite[\S2]{refFIM} discuss the generalised Charlier weight \eqref{gencharlierw}, in particular properties of the coefficients in the recurrence relation. The relationship between the notations in \cite{refFIM} and those here are $x_n(z)=\ga_n(\eta)$ and $y_n(z)=\b_n(\eta)$ with $z=\eta$. Fern\'andez-Irisarri and Ma\~nas \cite{refFIM} relate $x_n(z)$ and $y_n(z)$ to Okamoto's Hamiltonian for \PIIIp\ \cite{refOkamotoPIII}
and derive two \odes\ for $x_n(z)$.

\begin{enumerate}[(i)]
\item
Equation (45) in \cite[Theorem 4]{refFIM} is the third-order equation
\[\dz\left(\frac{x_n}{z}\left\{\dz^2(\ln x_n)+2x_n\right\}+\frac{n^2z}{x_n}\right)=2x_n,\qquad \dz(f)=z\deriv{f}{z},\]
i.e.
\beq\deriv[3]{x_n}{z}=\frac{1}{zx_n^2}\left(z\deriv{x_n}{z}-x_n\right)\left\{2x_n\deriv[2]{x_n}{z}-\left(\deriv{x_n}{z}\right)^2+n^2\right\}-\frac{4x_n}{z^2}\deriv{x_n}{z}+\frac{2x_n(x_n+z)}{z^3},\label{FIM45}
\eeq and the authors state that this equation ``should have the \p\ property". Equation \eqref{FIM45} can be integrated to give equation \eqref{eq:xn}, with $\bb^2$ as the constant of integration. Since equation \eqref{eq:xn} is equivalent to deg-\PV\ \eqref{eq:degPV0} then equation \eqref{FIM45} does have the \p\ property.
\item
Equation (60) in \cite[Theorem 5]{refFIM} is the second-order equation
\[\begin{split} \left(1-\frac{x_n}{z}\right)&\left\{\dz\left(\frac{\dz(x_n)+nz}{x_n}\right)+2x_n\right\}+2\{x_n-z+(n-b)n\}\\
&=-\tfrac12\left(\frac{\dz(x_n)+nz}{x_n}\right)^2+(n+1)\left(\frac{\dz(x_n)+nz}{x_n}\right)+(n-b-1)(3n-b+1),
\end{split}\]
which is equation \eqref{eq:xn} with
\[\bb^2=2(b-n)^2+n^2-2n-1.\]
\end{enumerate}

\section{\label{sec:dis}Discussion}
In this paper the classical solutions of deg-\PV\ \eqref{eq:degPV0} have been classified.
Ohyama and Okumura \cite[Theorem 2.1]{refOO13} give a list of classical solutions of \PI\ to \PV\ and state that
``deg-P5 with $\a=\tfrac12a^2$, $\b=-\tfrac18$, $\ga=-2$ has the algebraic solution $w(z) = 1+2\sqrt{z}/a$"\footnote{As noted in \cite{refAHvdPT19}, there is typo in \cite{refOO13} who say $\b=-8$ rather than $\b=-\tfrac18$.}
and ``deg-P5 with $\b=0$ has the Riccati type solutions". 
The results in this paper show that there are more classical solutions of deg-\PV\ \eqref{eq:degPV}.
The algebraic solution is equivalent to the ``seed solution" obtained by setting $n=0$ in \eqref{eq:pvalgsol1}, i.e.
\[ u_0(x;\mu)=\frac{\mu}{x+\mu},\]
and there is a more general hierarchy of ``Riccati type solutions" which are described in Theorem \ref{thm:BesselPV0}.

All solutions of \PII--\PVI\ that are expressible in terms of special functions satisfy a first-order equation of the form
\beq \left(\deriv{u}{x}\right)^{\!n}=\sum_{j=0}^{n-1}F_j(u,x) \left(\deriv{u}{x}\right)^{\!j}, \label{eq:1order}\eeq
where $F_j(u,x)$ is polynomial in $u$ with coefficients that are rational functions of $x$. It can be shown that the Bessel function solutions of \PIII\ \eqref{eq:PIII} satisfy a first-order equation of the form \eqref{eq:1order} for $n$ odd, whereas the Bessel function solutions of deg-\PV\ \eqref{eq:degPV0} satisfy a first-order equation of the form \eqref{eq:1order} for $n$ even.

The relationship between \PIII\ \eqref{eq:PIII} and deg-\PV\ \eqref{eq:degPV} is similar to that between the second \peq\ (\PII)
\beq\deriv[2]{q}{x} = 2q^3 + x q,\label{eq:PII}\eeq
with $\a$ a parameter, and  \p\ XXXIV equation ($\mbox{\rm P}_{\!34}$)
\beq\deriv[2]{p}{x} =\frac1{2p}\left(\deriv{p}{x}\right)^{\!2} 
+ 2p^2-xp-\frac{(\a+\tfrac12)^2}{2p},\label{eq:p34}\eeq
which is equivalent to equation XXXIV of Chapter 14 in \cite{refInce}, in that both pairs of equations arise from a Hamiltonian. 
The Hamiltonian associated with \PII\ (\ref{eq:PII}) and $\mbox{\rm P}_{\!34}$ \eqref{eq:p34} is
\begin{equation}\label{sec:PT.HM.DE4}
\HII(q,p,z;\a) = \tfrac12 p^2 - (q^2+\tfrac12 z)p - (\a+\tfrac12)q
\end{equation} and so
\begin{equation}\label{sec:PT.HM.DE3}
\deriv{q}{z}=p-q^2-\tfrac12z,\qquad \deriv{p}{z}=2qp+\a+\tfrac12,
\end{equation} see \cite{refJM81,refOkamoto80ab}.
It is known that \PII\ \eqref{eq:PII} and $\mbox{\rm P}_{\!34}$ \eqref{eq:p34} have special function solutions in terms of Airy functions, cf.~\cite{refPAC16}. It can be shown that the Airy function solutions of \PII\ \eqref{eq:PII} satisfy first-order equation of the form \eqref{eq:1order} for $n$ odd, whereas the Airy function solutions of $\mbox{\rm P}_{\!34}$ \eqref{eq:p34} satisfy a first-order equation of the form \eqref{eq:1order} for $n$ even. Further the function $\sigma(z;\a)=\HII(q,p,z;\a)$ given by \eqref{sec:PT.HM.DE4}, with $q$ and $p$ satisfying \eqref{sec:PT.HM.DE3}, satisfies the second-order, second degree equation (\sPII)
\begin{equation}\label{eq:SII}
\left(\deriv[2]{\sigma}{z}\right)^{\!2}+4\left(\deriv{\sigma}{z}\right)^{\!3}+2\deriv{\sigma}{z}\left(z\deriv{\sigma}{z}-\sigma\right)
=\tfrac14(\a+\tfrac12)^2, \end{equation} 
see \cite{refJM81,refOkamoto80ab}. Conversely, if $\sigma(z;\a)$ is a solution of \eqref{eq:SII}, then 
\begin{equation}\label{sec:PT.HM.DE6}
q(z;\a)=\frac{4\sigma''(z;\a)+2\a+1}{8\sigma'(z;\a)},
\qquad p(z;\a)=-2\sigma'(z;\a),
\end{equation} 
with $'\equiv \d/\d z$, are solutions of \eqref{eq:PII} and \eqref{eq:p34}, respectively. Consequently it is simpler to express classical solutions of $\mbox{\rm P}_{\!34}$ 
\eqref{eq:p34} in terms of classical solutions of \sPII\  \eqref{eq:SII}, which involve one determinant, rather than solutions of \PII\ \eqref{eq:PII}, which involve two determinants.

\section*{Acknowledgements}
I thank Clare Dunning 
and Steffen Krusch for helpful comments and illuminating discussions. I also thank the anonymous reviewers whose comments were invaluable in improving the manuscript.

\def\bitem#1{\vspace{-0.2cm}\bibitem{#1}}

\def\refjl#1#2#3#4#5#6#7{\vspace{-0.2cm}
\bibitem{#1}\textrm{\frenchspacing#2}, \textrm{#3},
\textit{\frenchspacing#4}, \textbf{#5}\ (#6)\ #7.}

\def\refjsm#1#2#3#4#5#6#7{\vspace{-0.2cm}
\bibitem{#1}\textrm{\frenchspacing#2}, \textrm{#3},
\textit{\frenchspacing#4} (#6)\ #7.}

\def\refbk#1#2#3#4#5{\vspace{-0.2cm}
\bibitem{#1} \textrm{\frenchspacing#2}, \textit{#3}, #4, #5.}

\def\refpp#1#2#3#4{\vspace{-0.2cm}
\bibitem{#1}\textrm{\frenchspacing#2}, #3, #4.}

\def\refcf#1#2#3#4#5#6{\vspace{-0.2cm}
\bibitem{#1} \textrm{\frenchspacing#2}, \textrm{#3},
in: \textit{#4},\ {\frenchspacing#5} #6.}

\def\DE{Diff. Eqns.}
\def\JPA{J. Phys. A}
\def\NMJ{Nagoya Math. J.}
\def\CUP{Cambridge University Press}

\end{document}

\begin{theorem}Supppose that $u_0=u(x;\a,\b,\ga)$ satisfies \eqref{eq:degPV0} with parameters 
\[ \a=\tfrac12a^2,\qquad \b=-\tfrac12b^2,\qquad \ga=c.\]
Let $u_j=u(x;\a_j,\b_j,\ga_j)$, $j=1.2.3,4$ be solutions of \eqref{eq:degPV0} with parameters
\[\begin{array}{l@{\qquad}l@{\qquad}l}
\a_1=\tfrac12(a+1)^2,& \b_1=-\tfrac12b^2,& \ga_1=c,\\[2.5pt]
\a_2=\tfrac12(a-1)^2,& \b_2=-\tfrac12b^2,& \ga_2=c,\\[2.5pt]
\a_3=\tfrac12a^2,& \b_3=-\tfrac12(b+1)^2,& \ga_3=c,\\[2.5pt]
\a_4=\tfrac12a^2,& \b_4=-\tfrac12(b-1)^2,& \ga_4=c,
\end{array}\] respectively. Then
\begin{eqnarray*} \fl
\mathcal{U}_1:\quad u_1
&=\frac{\ds\left\{x \ux+2(\u-1) (a\u -b)\right\} \left\{x \ux+2(\u-1) (a\u +b)\right\}}{\ds x^2(\ux)^2+4ax\u(\u-1)\ux+4c\u(\u-1) x^{2}+4(\u-1)^{2} (a^2 \u^2 -b^2)},\\ \fl
\mathcal{U}_2:\quad u_2
&=\frac{\ds\left\{x \ux-2(\u-1) (a\u -b)\right\} \left\{x \ux-2(\u-1) (a\u +b)\right\}}{\ds x^2(\ux)^2-4 ax\u(\u-1)\ux+4c\u(\u-1) x^{2}+4(\u-1)^{2} (a^2 \u^2 -b^2)},\\ \fl
\mathcal{U}_3:\quad u_3
&=\frac{\ds x^2(\ux)^2+4 bx (\u-1)\ux+4 c x^2 \u^{2}(\u-1)-4(\u-1)^{2} (a^2 \u^2 -b^2)}{\ds\left\{x \ux-2(\u-1) (a\u -b)\right\} \left\{x \ux+2(\u-1) (a\u +b)\right\}},\\ \fl
\mathcal{U}_4:\quad u_4
&=\frac{\ds x^2(\ux)^2-4 bx (\u-1)\ux+4 c x^2 \u^{2}(\u-1)-4(\u-1)^{2} (a^2 \u^2 -b^2)}{\ds\left\{x \ux-2(\u-1) (a\u -b)\right\} \left\{x \ux+2(\u-1) (a\u +b)\right\}}.

\comment{\newpage
\[ \begin{split}\tau_n(z;\mu,\ep) &= \det\left[ \left(z\deriv{}{z}\right)^{\!j+k}\ph_{\mu}(z;\ep) \right]_{j,k=0}^{n-1} 
=\left| \begin{matrix} \ph_{\mu} & \delta \ph_{\mu} & \ldots &\delta^{n-1} \ph_{\mu}\\
\delta\ph_{\mu} & \delta^2 \ph_{\mu} & \ldots &\delta^n \ph_{\mu}\\
\vdots & \vdots & \ddots & \vdots \\
\delta^{n-1} \ph_{\mu} & \delta^{n} \ph_{\mu} & \ldots & \delta^{2n-2} \ph_{\mu} \end{matrix} \right|\\
&=\left| \begin{matrix} \ph_{\mu} & \ph_{\mu}^{(1)} & \ldots & \ph_{\mu}^{(n-1)}\\
\ph_{\mu}^{(1)} &\ph_{\mu}^{(2)} & \ldots & \ph_{\mu}^{(n)} \\
\vdots & \vdots & \ddots & \vdots \\
\ph_{\mu}^{(n-1)} & \ph_{\mu}^{(n)} & \ldots & \ph_{\mu}^{(2n-2)}
\end{matrix} \right|,\qquad \ph_{\mu}^{(m)} = \left(z\deriv{}{z}\right)^{\!m}\ph_{\mu}(z;\ep) 
\end{split}\]
with
\[ \ph_{\mu}(z;\ep) =\begin{cases} c_1 J_{\mu}(2\sqrt{z}) + c_2 Y_{\mu}(2\sqrt{z}), &\quad\text{if}\quad \ep=1,\\
c_1 I_{\mu}(2\sqrt{z}) + c_2 I_{-\mu}(2\sqrt{z}), &\quad\text{if}\quad \ep=-1.
\end{cases}\]

\[w_n(z;\mu,\ep)=\frac{\sigma_n'(z;\mu,\ep)+\ep}{\sigma_n'(z;\mu,\ep)-\ep}\]

\[ w_n(z;\mu,\ep)=1+
\ep \frac{z\tau_{n}^2(z;\mu,\ep)}{\tau_{n+1}(z;\mu,\ep)\,\tau_{n-1}(z;\mu,\ep)}\]
\[ w_n(z;\mu,\ep)=\ep z\frac{\tau_{n}(z;\mu+1,\ep)\,\tau_{n}(z;\mu-1,\ep)}{\tau_{n+1}(z;\mu,\ep)\,\tau_{n-1}(z;\mu,\ep)}\]}

\comment{\subsection{\bts\ of \PIII}
Hierarchies of rational solutions of the \p\ equations can be
obtained by applying \bk\ transformations to ``seed solutions". The
\bk\ transformations of \PIII, which relate two solutions of \PIII\ with
different values of the parameters, are defined as follows. Suppose $w=
w(z;\a,\b,1,-1)$ is a solution of
\PIII, then $w_j= w_j(z;\a_j,\b_j,1,-1)$,
$j=1,2,\ldots,6$, are also solutions of \PIII\ where
\begin{align*}
\W_1:\qquad 
w_1&= \frac{z w'+z w^2-\b w-w+z} {w(z w'+z w^2+\a
w+w+z)},&\a_1&=\a+2,&\b_1&=\b+2,\\
\W_2:\qquad 
w_2&= -\,\frac{z w'-z w^2-\b w-w+z}{w(z w'-z w^2 -\a
w+w+z)},&\a_2&=\a-2,&\b_2&=\b+2,\\
\W_3:\qquad 
w_3&=-\,\frac{z w'+z w^2+\b w-w-z} {w(z w'+z w^2 +\a
w+w-z)},&\a_3&=\a+2,&\b_3&=\b-2,\\
\W_4:\qquad 
w_4&=\frac{z w'-z w^2+\b w-w-z}{w(z w'-z w^2-\a
w+w-z)},&\a_4&=\a-2,&\b_4&=\b-2.\\
\W_5:\qquad 
w_5&=-w,&\a_5&=-\a,&\b_5&=-\b\\
\W_6:\qquad 
w_6&=\ifrac{1}{w},&\a_6&=-\b,&\b_6&=-\a,
\end{align*}

\[\begin{array}{|c@{\enskip}|@{\enskip}c@{\enskip}|@{\enskip}c|}
\hline
\W_j & \a_j & \b_j \\ \hline
\W_1 & \a+2 & \b+2\\
\W_2 & \a-2 & \b+2\\
\W_3 
& \a+2 & \b-2\\
\W_4 
& \a-2 & \b-2\\ \hline
\end{array}\qquad \qquad
\begin{array}{|c@{\enskip}|@{\enskip}c@{\enskip}|@{\enskip}c|}
\hline
\W_j \W_k& \a_{j,k} & \b_{j,k} \\ \hline
\W_1\W_2 & \a & \b+4\\
\W_1\W_3 & \a+4 & \b\\
\W_1\W_4 & \a & \b\\
\W_2\W_3 & \a+4 & \b\\
\W_2\W_4 & \a-4 & \b\\
\W_4\W_4 & \a & \b-4\\ \hline
\end{array}\qquad\qquad
\begin{array}{c}\W_4=\W_1^{-1}\\ \W_3=\W_2^{-1} 
\end{array}\]}

\comment{Then
\begin{subequations}\beq\wiii{w}_{m,n}(z;\a)=w\left(z;\wiii{A}_{m,n},\wiii{B}_{m,n},\wiii{C}_{m,n},\wiii{D}_{m,n}\right)
=-\frac{\U_{m,n-1}(z;\a)\U_{m-1,n}(z;\a)}{\U_{m-1,n}(z;\a-2)\U_{m,n-1}(z;\a+2)},\eeq
is a rational solution of \PV\ \eqref{eq:PVgen} for the parameters
\beq\left(\wiii{A}_{m,n},\wiii{B}_{m,n},\wiii{C}_{m,n},\wiii{D}_{m,n}\right)
=\left(\tfrac18\mu^2,-\tfrac18(\mu-2m+2n)^2,-m-n,-\tfrac12\right),\eeq\end{subequations}
and
\begin{subequations}\beq\wv{w}_{m,n}(z;\a)=w\left(z;\wv{A}_{m,n},\wv{B}_{m,n},\wv{C}_{m,n},\wv{D}_{m,n}\right)
=-\frac{\U_{m,n-1}(z;\a+1)\U_{m,n+1}(z;\a-1)}{\U_{m-1,n}(z;\a-1)\U_{m+1,n}(z;\a+1)},\eeq
is a rational solution of \PV\ \eqref{eq:PVgen} for the parameters
\beq\left(\wv{A}_{m,n},\wv{B}_{m,n},\wv{C}_{m,n},\wv{D}_{m,n}\right)
=\left(\tfrac12(m+\tfrac12),-\tfrac12(n+\tfrac12),m-n-\mu,-\tfrac12\right).\eeq\end{subequations}}